\newcommand{\dd}{\mathinner{.\,.}}
\algrenewcommand\algorithmicrequire{\textbf{Input:}}
\algrenewcommand\algorithmicensure{\textbf{Output:}}
\newcommand{\floor}[1]{\lfloor #1 \rfloor}
\newcommand{\deltalog}{\delta\log\frac{n\log\sigma}{\delta\log n}}
\newcommand{\gexp}{\mathtt{exp}}
\newcommand{\bwt}{\mathtt{bwt}}
\newcommand{\pred}{\texttt{pred}}
\newcommand{\suc}{\texttt{succ}}
\newcommand{\syma}{\texttt{a}}
\newcommand{\symb}{\texttt{b}}
\newcommand{\symc}{\texttt{c}}
\newcommand{\dol}{\texttt{\$}}
\newcommand{\one}{\texttt{1}}
\newcommand{\zero}{\texttt{0}}
\newcommand{\bo}{\mathcal{O}}
\newcommand{\rmq}{\texttt{rmq}}
\renewcommand{\time}[1]{t_{#1}}
\newcommand{\nsv}{\texttt{nsv}}
\renewcommand{\exp}[1]{\mathtt{exp}(#1)}
\newcommand{\RMQ}{\textsc{rmq}}
\newcommand{\PSV}{\textsc{psv}}
\newcommand{\NSV}{\textsc{nsv}}
\newcommand{\no}[1]{}
\theoremstyle{thmstyleone}%
\newtheorem{theorem}{Theorem}
\newtheorem{proposition}[theorem]{Proposition}%
\newtheorem{lemma}[theorem]{Lemma}%
\newtheorem{corollary}[theorem]{Corollary}%
\theoremstyle{thmstyletwo}%
\newtheorem{example}{Example}%
\theoremstyle{thmstylethree}%
\newtheorem{definition}{Definition}%
\begin{document}

\title[Generalized Straight-Line Programs]{Generalized Straight-Line Programs}


\author*[1,2]{\fnm{Gonzalo} \sur{Navarro}\email{gnavarro@uchile.cl}}

\author*[1,2]{\fnm{Francisco} \sur{Olivares}\email{folivares@uchile.cl}}

\author*[1,2]{\fnm{Cristian} \sur{Urbina}}\email{crurbina@dcc.uchile.cl}

\affil[1]{\orgdiv{Department of Computer Science}, \orgname{University of Chile}, \country{Chile}}

\affil[2]{\orgname{CeBiB --- Centre for Biotechnology and Bioengineering}, \country{Chile}}


\abstract{It was recently proved that any Straight-Line Program (SLP) generating a given string can be transformed in linear time into an equivalent balanced SLP of the same asymptotic size. We generalize this proof to a general class of grammars we call Generalized SLPs (GSLPs), which allow rules of the form $A \rightarrow x$ where $x$ is any Turing-complete representation (of size $|x|$) of a sequence of symbols (potentially much longer than $|x|$). We then specialize GSLPs to so-called Iterated SLPs (ISLPs), which allow rules of the form $A \rightarrow \Pi_{i=k_1}^{k_2} B_1^{i^{c_1}}\cdots B_t^{i^{c_t}}$ of size $2t+2$. We prove that ISLPs break, for some text families, the measure $\delta$ based on substring complexity, a lower bound for most measures and compressors exploiting repetitiveness. Further, ISLPs can extract any substring of length $\lambda$, from the represented text $T[1\dd n]$, in time $\bo(\lambda + \log^2 n\log\log n)$. This is the first compressed representation for repetitive texts breaking $\delta$ while, at the same time, supporting direct access to arbitrary text symbols in polylogarithmic time. We also show how to compute some substring queries, like range minima and next/previous smaller value, in time $\bo(\log^2 n \log\log n)$. Finally, we further specialize the grammars to Run-Length SLPs (RLSLPs), which restrict the rules allowed by ISLPs to the form $A \rightarrow B^t$. Apart from inheriting all the previous results with the term $\log^2 n \log\log n$ reduced to the near-optimal $\log n$, we show that RLSLPs can exploit balance to efficiently compute a wide class of substring queries we call ``composable''---i.e., $f(X \cdot Y)$ can be obtained from $f(X)$ and $f(Y)$. As an example, we show how to compute Karp-Rabin fingerprints of texts substrings in $\bo(\log n)$ time. While the results on RLSLPs were already known, ours are much simpler and require little precomputation time and extra data associated with the grammar.
}

\keywords{Grammar compression, Substring complexity, Repetitiveness measures, Substring queries}



\maketitle

\section{Introduction}

Motivated by the data deluge, and by the observed phenomenon that many of the fastest-growing text collections are highly repetitive, recent years have witnessed an increasing interest in (1) defining measures of compressibility that are useful for highly repetitive texts, (2) develop compressed text representations whose size can be bounded in terms of those measures, and (3) provide efficient (i.e., polylogarithmic time) access methods to those compressed texts, so that algorithms can be run on them without ever decompressing the texts \cite{Navacmcs20.3,Navacmcs20.2}. We call {\em lower-bounding measures} those satisfying (1), {\em reachable measures} those (asymptotically) reached by the size of a compressed representation (2), and {\em accessible measures} those reached by the size of representations satisfying (3).

For example, the size $\gamma$ of the smallest ``string attractor'' of a text $T$ is a lower-bounding measure, unknown to be reachable \cite{KP18}, and smaller than the size reached by known compressors. The size $b$ of the smallest ``bidirectional macro scheme'' of $T$ \cite{SS82}, and the size $z$ of the Lempel-Ziv parse of $T$ \cite{LZ76}, are reachable measures. The size $g$ of the smallest context-free grammar generating (only) $T$ \cite{CLLPPSS05} is an accessible measure \cite{BLRSRW15}. It holds $\gamma \le b \le z \le g$ for every text.

One of the most attractive lower-bounding measures devised so far is $\delta$ \cite{RRRS13,CEKNP20}. Let $T[1\dd n]$ be a text over alphabet $[1\dd\sigma]$, and $T_k$ be the number of distinct substrings of length $k$ in $T$, which define its so-called substring complexity. Then the measure is $\delta(T) = \max_k T_k/k$. This measure has several attractive properties: it can be computed in linear time and lower-bounds all previous measures of compressibility, including $\gamma$, for every text. While $\delta$ is known to be unreachable, the measure $\delta'=\deltalog$ has all the desired properties: $\Omega(\delta')$ is the space needed to represent some text family for each $n$, $\sigma$, and $\delta$; within $\bo(\delta')$ space it is possible to represent every text $T$ and access any length-$\lambda$ substring of $T$ in time $\bo(\lambda+\log n)$ \cite{KNP22}, together with more powerful operations \cite{KNP22,KNO23,KK23}. 

As for $g$, a {\em straight-line program (SLP)} is a context-free grammar that generates (only) $T$, and has size-2 rules of the form $A \rightarrow BC$, where $B$ and $C$ are nonterminals, and size-1 rules $A \rightarrow a$, where $a$ is a terminal symbol. The SLP size is the sum of all its rule sizes. A {\em run-length SLP (RLSLP)} may contain, in addition, size-2 rules of the form $A \rightarrow B^t$, representing $t$ repetitions of nonterminal $B$. A RLSLP of size $g_{rl}$ can be represented in $\bo(g_{rl})$ space, and within that space we can offer fast string access and other operations
\cite[App.~A]{CEKNP20}. It holds $\delta \le g_{rl} = \bo(\delta')$, where $g_{rl}$ is the smallest RLSLP that generates $T$ \cite{Navacmcs20.3,KNP22} (the size $g$ of the smallest grammar or SLP, instead, is not always $\bo(\delta')$). 

While $\delta$ lower-bounds all previous measures on every text, $\delta'$ is not the smallest accessible measure. 
In particular, $g_{rl}$ is always $\bo(\delta')$, and it can be smaller by up to a logarithmic factor. Indeed, $g_{rl}$ is a minimal accessible measure as far as we know. It is asymptotically between $z$ and $g$ \cite{Navacmcs20.3}. An incomparable accessible measure is $z_{end} \ge z$, the size of the LZ-End parse of the text \cite{KN13,KS22}.

The belief that $\delta$ is a lower bound to every reachable measure was disproved by the recently proposed L-systems \cite{NU21,NU23}. L-systems are like SLPs where all the symbols are nonterminals and the derivation ends at a specified depth in the derivation tree. The size $\ell$ of the smallest L-system generating $T[1\dd n]$ is a reachable measure of repetitiveness and was shown to be as small as $\bo(\delta/\sqrt{n})$ on some text families, thereby sharply breaking $\delta$ as a lower bound. Measure $\ell$, however, is unknown to be accessible, and thus one may wonder whether there exist accessible text representations that are smaller than $\delta$.

In this paper we present several contributions to this state of affairs:
\begin{enumerate}
\item We extend the result of Ganardi et al.~\cite{GJL2021}, which shows that any SLP of size $g$ generating a text of length $n$ can be balanced to produce another SLP of size $\bo(g)$ whose derivation tree is of height $\bo(\log n)$. Our extension is called {\em Generalized SLPs (GSLPs)}, which allow rules of the form $A \rightarrow x$ (of size $|x|$), where $x$ is a {\em program} (in any Turing-complete formalism) that outputs the right-hand side of the rule. We show that, if every nonterminal appearing in $x$'s output occurs at least twice, then the GSLP can be balanced in the same way as SLPs.
\item We explore a particular case of GSLP we call {\em Iterated SLPs (ISLPs)}. ISLPs extend SLPs (and RLSLPs) by allowing a more complex version of the rule $A \rightarrow B^t$, namely $A \rightarrow \Pi_{i=k_1}^{k_2} B_1^{i^{c_1}}\cdots B_t^{i^{c_t}}$, of size $2+2t$. We show that some text families are generated by an ISLP of size $\bo(\delta/\sqrt{n})$, thereby sharply breaking the $\Omega(\delta)$ barrier.
\item Using the fact that ISLPs are GSLPs and thus can be balanced, we show how to extract a substring of length $\lambda$ from the ISLP in time $\bo(\lambda+\log^2 n \log\log n)$, as well as computing substring queries like range minimum and next/previous smaller value, in time $\bo(\log^2 n \log\log n)$. ISLPs are thus the first accessible representation that can reach size $o(\delta)$.
\item Finally, we apply the balancing result to RLSLPs, which allow rules of the form $A \rightarrow B^t$. While the results on ISLPs are directly inherited (because RLSLPs are ISLPs) with the polylogs becoming just the nearly-optimal $\bo(\log n)$ \cite{VY13}, we give a general technique to compute a wide family of ``composable'' queries $f$ on substrings (i.e., $f(X \cdot Y)$ can be computed from $f(X)$ and $f(Y)$). As an application, we show how to compute Karp-Rabin fingerprints on text substrings in time $\bo(\log n)$, which we do not know how to do efficiently on ISLPs. This considerably simplifies and extends previous results \cite[App.~A]{CEKNP20}, as balanced grammars enable simpler algorithms that do not require large and complex additional structures. 
\end{enumerate}

This work is an extended version of articles published in SPIRE 2022 \cite{NOU2022} and LATIN 2024 \cite{NU2024}, which are now integrated into a coherent framework where specialized results are derived from more general ones, new operations are supported, and proofs are complete.

\section{Preliminaries}

We explain some concepts and notation used in the rest of the paper.

\subsection{Strings}

Let $\Sigma = [1\dd \sigma]$ be an ordered set of symbols called the \emph{alphabet}. A \emph{string} $T[1\dd n]$ of \emph{length} $n$ is a finite sequence $T[1]\,T[2] \dots T[n]$ of $n$ symbols in $\Sigma$. We denote by $\varepsilon$ the unique string of length $0$. We denote by $\Sigma^*$ the set of all finite strings with symbols in $\Sigma$. The $i$-th symbol of $T$ is denoted by $T[i]$; the notation $T[i\dd j]$ stands for the sequence $T[i]\dots T[j]$ for $1\le i \le j \le n$, and $\varepsilon$ otherwise. The \emph{concatenation} of $X[1\dd n]$ and $Y[1\dd m]$ is defined as $X \cdot Y = X[1] \dots X[n]\, Y[1] \dots Y[m]$ (we omit the dot when there is no ambiguity). If $T = XYZ$, then $X$ (resp. $Y$, resp. $Z$) is 
 a \emph{prefix} (resp. \emph{substring}, resp. \emph{suffix}) of $T$. A \emph{power} $T^k$ stands for $k$ consecutive concatenations of the string $T$.
 We denote by $|T|_a$ the number of occurrences of the symbol $a$ in $T$. A \emph{string morphism} is a function $\varphi: \Sigma^* \rightarrow \Sigma^*$ such that $\varphi(xy) = \varphi(x) \cdot \varphi(y)$ for any strings $x$ and $y$.

\subsection{Straight-Line Programs}

A \emph{straight-line program} (SLP) is a context-free grammar \cite{Sipser2012} that contains only terminal rules of the form $A \rightarrow a$ with $a \in \Sigma$, and binary rules of the form $A \rightarrow BC$ for variables $B$ and $C$ whose derivations cannot reach again $A$. These restrictions ensure that each variable of the SLP generates a unique string, defined as $\gexp(A) = a$ for a rule $A \rightarrow a$, and as $\gexp(A) = \gexp(B)\cdot\gexp(C)$ for a rule $A \rightarrow BC$. A \emph{run-length straight-line program} (RLSLP) is an SLP that also admits run-length rules of the form $A \rightarrow B^t$ for some $t \geq 3$, with their expansion defined as $\gexp(A)=\gexp(B)^t$. The {\em size} of an SLP $G$, denoted $|G|$, is the sum of the lengths of the right-hand sides of its rules; the size of an RLSLP is defined similary, assuming that rules $A \rightarrow B^t$ are of size 2 (i.e., two integers to represent $B$ and $t$).

The \emph{derivation or parse tree} of an SLP is an ordinal tree where the nodes are the variables, the root is the initial variable, and the leaves are the terminal variables. The children of a node are the variables appearing in the right-hand side of its rule (in left-to-right order). The \emph{height} of an SLP is the length of the longest path from the root to a leaf node in the derivation tree. The height of an RLSLP is obtained by \emph{unfolding} its run-length rules, that is, writing a rule $B^t$ as $\underbrace{B \dots B}_{t \textrm{ times}}$, to obtain an equivalent SLP (actually, a slight extension where the right-hand sides can feature more than two variables). 

The {\em grammar tree} is obtained by pruning the parse tree so that only the leftmost occurrence of a nonterminal is retained as an internal node and all the others become leaves. Rules $A \rightarrow B^t$ are represented as the node $A$ having a left child $B$ (which can be internal or a leaf) and a special right child denoting $B^{t-1}$ (which is a leaf). It is easy to see that the grammar tree has exactly $|G|+1$ nodes.

SLPs and RLSLPs yield measures of repetitiveness $g$ and $g_{rl}$, defined as the size of the smallest SLP and RLSLP generating the text, respectively. Clearly, it holds that $g_{rl} \le g$. It also has been proven that both $g$ and $g_{rl}$ are NP-hard to compute \cite{CLLPPSS05,KIKB24}.

\subsection{Other Repetitiveness Measures}

For self-containedness, we describe the most important repetitiveness measures and relate them with the accessible measures $g$ and $g_{rl}$; for more details see a survey \cite{Navacmcs20.3}.

\paragraph{Burrows-Wheeler Transform.}

The \emph{Burrows-Wheeler Transform} (BWT) \cite{BW94} is a reversible permutation of $T$, which we denote by $\bwt(T)$. It is obtained by sorting lexicographically all the rotations of the string $T$ and concatenating their last symbols, which can be done in $\bo(n)$ time. The measure $r$ is defined as the size of the \emph{run-length encoding} of $\bwt(T)$. Usually, $T$ is assumed to be appended with a sentinel symbol $\dol$ strictly smaller than any other symbol in $T$, and then we call $r_\dol$ the size of the run-length encoding of $\bwt(T\dol)$. This measure is then reachable, and fully-functional indexes of size $\bo(r_\dol)$ exist \cite{GNP18}, but interestingly, it is unknown to be accessible. While this measure is generally larger than others, it can be upper-bounded by $r_\dol = \bo(\delta\log\delta\log\frac{n}{\delta})$ \cite{KK20}.

\paragraph{Lempel-Ziv Parsing.}

The \emph{Lempel-Ziv parsing} (LZ) \cite{LZ76} of a text $T[1\dd n]$ is a \emph{factorization} into non-empty \emph{phrases}  $T = X_1X_2\dots X_z$ where each $X_i$ is either the first occurrence of a symbol or the longest prefix of $X_i \dots X_z$ with a copy in $T$ starting at a position in  $[1\dd |X_1\dots X_{i-1}|]$. LZ is called a \emph{left-to-right} parsing because each phrase has its \emph{source} starting to the left, and it is optimal among all parsings satisfying this condition. It can be constructed greedily from left to right in $\bo(n)$ time. The measure $z$ is defined as the number of phrases in the LZ parsing of the text, and it has been proved that $z \le g_{rl}$ \cite{NOPtit20}. While $z$ is obviously reachable, it is unknown to be accessible. A close variant $z_{end} \ge z$ \cite{KN13} that forces phrase sources to be end-aligned with a preceding phrase, has been shown to be accessible \cite{KS22}. 

\paragraph{Bidirectional Macro Schemes.}

A \emph{bidirectional macro scheme} (BMS) \cite{SS82} is a factorization of a text $T[1\dd n]$ where each phrase can have its source starting either to the left or to the right. The only requeriment is that by following the pointers from phrases to sources, we should eventually be able to fully decode the text. The measure $b$ is defined as the size of the smallest BMS representing the text. Clearly, $b$ is reachable, but it is unknown to be accessible. It holds that $b \le z$, and it was proved that $b \le r_\$$ \cite{NOPtit20}. Computing $b$ is NP-hard \cite{Gallant1982}. 

\paragraph{String Attractors.}

A \emph{string attractor} for a text $T[1\dd n]$ is a set of positions $\Gamma \subseteq [1\dd n]$ such that any substring of $T[i\dd j]$ has an occurrence $T[i'\dd j']$ crossing at least one of the positions in $\Gamma$ (i.e., there exist $k \in \Gamma$ such that $i'\le k\le j'$). The measure $\gamma$ is defined as the size of the smallest string attractor for the string $T$, and it is NP-hard to compute \cite{KP18}. It holds that $\gamma$ lower bounds the size $b$ of the smallest bidirectional macro scheme and can sometimes be asymptotically smaller \cite{BFIKMN21}. On the other hand, it is unknown if $\gamma$ is reachable.

\paragraph{Substring Complexity.}

Let $T[1\dd n]$ be a text and $T_k$ be the number of distinct substrings of length $k$ in $T$, which define its so-called substring complexity. Then the measure is $\delta = \max_k T_k/k$ \cite{RRRS13,CEKNP20}. This measure can be computed in $\bo(n)$  time and lower-bounds $\gamma$, and thus all previous measures of compressibility, for every text. On the other hand, it is known to be unreachable \cite{KNP22}. The related measure $\delta'=\deltalog$ is reachable and accessible, and still lower-bounds $b$ and all other reachable measures on some text family for every $n$, $\sigma$, and $\delta$ \cite{KNP22}. Besides, $g_{rl}$ (and thus $z$, $b$, and $\gamma$, but not $g$) are upper-bounded by $\bo(\deltalog)$; $g$ can be upper-bounded by $\bo(\gamma\log^2\frac{n}{\gamma})$ \cite{KNP22,KP18}.

\paragraph{L-systems.}

An \emph{L-system} (for compression) is a tuple $L = (V, \varphi, \tau, S, d, n)$ extending a traditional Lindenmayer system \cite{Lindenmayer1968a,Lindenmayer1968b}, where $V$ is the set of variables (which are also considered as terminal symbols), $\varphi: V \rightarrow V^+$ is the set of rules (and also a morphism of strings), $\tau: V \rightarrow V$ is a coding, $S \in V$ the initial variable, and $d$ and $n$ are integers. The string generated by the system is $\tau(\varphi^d(S))[1\dd n]$. The measure $\ell$ is defined as the size of the smallest L-system generating the string. It has been proven that $\ell$ is incomparable to $\delta$ ($\ell$ can be smaller by a $\sqrt{n}$ factor) and almost any other repetitiveness measure considered in the literature \cite{NU21,NU23}.

\section{Generalized SLPs and How to Balance Them} \label{sec:bal}

We introduce a new class of SLP which we show can be balanced so that its derivation tree is of height $\bo(\log n)$.

\bigskip\begin{definition} A \emph{generalized straight-line program} (GSLP) is an SLP that allows special rules of the form $A \rightarrow x$, where $x$ is a {\em program} (in any Turing-complete language) of length $|x|$ whose output $\mathtt{OUT}(x)$ is a nonempty sequence of variables, none of which can reach $A$. The rule $A \rightarrow x$ contributes $|x|$ to the size of the GSLP; the standard SLP rules contribute as usual. If it holds for all special rules that no variable appears exactly once inside $\mathtt{OUT}(x)$, then the GSLP is said to be \emph{balanceable}.
\end{definition}\bigskip

We can choose any desired language to describe the programs $x$. Though in principle $|x|$ can be taken as the Kolmogorov complexity of $\mathtt{OUT}(x)$, we will focus on very simple programs and on the asymptotic value of $x$.

We will prove that any balanceable GSLP can be balanced without increasing its asymptotic size. Our proof generalizes that of Ganardi et al.~\cite[Thm.~1.2]{GJL2021} for SLPs in a similar way to how it was extended to balance RLSLPs \cite{NOU2022}. Just as Ganardi et al., in this section we will allow SLPs to have rules of the form $A \rightarrow B_1 \cdots B_t$, of size $t$, where each $B_j$ is a terminal or a nonterminal; this can be converted into a strict SLP of the same asymptotic size.

We introduce now some definitions and state some results, from the work of Ganardi et al. \cite{GJL2021}, that we need in order to prove our balancing result for GSLPs.

A \emph{directed acyclic graph} (DAG) is a directed multigraph $D = (V,E)$ without cycles (nor loops). We denote by $|D|$ the number of edges in this DAG. For our purposes, we assume that any DAG has a distinguished node $r$ called the \emph{root}, satisfying that any other node can be reached from $r$ and $r$ has no incoming edges. We also assume that if a node has $k$ outgoing edges, they are numbered from $1$ to $k$, so edges are of the form $(u,i,v)$.  The \emph{sink nodes} of a DAG are the nodes without outgoing edges. The set of sink nodes of $D$ is denoted by $W$. We denote the number of paths from $u$ to $v$ as $\pi(u, v)$, and $\pi(u, V) = \sum_{v \in V}\pi(u, v)$ for a set $V$ of nodes. The number of paths from the root to the sink nodes is $n(D) = \pi(r, W)$.

One can interpret an SLP $G$ generating a string $T$ as a DAG $D$: There is a node for each variable in the SLP, the root node is the initial variable, variables of the form $A \rightarrow a$ are the sink nodes, and a variable with rule $A \rightarrow B_1B_2\dots B_t$ has outgoing edges $(A, i, B_i)$ for $i \in [1\dd t]$. Note that if $D$ is a DAG representing $G$, then $n(D) = |\gexp(G)| = |T|$.

\bigskip\begin{definition}{(Ganardi et al. \cite[page 5]{GJL2021})}
Let $D$ be a DAG, and define the pairs $\lambda(v) = (\floor{\log_2\pi(r,v)}, \floor{\log_2\pi(v, W))})$. The \emph{symmetric centroid decomposition (SC-decomposition)} of a DAG $D$ produces a set of edges between nodes with the same $\lambda$ pairs defined as $E_{scd}(D) = \{(u, i, v) \,|\, \lambda(u) = \lambda(v)\}$,
partitioning $D$ into disjoint paths called \emph{SC-paths} (some of them possibly of length 0).
\end{definition}\bigskip

The set $E_{scd}$ can be computed in $\bo(|D|)$ time. If $D$ is the DAG of an SLP $G$, then $|D|$ is $O{(|G|)}$. The following lemma justifies the name ``SC-paths''.

\bigskip\begin{lemma}{(Ganardi et al. \cite[Lemma~2.1]{GJL2021})}\label{lemma:scd}
Let $D = (V , E)$ be a DAG. Then every node has at most one outgoing and at most one incoming edge from $E_{scd}(D)$. Furthermore, every path from the root r to a sink node contains at most $2\log_2 n(D)$ edges that do not belong to $E_{scd}(D)$.
\end{lemma}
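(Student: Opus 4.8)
The plan is to prove the two claims separately, relying on the definition $\lambda(v) = (\floor{\log_2\pi(r,v)}, \floor{\log_2\pi(v,W)})$ and on how path counts behave along edges of the DAG.

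First I would establish the "at most one outgoing / at most one incoming" claim. Fix a node $u$ with outgoing edges $(u,1,v_1),\dots,(u,k,v_k)$. The key observation is that every path from $r$ to a sink that passes through $u$ continues through exactly one of the $v_j$, so $\pi(u,W) = \sum_{j=1}^{k}\pi(v_j,W)$; symmetrically, for a node $v$ with incoming edges from $u_1,\dots,u_m$ we have $\pi(r,v)=\sum_{i=1}^{m}\pi(u_i,v')$-type additivity, i.e.\ $\pi(r,v)=\sum_i \pi(r,u_i)$ counted with edge multiplicity. The point I would stress is strict monotonicity: along any edge $(u,i,v)$ we have $\pi(r,v)\ge \pi(r,u)$ and $\pi(u,W)\le$ (sum over siblings), so $\pi(v,W)\le \pi(u,W)$, and in fact the two quantities move in opposite directions as we descend. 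For an SC-edge we need $\lambda(u)=\lambda(v)$, i.e.\ both floor-of-log coordinates agree. I would argue that if two outgoing edges $(u,i,v_i)$ and $(u,j,v_j)$ both lay in $E_{scd}$, then $\floor{\log_2\pi(v_i,W)}=\floor{\log_2\pi(v_j,W)}=\floor{\log_2\pi(u,W)}$, but since $\pi(u,W)=\sum_\ell \pi(v_\ell,W)\ge \pi(v_i,W)+\pi(v_j,W)$, having two children each within a factor $2$ of $\pi(u,W)$ forces $\pi(u,W)\ge \pi(v_i,W)+\pi(v_j,W)> \pi(u,W)$ once the floors coincide, a contradiction. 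The incoming case is symmetric using the first coordinate. This dichotomy -- that the second coordinate can stay in the same dyadic block for at most one child and the first coordinate for at most one parent -- is the crux, so the main obstacle is handling the additive inequalities cleanly at the boundary of dyadic intervals (the off-by-one floor issues), which I would resolve by noting that $\floor{\log_2 a}=\floor{\log_2 b}$ with $a\ge b$ implies $a<2b$.

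Second, for the bound on non-SC edges along a root-to-sink path, I would track the quantity $\Phi(v)=\floor{\log_2\pi(r,v)}+\floor{\log_2\pi(v,W)}$. Along any edge $(u,i,v)$ the first coordinate is nondecreasing and the second is nonincreasing, so each individual coordinate is monotone along the path. When an edge is \emph{not} in $E_{scd}$, by definition $\lambda(u)\neq\lambda(v)$, meaning at least one of the two coordinates strictly changes; by monotonicity such a change is strictly in the forced direction (first coordinate strictly increases or second strictly decreases). Since at the root $\pi(r,r)=1$ gives first coordinate $0$ and at a sink $\pi(w,W)=1$ gives second coordinate $0$, while the first coordinate is bounded above by $\floor{\log_2 n(D)}$ and the second starts at most at $\floor{\log_2 n(D)}$, each coordinate can strictly change in its forced direction at most $\log_2 n(D)$ times. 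Hence the total number of non-SC edges on the path is at most $2\log_2 n(D)$, which is exactly the claimed bound. I would present this as a potential/charging argument: charge each non-SC edge to the coordinate that strictly moves, and bound the number of strict moves of each coordinate by the range $[0,\log_2 n(D)]$ it traverses. The routine verification that $\pi(r,v)\le n(D)$ and $\pi(v,W)\le n(D)$ for all $v$ (so the logs are genuinely bounded by $\log_2 n(D)$) I would state but not belabor.
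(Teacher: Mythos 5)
Your proof is correct: the additivity of path counts ($\pi(u,W)=\sum_j \pi(v_j,W)$ over outgoing edges, and symmetrically $\pi(r,v)=\sum_i \pi(r,u_i)$ over incoming edges) combined with the dyadic-block observation that two values with the same $\lfloor\log_2\cdot\rfloor$ as their sum would force that sum to exceed itself yields the degree claim, and the monotonicity of the two coordinates of $\lambda$ along edges plus the charging of each non-SC edge to a strict coordinate move, bounded by the range $[0,\log_2 n(D)]$ of each coordinate, yields the $2\log_2 n(D)$ bound. Note that the paper itself gives no proof of this statement (it is quoted from Ganardi et al.), and your argument is essentially the original one from that source, so there is no genuinely different route to compare.
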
\bigskip

Note that the sum of the lengths of all SC-paths is at most the number of nodes of the DAG, or equivalently, the number of variables of the SLP.

The following definition and technical lemma are needed to construct the building blocks of our balanced GSLPs.

\bigskip\begin{definition}{(Ganardi et al. \cite[page~7]{GJL2021})}
A \emph{weighted string} is a string $T \in \Sigma^*$ equipped with a \emph{weight function} $||\cdot||: \Sigma \rightarrow \mathbb{N} \backslash \{0\}$, which is extended homomorphically. If $A$ is a variable in an SLP $G$, then we write $||A||$ for the weight of the string $\gexp(A)$ derived from $A$.
\end{definition}\
\bigskip\begin{lemma}{(Ganardi et al. \cite[Proposition~2.2]{GJL2021})}\label{lemma:weighted}
For every non-empty weighted string $T$ of length $n$ one can construct in linear time an SLP $G$ generating $T$ with the following properties:
\begin{itemize}
    \item $G$ contains at most $3n$ variables
    \item All right-hand sides of $G$ have length at most 4
    \item $G$ contains suffix variables $S_1 , . . . , S_n$ producing all non-trivial suffixes of $T$
    \item every path from $S_i$ to some terminal symbol $a$ in the derivation tree of $G$ has length at most
$3 + 2(\log_2 ||S_i|| - \log_2 ||a||)$
\end{itemize}
\end{lemma}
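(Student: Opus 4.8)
The plan is to decouple two tasks: first build a single weight-balanced binary tree that already generates $T$ and yields the claimed depth bound for the \emph{whole} string, and then graft the suffix variables $S_1,\dots,S_n$ on top of it while keeping both their number and the total count of auxiliary variables linear. I would start from the prefix sums of the weights $||T[1]||,\dots,||T[n]||$ (computable in linear time), which let me locate, for any subrange, a split point that divides its total weight as evenly as possible.

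For the base tree I recursively split the current subsequence at such a near-weighted-median point, making the two halves the children of a length-$2$ rule; terminal leaves are length-$1$ rules. This produces an SLP of at most $2n-1$ variables generating $T$, with all right-hand sides of length $\le 2$. The balancing guarantees that subtree weights shrink by a constant factor at every downward step, so for any internal node $X$ and any terminal $a$ in its subtree the derivation path has length at most $2\log_2(||X||/||a||)+\bo(1)$; this is the source of the factor $2$ in the statement.

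Next I assemble the suffix variables. For each position $i$, the suffix $T[i\dd n]$ is $T[i]$ followed by the right-hanging subtrees hanging off the left-turns of the root-to-leaf$(i)$ path, read from the deepest (lightest) to the highest (heaviest). By the weight shrinkage these pieces have geometrically increasing weights, so I combine them by a weight-balanced tree with the heaviest piece near the top and prefix $T[i]$, obtaining $S_i$. A path from $S_i$ to a terminal $a$ then splits as ``descend the combination tree to the piece $Q$ containing $a$'' plus ``descend inside $Q$''; each part is bounded by $2\log_2$ of the relevant weight ratio, and the two logarithms telescope to $2\log_2(||S_i||/||a||)$. Tuning the additive constants (and, where convenient, inlining $T[i]$ and a top piece into rules of length $\le 4$) yields exactly $3+2(\log_2||S_i||-\log_2||a||)$.

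The crux, and the step I expect to be the main obstacle, is keeping the total number of variables within $3n$: building each combination independently costs $\sum_i\Theta(\mathrm{depth}(i))=\Theta(n\log n)$. The key observation to avoid this is that decreasing $i$ by one changes the piece list only at its light (front) end: consecutive suffixes share every piece lying above the lowest common ancestor of leaves $i$ and $i+1$, and the update merely replaces a run of small pieces by their common parent subtree. Since each subtree of the base tree enters and leaves the active piece list $\bo(1)$ times, the total number of such front updates over all $i$ is $\bo(n)$. The delicate part is then to maintain the weight-balanced, heaviest-on-top combination as a \emph{persistent} structure under these front insertions and removals so that each update allocates only $\bo(1)$ amortized new variables: naive path-copying of the combination spine would reintroduce the logarithmic blow-up, so the argument must charge the rebalancing to the $\bo(n)$ piece events rather than to depths. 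Everything else—the prefix-sum setup, the telescoping depth bound, and the constant bookkeeping—is routine and runs in linear time, giving an $\bo(n)$-time construction overall.
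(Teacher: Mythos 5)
First, a framing point: this paper never proves the lemma at all --- it imports it verbatim from Ganardi et al.\ as Proposition~2.2 --- so your attempt has to be measured against that original construction, not against anything in this text. Measured that way, there are two problems: one local, one at the heart of the matter. The local one: your claim that ``the balancing guarantees that subtree weights shrink by a constant factor at every downward step'' is false for weighted strings, because a single heavy symbol blocks any even split. Take weights $(1,50,49)$: the most balanced split is $(1,50)\mid(49)$, and inside the left part the split is $(1)\mid(50)$; the pieces hanging off the path to the first leaf are then $50$ (deep) followed by $49$ (shallow), so piece weights along a root-to-leaf path need not even be monotone, let alone geometrically increasing. What weighted bisection does guarantee is only that the \emph{remaining} weight halves at each step, which yields the per-leaf depth bound $\bo(\log(||X||/||a||))$ by a different argument; but the geometric-increase property of the pieces, on which both your combination-tree depth analysis and your amortization plan rest, does not hold.

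The central problem is the step you yourself flag as delicate: maintaining the heaviest-on-top combination trees persistently with $\bo(1)$ amortized new variables per piece event. This is not a routine argument left to the reader --- it is essentially the entire content of the lemma --- and the architecture you propose cannot deliver it. The depth bound forces heavy pieces to sit near the root of each combination tree, so the light pieces, which are exactly what changes when passing from $S_{i+1}$ to $S_i$, sit at the bottom; since SLP variables are immutable, publishing the updated combination as a new variable requires fresh variables along the whole path from its root down to the modified light end, i.e.\ $\Theta(\mathrm{depth})$ new variables per update, regardless of how the cost is charged. Concretely, with uniform weights and the perfectly balanced base tree, summing this path-copying cost over your $\bo(n)$ piece events gives $\Theta(n\log n)$ variables. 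The tension is structural: the internal nodes of a combination tree that could be shared between consecutive suffixes are precisely those deriving suffixes of the piece list, which is the lightest-on-top (right-spine) organization, and that organization violates the depth bound; heaviest-on-top makes every internal node a per-suffix string that cannot be shared. Reconciling immutability-with-sharing (the $3n$ bound) against weight-proportional depth (the $3+2\log$ bound) is exactly what Ganardi et al.'s construction is designed to do, and it is the piece missing from your proof; your base tree, suffix decomposition into $\bo(n)$ total piece events, and telescoping scheme are all reasonable ingredients, but without that reconciliation the proof does not go through.
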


With this machinery, we are ready to prove the main result of this section.

\bigskip\begin{theorem}\label{thm:balancing}Given a balanceable GSLP $G$ generating a string $T$, it is possible to construct an equivalent GSLP $G'$ of size $\bo(|G|)$ and height $\bo(\log n)$ in $O(|G| + t(G))$ time, where $t(G)$ is the time needed to compute the lengths of the expansion of each variable in $G$. \end{theorem}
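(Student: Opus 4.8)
The plan is to mimic Ganardi et al.'s balancing construction for SLPs, adapting it to GSLPs by treating each special rule $A \rightarrow x$ as an ordinary rule whose right-hand side is the sequence $\mathtt{OUT}(x)$, but keeping track of the fact that this sequence may be exponentially longer than $|x|$. The key observation, which is exactly where the balanceability hypothesis enters, is that in the DAG interpretation I only pay (in the final grammar size) for nodes that lie on SC-paths, and the number of distinct variables appearing in $\mathtt{OUT}(x)$ — not the length of $\mathtt{OUT}(x)$ — is what contributes to the DAG. Since every variable in $\mathtt{OUT}(x)$ occurs at least twice, the length of $\mathtt{OUT}(x)$ is at most twice... no: I must be more careful, and this is the crux.

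First I would build the DAG $D$ of $G$, where the special rule $A \rightarrow x$ contributes the edges $(A,i,B_i)$ for each position $i$ of the expanded sequence $\mathtt{OUT}(x)=B_1\cdots B_m$. I would compute the path counts $\pi(r,v)$ and $\pi(v,W)$ and the pairs $\lambda(v)$, obtaining the SC-decomposition $E_{scd}(D)$; by \Cref{lemma:scd}, every root-to-sink path crosses at most $2\log_2 n(D) = 2\log_2 n$ non-SC edges. Next, for each maximal SC-path $A_0 \to A_1 \to \cdots \to A_p$, I would apply \Cref{lemma:weighted} to the weighted string formed by the ``side trees'' hanging off the path (the off-path children of each $A_j$, weighted by the lengths of their expansions), producing a balanced SLP of logarithmic height that re-expresses $A_0$ in terms of these side children and $A_p$. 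Replacing every SC-path by its balanced version yields $G'$; the height bound $\bo(\log n)$ follows because a root-to-sink path in $G'$ alternates between $\bo(\log n)$ balanced-SLP segments (each of height $\bo(\log n)$, but telescoping through the weight-halving of \Cref{lemma:weighted}) and $\bo(\log n)$ non-SC edges, and the standard Ganardi et al.\ accounting shows the total is $\bo(\log n)$.

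The size bound is where balanceability does the real work, and I expect it to be the main obstacle. In the pure-SLP proof, the cost charged to an SC-path of length $p$ is $\bo(p)$, and $\sum p \le |G|$ because SC-paths are vertex-disjoint. For a GSLP, a special rule $A\to x$ of size $|x|$ may expand to a sequence $\mathtt{OUT}(x)$ of length $m \gg |x|$, and naively the weighted string fed to \Cref{lemma:weighted} has length $m$, producing $\Theta(m)$ new variables — catastrophic. The hypothesis that no variable occurs exactly once in $\mathtt{OUT}(x)$ is precisely what I would exploit to avoid this: a variable occurring only once would have to be ``split out'' individually and charged to $|x|$, but if every occurring variable appears at least twice, then when I process the special rule I can represent the side contributions succinctly, charging only $\bo(|x|)$ rather than $\bo(m)$. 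Concretely, I would argue that the balanced reconstruction of a special rule can be carried out by a program of size $\bo(|x|)$ in the GSLP formalism (so $G'$ is again a GSLP, not a plain SLP), keeping the special rule ``special'' while balancing only the SLP skeleton around it; the balanceability condition guarantees that the parts that must be made explicit are already referenced elsewhere and hence free.

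Finally I would verify the running time: computing all expansion lengths takes $t(G)$ time by hypothesis, computing path counts and $E_{scd}(D)$ takes $\bo(|D|)=\bo(|G|)$ time, and applying \Cref{lemma:weighted} to each SC-path is linear in the path data, for a total of $\bo(|G|+t(G))$. The delicate accounting step — showing that the special rules can be rebalanced within $\bo(|x|)$ size and without breaking the height bound — is the one I would write out in full detail, since everything else is a faithful transcription of the Ganardi et al.\ argument.
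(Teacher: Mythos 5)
Your scaffolding is the same as the paper's: conceptually replace each special rule $A \rightarrow x$ by $A \rightarrow \mathtt{OUT}(x)$ to obtain an SLP $H$, take the SC-decomposition of its DAG, apply Lemma~\ref{lemma:weighted} to the left/right off-path sequences of each SC-path, and telescope against Lemma~\ref{lemma:scd} for the height. But the step you explicitly defer---the size accounting for special rules---is the crux of the theorem, and the mechanism you sketch for it is not the one that works. The paper does not ``rebalance'' special rules within $\bo(|x|)$ size; it proves they never need to be touched, because balanceability forces every special variable to be the \emph{endpoint} of its SC-path. The argument is one line: if $B$ occurs in $\mathtt{OUT}(x)$ then $|\mathtt{OUT}(x)|_B \ge 2$, so $\pi(A,W) \ge |\mathtt{OUT}(x)|_B \cdot \pi(B,W) \ge 2\pi(B,W)$, hence $\lfloor \log_2 \pi(A,W)\rfloor > \lfloor \log_2 \pi(B,W)\rfloor$ and $\lambda(A)\neq\lambda(B)$, so no SC-edge leaves $A$. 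Since Ganardi et al.'s procedure rewrites only the non-endpoint nodes of SC-paths, every special rule is copied verbatim into $G'$ (contributing at most $|G|$ total size), its outgoing edges are non-SC edges already counted by Lemma~\ref{lemma:scd} in the height bound, and---as a further consequence---every non-endpoint node on an SC-path has an ordinary binary rule, which is what makes the left/right off-path sequences well defined in the first place. Without this observation your plan fails exactly where you fear: a special variable in the middle of an SC-path would force $\Theta(|\mathtt{OUT}(x)|)$ off-path children into the weighted string of Lemma~\ref{lemma:weighted}. Your fallback---that the parts to be made explicit ``are already referenced elsewhere and hence free,'' or that prefix/suffix splits of $\mathtt{OUT}(x)$ can be written as programs of size $\bo(|x|)$---is not an argument: being referenced elsewhere does not make those splits free, and in a Turing-complete formalism even specifying the split position can require $\omega(|x|)$ bits, since a constant-size program may have output of, say, doubly exponential length.

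A secondary error: you claim computing the path counts and $E_{scd}(D)$ takes $\bo(|D|)=\bo(|G|)$ time, but under your own definition $D$ has one edge per position of $\mathtt{OUT}(x)$, so $|D|$ can be exponentially larger than $|G|$, contradicting the claimed $\bo(|G|+t(G))$ bound. The paper never materializes $D$: it uses the pairs $(B,|\mathtt{OUT}(x)|_B)$, whose computation is charged to the $t(G)$ term, to obtain the counts $\pi(\cdot,W)$ and (combined with the endpoint observation, which says no SC-edge leaves a special variable) the set $E_{scd}$ in $\bo(|G|+t(G))$ time. Once the endpoint observation is in place, the rest of your argument---per-path cost $\bo(p)$, total size $\bo(|G|)$, telescoped height $\bo(\log n)$---goes through essentially as you wrote it.
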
\bigskip

\begin{proof}Transform the GSLP $G$ into an SLP $H$ by replacing their special rules $A \rightarrow x$ by $A \rightarrow \mathtt{OUT}(x)$ (conceptually), and then obtain the SC-decomposition $E_{scd}(D)$ of the DAG $D$ of $H$. Observe that the SC-paths of $H$ use the same variables of $G$, so it holds that the sum of the lengths of all the SC-paths of $H$ is less than the number of variables of $G$. Also, note that any special variable $A \rightarrow x$ of $G$ is necessarily the endpoint (i.e., the last node of a directed path) of an SC-path in $D$. To see this note that $\lambda(A) \not= \lambda(B)$ for any $B$ that appears in $\mathtt{OUT}(x)$, because $\log_2 \pi(A,W) \geq \log_2 (|\mathtt{OUT}(x)|_B \cdot \pi(B,W)) \ge 1 + \log_2 \pi(B,W)$ where $ |\mathtt{OUT}(x)|_B \ge 2$ because $G$ is balanceable. This implies that the balancing procedure of Ganardi et al. on $H$, which transforms the rules of variables that are not the endpoint of an SC-path in the DAG $D$, will not touch variables that were originally special variables in $G$.

Let $\rho=(A_0 , d_0 , A_1 ), (A_1 , d_1 , A_2 ), \dots , (A_{p-1} , d_{p-1} , A_p)$ be an SC-path of $D$. It holds that for each $A_i$ with $i \in [0\dd p-1]$, in the SLP $H$ its rule goes to two distinct variables, one to the left and one to the right. Thus, for each variable $A_i$, with $i \in [0\dd p-1]$, there is a variable $A_{i+1}'$ that is not part of the path. Let $A_1'A_2'\dots A_p'$ be the sequence of these variables. Let $L = L_1L_2\dots L_s$ be the subsequence of left variables of the previous sequence. Then construct an SLP of size $\bo(s) \subseteq \bo(p)$ for the sequence $L$ (seen as a string) as in Lemma \ref{lemma:weighted}, using $|\gexp(L_i)|$ in $H$ as the weight function. In this SLP, any path from the suffix nonterminal $S_i$ to a variable $L_j$ has length at most $3 + 2(\log_2 ||S_i|| - \log_2 ||L_j||)$. Similarly, construct an SLP of size $\bo(t)\subseteq \bo(p)$ for the sequence $R = R_1R_2\dots R_t$ of right symbols in reverse order, as in Lemma \ref{lemma:weighted}, but with prefix variables $P_i$ instead of suffix variables. Each variable $A_i$, with $i \in [0\dd p-1]$, derives the same string as $w_{l}A_pw_{r}$, for some suffix $w_{l}$ of $L$ and some prefix $w_{r}$ of $R$. We can find rules deriving these prefixes and suffixes in the SLPs produced in the previous step, so for any variable $A_i$, we construct an equivalent rule of length at most 3. Add these equivalent rules, and the left and right SLP rules to a new GSLP $G'$. Do this for all SC-paths. Finally, add the original terminal variables and special variables (which are left unmodified) of the GSLP $G$, so $G'$ is a GSLP equivalent to $G$. 

Figure~\ref{fig:balancing} shows an example where the special GSLP rules are of the form $A \rightarrow B^t$, meaning $t$ copies of $B$ (i.e., the GSLP is an RLSLP).

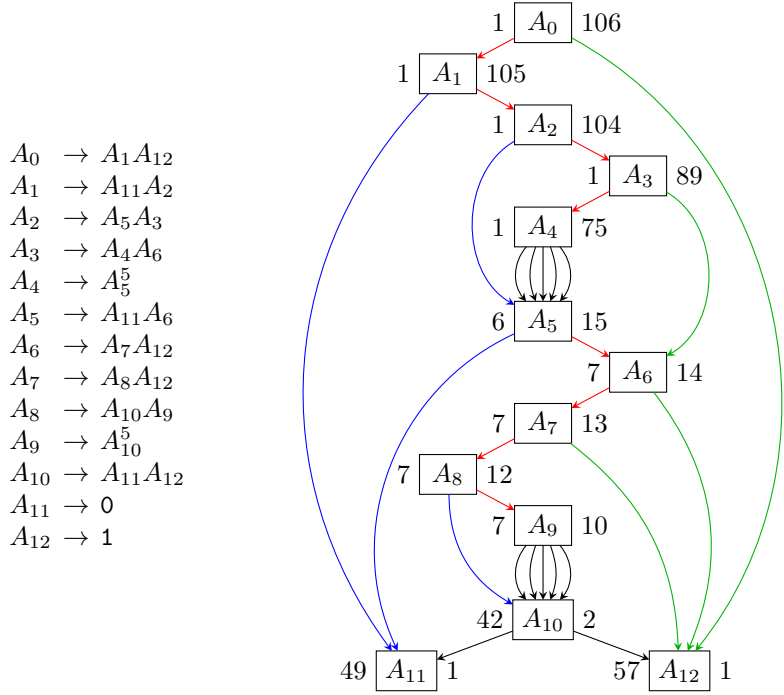
\begin{figure}[t]
    \centering
    \begin{minipage}{0.25\textwidth}
    $\begin{array}{lcl}
         A_0 & \rightarrow & A_1A_{12}  \\
         A_1 & \rightarrow & A_{11}A_2 \\
         A_2 & \rightarrow & A_5A_3  \\
         A_3 & \rightarrow & A_4A_6  \\
         A_4 & \rightarrow & A_5^5   \\
         A_5 & \rightarrow & A_{11}A_6  \\
         A_6 & \rightarrow & A_7A_{12} \\
         A_7 & \rightarrow & A_8A_{12} \\
         A_8 & \rightarrow & A_{10}A_9 \\
         A_9 & \rightarrow & A_{10}^5 \\
         A_{10} & \rightarrow & A_{11}A_{12} \\
         A_{11} & \rightarrow & \texttt{0} \\
         A_{12} & \rightarrow & \texttt{1}
         \end{array}$
    \end{minipage}
    \begin{minipage}{0.55\textwidth}
    \begin{tikzpicture}[->, >=stealth,node distance={1.25cm}, main/.style = {draw, rectangle, outer sep=0pt, minimum width=0.75cm, minimum height=0.4cm}]
    \node[main, label=left:1, label=right:106] (0) {$A_0$};
    \node[main, label=left:1, label=right:105] (1) [below left=0.15cm and 0.5cm of 0] {$A_1$};
    \node[main, label=left:1, label=right:104] (2) [below right=0.15cm and 0.5cm of 1] {$A_2$};
    \node[main, label=left:1, label=right:89] (3) [below right=0.15cm and 0.5cm of 2] {$A_3$};
    \node[main, label=left:1, label=right:75] (4) [below left=0.15cm and 0.5cm of 3] {$A_4$};
    \node[main, label=left:6, label=right:15] (5) [below of=4] {$A_5$};
    \node[main, label=left:7, label=right:14] (6) [below right=0.15cm and 0.5cm of 5] {$A_6$};
    \node[main, label=left:7, label=right:13] (7) [below left=0.15cm and 0.5cm of 6] {$A_7$};
    \node[main, label=left:7, label=right:12] (8) [below left=0.15cm and 0.5cm of 7] {$A_8$};
    \node[main, label=left:7, label=right:10] (9) [below right=0.15cm and 0.5cm of 8] {$A_9$};
    \node[main, label=left:42, label=right:2] (10) [below of=9] {$A_{10}$};
    \node[main, label=left:49, label=right:1] (11) [below left=0.15cm and 1cm of 10] {$A_{11}$};
    \node[main, label=left:57, label=right:1] (12) [below right=0.15cm and 1cm of 10] {$A_{12}$};
    \draw[red] (0) to (1);
    \draw[red] (1) to (2);
    \draw[red] (2) to (3);
    \draw[red] (3) to (4);
    \draw[bend right=20] (4) to (5);
    \draw[bend right=40] (4) to (5);
    \draw (4) to (5);
    \draw[bend left=20] (4) to (5);
    \draw[bend left=40] (4) to (5);
    \draw[red] (5) to (6);
    \draw[red] (6) to (7);
    \draw[red] (7) to (8);
    \draw[red] (8) to (9);
    \draw[bend right=20] (9) to (10);
    \draw[bend right=40] (9) to (10);
    \draw(9) to (10);
    \draw[bend left=20] (9) to (10);
    \draw[bend left=40] (9) to (10);
    \draw (10) to (11);
    \draw (10) to (12);
    \draw[bend right=40, blue] (1) to (11);
    \draw[bend right=60, blue] (2) to (5);
    \draw[bend right=45, blue] (5) to (11);
    \draw[bend right=30, blue] (8) to (10);
    \draw[bend left=50, color={black!30!green}] (0) to (12);
    \draw[bend left=60, color={black!30!green}] (3) to (6);
    \draw[bend left=30, color={black!30!green}] (6) to (12);
    \draw[bend left=25, color={black!30!green}] (7) to (12);
    \end{tikzpicture}
    \end{minipage}
    \medskip
    \caption{The DAG and SC-decomposition of an unfolded RLSLP generating the string $\zero(\zero(\zero\one)^6\one^2)^6(\zero\one)^5\one^3$. The value to the left of a node is the number of paths from the root to that node, and the value to the right is the number of paths from the node to sink nodes. Red edges belong to the SC-decomposition of the DAG. Blue (resp. green) edges branch from an SC-path to the left (resp. to the right).}
    \label{fig:balancing}
\end{figure}

The SLP constructed for $L$ has all its rules of length at most 4, and $3s \leq 3p$ variables. The same happens with $R$. The other constructed rules also have a length of at most 3, and there are $p$ of them. Summing over all SC-paths, we have $\bo(|G|)$ size. The special variables cannot sum up to more than $\bo(|G|)$ size. Thus, the GSLP $G'$ has size $\bo(|G|)$.

Any path in the derivation tree of $G'$ is of length $\bo(\log n)$. To see why, let $A_0,\dots,A_p$ be an SC-path. Consider a path from a variable $A_i$ to an occurrence of a variable that is in the right-hand side of $A_p$ in $G'$. Clearly, this path has length at most 2. Now consider a path from $A_i$ to a variable $A_j'$ in $L$ with $i < j \leq p$. By construction this path is of the form $A_i\rightarrow S_k \rightarrow^* A_j'$ for some suffix variable $S_k$ (if the occurrence of $A_j'$ is a left symbol), and its length is at most $1 + 3 + 2(\log_2 ||S_k|| - \log_2 ||A_j'||) \leq4+2\log_2||A_i||-2\log_2||A_{j}'||$. Analogously, if $A_j'$ is a right variable, the length of the path is bounded by $1 + 3 + 2(\log_2 ||P_k|| - \log_2 ||A_j'||) \leq4+2\log_2||A_i||-2\log_2||A_{j}'||$. Finally, consider a maximal path to a leaf in the derivation tree of $G'$. Factorize it as
$$A_0 \rightarrow^* A_1 \rightarrow^* \dots \rightarrow^* A_{k}$$ 
where each $A_i$ is a variable of $H$ (and also of $G$). Paths $A_i \rightarrow^* A_{i+1}$ are like those defined in the paragraph above, satisfying that their length is bounded by $4+2\log_2||A_i||-2\log_2||A_{i+1}||$. Observe that between each $A_i$ and $A_{i+1}$, in the DAG $D$ there is almost an SC-path, except that the last edge is not in $E_{scd}$. The length of the path from root to leaf in $G'$ is at most
$$\sum_{i=0}^{k-1}(4+2\log_2||A_i||-2\log_2||A_{i+1}||) \leq 4k + 2\log_2||A_0|| - 2\log_2||A_k||$$

By Lemma \ref{lemma:scd}, $k \leq 2\log_2 n$, which yields the upper bound $\bo(\log n)$. 

To have standard SLP rules of size at most two, delete rules in $G'$ of the form $A \rightarrow B$ (replacing all $A$'s by $B$'s), and note that rules of the form $A \rightarrow BCDE$ or $A \rightarrow BCD$ can be decomposed into rules of length $2$, with only a constant increase in size and depth.

The balancing procedure uses $\bo(|G| + t(G))$ time and $\bo(|G| + s(G))$ auxiliary space, where $t(G)$ and $s(G)$ are the time and space needed to compute and store the set of all the pairs $(B, |\texttt{OUT}(x)|_B)$, where $B$ appears $|\texttt{OUT}(x)|_B>0$ times in $\texttt{OUT}(x)$, for every special variable $A \rightarrow x$. With this information the set $E_{scd}(D)$ can be computed in $\bo(|G|+t(G))$ time, instead of $\bo(|H|)$ time. The SLPs of Lemma \ref{lemma:weighted} are constructed in linear time in the lengths of the SC-paths, which sum to $\bo(|G|)$ in total.
\end{proof}

\section{Iterated Straight-Line Programs}

We now define iterated SLPs and show that they can be much smaller than $\delta$.

\bigskip\begin{definition}
An \emph{iterated straight-line program} of \emph{degree} $d$ ($d$-ISLP)  is an SLP that allows in addition \emph{iteration rules} of the form $$A \rightarrow \prod_{i=k_1}^{k_2} B_1^{i^{c_1}} \cdots B_t^{i^{c_t}}$$ 
where $1 \le k_1, k_2$, $0 \le c_1,\dots,c_t \leq d$ are integers and $B_1 \dots B_t$ are variables that cannot reach $A$ (so the ISLP generates a unique string). Iteration rules have size $2+2t=\bo(t)$ and expand to
$$\gexp(A) = \prod_{i=k_1}^{k_2} \gexp(B_1)^{i^{c_1}}\!\cdots \gexp(B_t)^{i^{c_t}}$$
where if $k_1 > k_2$ the iteration goes from $i=k_1$ downwards to $i=k_2$.
The size $size(G)$ of a $d$-ISLP $G$ is the sum of the sizes of all of its rules.
\end{definition}

\bigskip\begin{definition}
The measure $g_{it(d)}(T)$ is defined as the size of the smallest $d$-ISLP that generates $T$, whereas $g_{it}(T) = \min_{d \ge 0} g_{it(d)}(T)$.
\end{definition}\bigskip

The following observations show that ISLPs subsume RLSLPs, and thus, can be smaller than the smallest L-system.

\bigskip
\begin{proposition}\label{prop:git<grl}For any $d \geq 0$, it always holds that $g_{it(d)} = \bo(g_{rl})$.
\end{proposition}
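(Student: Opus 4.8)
The plan is to prove the bound by a direct simulation: every RLSLP can be reinterpreted as a $d$-ISLP of asymptotically the same size, for \emph{every} $d \ge 0$. Since the standard SLP rules $A \rightarrow a$ and $A \rightarrow BC$ are already legal ISLP rules, the only rule type that needs attention is the run-length rule $A \rightarrow B^m$ (with $m \ge 3$).

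First I would take a smallest RLSLP $G$ generating $T$, of size $|G| = \grl$, and build a $d$-ISLP $G'$ by copying all terminal and binary rules verbatim, and replacing each run-length rule $A \rightarrow B^m$ with the single-factor iteration rule
$$A \rightarrow \prod_{i=1}^{m} B^{i^{0}},$$
i.e.\ taking $t = 1$, $k_1 = 1$, $k_2 = m$, $B_1 = B$, and exponent $c_1 = 0$. Since $i^{0} = 1$ for every $i$, this rule expands to $\gexp(B)^{m} = \gexp(A)$, so the expansion of every variable is preserved and $G'$ generates exactly $T$.

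Next I would verify the degree constraint and the size bound. The only exponent used is $c_1 = 0$, which satisfies $0 \le c_1 \le d$ for every $d \ge 0$ (in particular for $d = 0$), so $G'$ is a valid $d$-ISLP for all $d \ge 0$, and the bound will hold uniformly in $d$. For the size, each copied SLP rule contributes exactly as in $G$, while each replaced run-length rule, originally of size $2$, becomes an iteration rule with a single factor, of size $2 + 2t = 4 = \bo(1)$. Summing over all rules gives $size(G') \le 2\,|G| = 2\,\grl$, hence $g_{it(d)} \le size(G') = \bo(\grl)$, as claimed.

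There is essentially no difficult step here: the argument is a straightforward embedding of run-length rules into iteration rules. The only points requiring care are the bookkeeping of the iteration-rule size ($2 + 2t$ with $t = 1$) and checking that the degenerate exponent $c_1 = 0$ is admissible for every allowed degree $d$, so that a single construction works simultaneously for all $d \ge 0$ (including the extreme case $d = 0$).
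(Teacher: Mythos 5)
Your proof is correct and follows exactly the paper's approach: the paper's one-line argument likewise simulates each run-length rule $A \rightarrow B^m$ by the iteration rule $A \rightarrow \prod_{i=1}^{m} B^{i^0}$, which is valid for every degree $d \ge 0$. Your version simply makes explicit the size accounting ($2+2t=4$ per replaced rule) that the paper leaves implicit.
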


\begin{proof}Just note that a rule $A \rightarrow \prod_{i=1}^t B^{i^0}$ from an ISLP simulates a rule $A \rightarrow B^t$ from a RLSLP. In particular, $0$-ISLPs are equivalent to RLSLPs. 
\end{proof}

\begin{proposition}For any $d \geq 0$, there exists a string family where $g_{it(d)} =o(\ell)$.
\end{proposition}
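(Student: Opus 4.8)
The plan is to sidestep the iteration rules entirely and inherit the separation from the run-length special case. The key observation is that, by Proposition~\ref{prop:git<grl}, $g_{it(d)} = \bo(g_{rl})$ for every $d \ge 0$; hence it suffices to exhibit a \emph{single} family on which $g_{rl} = o(\ell)$, since then $g_{it(d)} = \bo(g_{rl}) = o(\ell)$ holds simultaneously for all $d \ge 0$. I would therefore prove the stronger statement that RLSLPs already break $\ell$, and let ISLPs inherit it. This is the right reduction because the families that are cheap for the iteration rule—runs whose lengths obey a polynomial law, e.g.\ $\prod_i \syma^{i^{c}}$—turn out to be equally cheap for L-systems: one can generate them with a fixed ``spawner'' morphism of constant size (for fixed $c$, by nesting $c$ spawners). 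So the iteration mechanism \emph{by itself} does not separate ISLPs from L-systems, and the separation must instead come from a grammar-friendly but morphism-hostile string, which the run-length layer already captures.

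For the separating family I would use, or re-derive, one of the families behind the incomparability of $\ell$ with the grammar measures established by Navarro and Urbina~\cite{NU21,NU23}: a family $\mathcal{F} = \{T_m\}$ that is highly compressible by an RLSLP, say $g_{rl}(T_m) = \bo(\plog{|T_m|})$, yet on which every L-system is large, $\ell(T_m) = |T_m|^{\Omega(1)}$. The upper bound on $g_{rl}$ is the easy half: I would present $T_m$ through an explicit RLSLP and simply count its rules. The entire difficulty concentrates in the lower bound $\ell(T_m) = \omega(g_{rl}(T_m))$.

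The hard part, and the step I expect to be the main obstacle, is showing that no small L-system can generate $T_m$. The handle I would exploit is the rigid self-similar structure of a morphic word: $\tau(\varphi^{d}(S))$ decomposes, at every scale $j \le d$, as a concatenation of blocks drawn from the set $\{\varphi^{j}(X) : X \in V\}$ of size at most $|V|$, and the arrangement of these blocks is itself governed by $\varphi$. I would design $T_m$ so that tiling it in this position-determined, non-adaptive fashion forces either the number of block types $|V|$ or the total right-hand-side length $\sum_{X}|\varphi(X)|$ to grow polynomially in $|T_m|$, whereas an RLSLP may reuse rules in an irregular, non-self-similar way and so stays polylogarithmic. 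Making this dichotomy quantitative (via a counting or recurrence argument on the admissible block decompositions, anchored by the crude bounds $\ell \ge |V| \ge |\Sigma_{T_m}|$ to account for the unavoidable symbols) is precisely the combinatorial core of~\cite{NU21,NU23}, which I would invoke or adapt. Once $\mathcal{F}$ is in hand the claim is immediate, since $g_{it(d)}(T_m) = \bo(g_{rl}(T_m)) = \bo(\plog{|T_m|}) = o(\ell(T_m))$ for every fixed $d \ge 0$.
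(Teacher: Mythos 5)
Your proposal is correct and follows essentially the same route as the paper: reduce to RLSLPs via Proposition~\ref{prop:git<grl} ($g_{it(d)} = \bo(g_{rl})$ for all $d \ge 0$) and then invoke the family of Navarro and Urbina~\cite{NU23} on which $g_{rl} = o(\ell)$. The paper's proof is exactly this two-line argument, simply citing the known separation rather than re-deriving the L-system lower bound as you sketch.
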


\begin{proof}Navarro and Urbina show a string family where $g_{rl} = o(\ell)$ \cite{NU23}. Hence, $g_{it(d)}$ is also $o(\ell)$ in this family. 
\end{proof}

We now show that $d=1$ suffices to obtain ISLPs that are significantly smaller than $\delta$ for some string families.

\bigskip\begin{lemma}Let $d \ge 1$. There exists a string family with $g_{it(d)} = \bo(1)$ and $\delta = \Omega(\sqrt{n})$.
\end{lemma}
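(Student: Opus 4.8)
The plan is to exhibit an explicit family of strings whose $d$-ISLP size is constant while $\delta$ grows like $\sqrt n$. The natural candidate is a string built so that the iteration rule, with its power-of-$i$ exponents, produces runs of exponentially or polynomially growing length in a single rule of size $\bo(1)$. The simplest choice is to take a single terminal $\symb$ and form $T_m = \prod_{i=1}^{m} \symb^{i}$, i.e.\ the concatenation $\symb^1\symb^2\cdots\symb^m = \symb^{1+2+\cdots+m} = \symb^{m(m+1)/2}$. This is generated by the iteration rule $A \rightarrow \prod_{i=1}^{m} B^{i^{1}}$ with $B \rightarrow \symb$, which (since $d \ge 1$ permits exponent $c_1 = 1$) has size $2+2\cdot 1 = 4 = \bo(1)$. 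Thus $g_{it(d)} = \bo(1)$ and $n = |T_m| = \Theta(m^2)$, so $m = \Theta(\sqrt n)$.

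The trouble with that first candidate is that $T_m = \symb^{N}$ is a unary string, for which $\delta = \bo(1)$, not $\Omega(\sqrt n)$. So the real work is to keep the ISLP small while forcing high substring complexity. The fix I would use is to introduce distinct separators between the blocks, so that the length-$k$ substrings are forced to be pairwise distinct for many values of $k$. Concretely I would take an alphabet symbol per block or, to stay within a single iteration rule, use a construction like $T_m = \prod_{i=1}^{m} \symb^{i}\symc$, generated by $A \rightarrow \prod_{i=1}^m B^{i^1} C^{i^0}$ with $B \rightarrow \symb$, $C \rightarrow \symc$; this rule has $t=2$ and size $2+2\cdot 2 = 6 = \bo(1)$. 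The blocks $\symb^{i}\symc$ have strictly increasing lengths, so the positions of the $\symc$'s are at the triangular numbers, and the spacing pattern between consecutive $\symc$'s is $1,2,3,\dots,m$, all distinct.

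The key step is then the lower bound $\delta = \Omega(\sqrt n)$, which I would establish by counting distinct substrings of a well-chosen length $k$. The idea is that for $k \approx m$, a window of length $k$ sliding across $T_m$ captures a different multiset/arrangement of $\symc$-gaps depending on where it sits, because the gap lengths $1,2,\dots,m$ are all different. More precisely I would argue that for a suitable $k = \Theta(m)$, the number $T_k$ of distinct length-$k$ substrings is $\Omega(m) = \Omega(\sqrt n)$: each starting position in a large range yields a substring with a distinct pattern of $\symc$ positions (since the interval lengths around any point determine the position uniquely), so $T_k/k \ge \Omega(m)/\Theta(m)$ does not immediately give $\Omega(\sqrt n)$ — one must instead pick $k$ so that $T_k$ is of order $m^{3/2}$ or choose $k = \Theta(1)$-free counting. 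The cleanest route is: take $k=m$; every substring of length $m$ starting just before a separator $\symc$ contains a unique number of $\symc$ symbols and a unique gap sequence, giving $T_m = \Omega(m)$ distinct substrings, hence $\delta \ge T_m/m$; but this is only $\Omega(1)$, so one must count at a shorter length where the ratio peaks.

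The main obstacle, therefore, is calibrating $k$ to maximize $T_k/k$. The right analysis is that at length $k \approx \sqrt n = \Theta(m)$ the number of distinct substrings is $\Theta(m^{3/2})$-ish is \emph{not} what one gets from this simple construction; the separator construction actually gives $T_k = \Theta(m)$ for a range of $k$, yielding only $\delta = \Omega(1)$. To genuinely reach $\delta = \Omega(\sqrt n)$ one needs a family where $\Theta(\sqrt n)$ distinct substrings appear \emph{all at the same length} $k$, and $k$ is a constant fraction of $\sqrt n$ so that $T_k/k = \Omega(\sqrt n)$. I would therefore refine the construction so that each of the $\Theta(m)$ blocks of comparable length $\Theta(m)$ contributes a distinct length-$k$ substring for a single common $k=\Theta(m)$; this is exactly what the varying block lengths $\symb^i\symc$ deliver, since for $k$ slightly larger than the typical block length, a length-$k$ window anchored at successive separators sees distinct surrounding gap-patterns. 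Verifying that $T_k = \Omega(m^2)$ at some $k = \Theta(m)$ — equivalently that $\Omega(\sqrt n)$ arises as $T_k/k$ with $T_k = \Theta(n)$ and $k=\Theta(\sqrt n)$ — is the crux, and is the one computation I would carry out carefully rather than wave through. The anticipated conclusion is $g_{it(d)} = \bo(1)$ and $\delta = \Omega(\sqrt n)$, as claimed.
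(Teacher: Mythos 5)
Your construction is the same as the paper's (the strings $s_k = \prod_{i=1}^{k} \syma^i\symb$, generated by a constant-size $1$-ISLP with rule $S_k \rightarrow \prod_{i=1}^{k} A^{i}B$), and your analysis of the grammar size and of $n = \Theta(k^2)$ is correct. But there is a genuine gap: the lower bound $\delta = \Omega(\sqrt{n})$, which is the entire nontrivial content of the lemma, is never established. Worse, along the way you assert that the separator construction ``actually gives $T_k = \Theta(m)$ for a range of $k$, yielding only $\delta = \Omega(1)$'' --- a claim which, if true, would kill the construction --- and you then end by declaring that verifying $T_k = \Omega(m^2)$ at some $k = \Theta(m)$ ``is the crux'' that you ``would carry out carefully rather than wave through.'' That computation is exactly what is missing; identifying the crux and stopping there leaves the statement unproved.

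The missing count is short, and your erroneous intermediate claim comes from only considering substrings anchored at separators (one per block, hence only $\Theta(m)$ of them). Instead, count substrings of length $\ell$ that contain at least \emph{two} occurrences of $\symb$: such a substring has the form $\syma^{x}\symb\,\syma^{i+1}\symb\cdots$, so it is uniquely determined by the pair $(x,i)$, where $x$ is the number of leading $\syma$'s (read off from the position of the first $\symb$) and $i$ is the index of the block where it starts (read off from the gap between its first two $\symb$'s). Every pair with $0 \le x \le i$ and $x + i + 3 \le \ell$ occurs in $s_k$ as long as $\ell$ is at most roughly $k$, and there are $\Theta(\ell^2)$ such pairs, so $T_\ell = \Omega(\ell^2)$. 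Taking $\ell = \Theta(k)$ gives $\delta \ge T_\ell/\ell = \Omega(k) = \Omega(\sqrt{n})$. Note that the paper itself avoids this computation by a different route: it cites the known result that $\delta = \Omega(\sqrt{n})$ holds for the family $\symc s_k$, together with the fact that deleting a single character can decrease $\delta$ by at most $1$; either argument completes your proof, but one of them must actually be carried out.
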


\begin{proof}Such a family is formed by the strings $s_k = \prod_{i=1}^k \syma^i\symb$. The 1-ISLPs with initial rule $S_k \rightarrow \prod_{i=1}^{k} A^iB$, and rules $A \rightarrow \syma$, $B \rightarrow \symb$, generate each string $s_k$ in the family using $\bo(1)$ space. On the other hand, it has been proven that $\delta = \Omega(\sqrt{n})$ in the family $\symc s_k$ \cite{NU23}. As $\delta$ can only decrease by $1$ after the deletion of a character \cite{AFI2023},  $\delta = \Omega(\sqrt{n})$ in the family $s_k$ too.
\end{proof}

On the other hand, ISLPs can perform worse than other compressed representations; recall that $\delta \le \gamma \le b \le r_\dol$.

\bigskip\begin{lemma}\label{lem:fib} Let  $\mu \in \{r, r_\dol, \ell\}$. For any $d \ge 0$, there exists a string family with $g_{it(d)} = \Omega(\log n)$ and $\mu = \bo(1)$.
\end{lemma}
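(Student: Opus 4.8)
The plan is to exhibit a single string family on which the measures $r$, $r_\dol$, and $\ell$ are all $\bo(1)$, while every $d$-ISLP must have size $\Omega(\log n)$ regardless of the degree $d$. The natural candidate is a family of Fibonacci-like words (the lemma's label \texttt{lem:fib} strongly suggests this): the Fibonacci words $F_k$, defined by $F_1=\symb$, $F_2=\syma$, $F_{k}=F_{k-1}F_{k-2}$, have length $n=\Theta(\phi^k)$, so $k=\Theta(\log n)$. It is classical that the Fibonacci infinite word is a fixed point of the morphism $\syma\mapsto\syma\symb$, $\symb\mapsto\syma$, which immediately gives an L-system of constant size, so $\ell=\bo(1)$. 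Likewise, it is known that the run-length BWT of Fibonacci words has a constant number of runs (Fibonacci words are standard Sturmian words, whose BWT is famously highly structured), giving $r=\bo(1)$ and $r_\dol=\bo(1)$; I would cite the standard references rather than reprove this.

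The substance of the lemma is the lower bound $g_{it(d)}=\Omega(\log n)$. First I would argue a \emph{height} lower bound: the derivation tree (after unfolding iteration and run-length rules) of any ISLP generating a string of length $n$ has height $\Omega(\log n)$, simply because each internal node has a bounded number of \emph{distinct} children-variables but the expansion length grows, and because Fibonacci words are not eventually periodic, one cannot shortcut the growth with a single iterated rule of bounded size. The cleaner route, however, is a direct size argument. I would use the fact that Fibonacci words are Sturmian, hence have exactly $k+1$ distinct factors of each length $k$ (minimal factor complexity among aperiodic words), and relate the number of distinct substrings that any small ISLP can generate to its size. Concretely, an ISLP of size $s$ expands, through its grammar/parse structure, into at most some function of $s$ distinct ``phrases,'' and the aperiodicity of the Fibonacci word forces at least $\Omega(\log n)$ of them.

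The decisive technical point, and the one I expect to be the main obstacle, is that iteration rules are powerful: a single rule $A\rightarrow\prod_{i=k_1}^{k_2}B_1^{i^{c_1}}\cdots B_t^{i^{c_t}}$ of size $\bo(t)$ can expand to a string of length polynomial in $(k_2-k_1)$, so one cannot naively charge $\Omega(\log n)$ to raw expansion length. I must show that such rules still cannot escape a logarithmic barrier for Fibonacci words. The key observation is that an iteration rule produces a string with a very regular ``arithmetic-progression-of-exponents'' structure, whereas the Fibonacci word is aperiodic and its factor structure is governed by the continued fraction $[0;1,1,1,\dots]$; these are incompatible in a strong quantitative sense. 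I would formalize this by showing that any variable whose expansion is a substring of a Fibonacci word has expansion length bounded by a constant times $\phi$ raised to (its height), and that iteration rules cannot increase height by more than a constant while their size stays bounded, so reaching length $n$ forces $\Omega(\log n)$ total rule applications on a root-to-leaf path, and hence $\Omega(\log n)$ variables.

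A cleaner and more robust alternative, which I would present as the actual proof to sidestep the subtlety of iteration rules, is to invoke a general lower bound tying ISLP size to $\delta$ or to the number of distinct factors. Since Fibonacci words satisfy $\delta=\bo(1)$ yet have $k+1=\Omega(\log n)$ distinct length-$k$ factors summed appropriately, and since any representation that supports expansion must in some sense ``name'' the $\Theta(\log n)$ scales of the self-similar structure, I would argue that $g_{it(d)}\ge c\cdot\log n$ by a counting argument on the distinct variable-expansions appearing along the unfolded derivation, using that each of the $\Theta(\log n)$ length scales contributes a fresh aperiodic factor that cannot be reused. Completing this counting rigorously in the presence of the exponent-polynomials $i^{c_j}$ is the crux; everything else reduces to citing known constant-size constructions for $r$, $r_\dol$, and $\ell$ on Fibonacci words.
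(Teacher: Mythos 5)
Your choice of family (Fibonacci words), your treatment of the upper bounds by citing known constant-size results for $r$, $r_\dol$ and $\ell$, and your identification of the crux---that iteration rules can produce huge expansions at tiny size, so one cannot charge $\Omega(\log n)$ naively---all agree with the paper. But you never supply the idea that actually closes that gap. The paper's key tool is a \emph{local anti-power property}: Fibonacci words contain no substring $x^4$ with $x\neq\varepsilon$. This immediately tames every iteration rule $A \rightarrow \prod_{i=k_1}^{k_2} B_1^{i^{c_1}}\cdots B_t^{i^{c_t}}$ of any ISLP generating a Fibonacci word: if some $c_r \neq 0$, then every exponent $i^{c_r}$ occurring in the expansion must be at most $3$ (otherwise $\gexp(B_r)^4$ would be a substring), forcing $\max(k_1,k_2)<4$; and if all $c_r=0$, then $|k_1-k_2|<3$ (otherwise $\gexp(B_1\cdots B_t)^4$ occurs). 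Consequently each iteration rule unfolds into ordinary SLP rules of total size $\bo(t)$, so any $d$-ISLP for a Fibonacci word converts into an SLP of the same asymptotic size, and the universal bound $g=\Omega(\log n)$ transfers to $g_{it(d)}$. Your sketches appeal instead to aperiodicity, height, and factor complexity. Aperiodicity of the whole word is too weak: an aperiodic word can still contain arbitrarily high powers as substrings, which is exactly what an iteration rule would exploit; and your claimed inequality ``expansion length $\le C\cdot\phi^{\mathrm{height}}$'' for variables expanding to Fibonacci factors is not something you can assert---proving it for iteration rules \emph{is} the power-freeness argument you are missing.

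Worse, your ``cleaner and more robust alternative''---a general lower bound tying ISLP size to $\delta$ or to distinct-factor counts---is not merely unproved but provably false: the paper itself (the lemma preceding this one) exhibits strings with $g_{it(d)}=\bo(1)$ and $\delta=\Omega(\sqrt{n})$, i.e., with abundant distinct factors at every scale yet constant ISLP size, so no counting bound of that form can hold for ISLPs. Any correct proof must invoke a structural property of the specific family that defeats iteration rules; fourth-power-freeness of Fibonacci words is the one the paper uses, and your proposal, as written, has a genuine gap precisely at this point. (A minor additional note: the paper takes the \emph{even-index} Fibonacci words to guarantee $r,r_\dol,\ell=\bo(1)$.)
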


\begin{proof}
Consider the family of Fibonacci words defined recursively as $F_0 = \syma$, $F_1 = \symb$, and $F_{i+2} = F_{i+1}F_i$ for $i \geq 0$. Fibonacci words cannot contain substrings of the form $x^4$ for any $x \neq \varepsilon$ \cite{KARHUMAKI1983}. Consider an ISLP for a Fibonacci word and a rule of the form $A \rightarrow \prod_{i=k_1}^{k_2}B_1^{i^{c_1}}\cdots B_t^{i^{c_t}}$. Observe that if $c_r \neq 0$ for some $r$, then $\max(k_1, k_2) < 4$, as otherwise $\gexp(B_r)^4$ occurs in $T$. Similarly, if $c_r = 0$ for all $r$, then $|k_1 - k_2| < 3$, as otherwise $\gexp(B_1\cdots B_t)^4$ appears in $T$. In the latter case, we can rewrite the product with $k_1, k_2 \in [1\dd 3]$. Therefore, we can unfold the product rule into standard SLP rules of total size at most $9t$ ($3t$ variables raised to at most 3 each because we assumed our word is Fibonacci). Hence, for any $d$-ISLP $G$ generating a Fibonacci word, there is an SLP $G'$ of size $\bo(|G|)$ generating the same string. As $g = \Omega(\log n)$ in every string family \cite{Navacmcs20.3}, we obtain that $g_{it(d)} = \Omega(\log n)$ in this family too. On the other hand, $r_\dol, r$, and $\ell$ are $\bo(1)$ in the even Fibonacci words \cite{NOPtit20,MRS2003,NU21}.
\end{proof}

\bigskip\begin{lemma}For any $d \ge 0$, there exists a string family satisfying that $z = \bo(\log n)$ and $g_{it(d)} = \Omega(\log^2 n/\log\log n)$.
\end{lemma}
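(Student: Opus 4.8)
The plan is to reduce this separation between $g_{it(d)}$ and $z$ to the classical separation between the smallest grammar measure $g$ and the Lempel--Ziv measure $z$, reusing the mechanism of Lemma~\ref{lem:fib}: on a string that avoids high powers, iteration (and run-length) rules cannot be exploited, so a $d$-ISLP is no smaller than an ordinary SLP up to constants. Concretely, I would work with a string family $\{T_k\}$, of lengths $n=|T_k|$, that (i) is free of $4$th powers, (ii) has $z(T_k)=\bo(\log n)$, and (iii) has $g(T_k)=\Omega(\log^2 n/\log\log n)$. The statement then follows by establishing $g_{it(d)}(T_k)=\Theta(g(T_k))$ on this family.

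The reduction is the easy half and mirrors Lemma~\ref{lem:fib}. Fix any $d$-ISLP $G$ for $T_k$ and an iteration rule $A\rightarrow\prod_{i=k_1}^{k_2}B_1^{i^{c_1}}\cdots B_t^{i^{c_t}}$. If some $c_r\neq 0$, then $4$th-power-freeness forces $\max(k_1,k_2)<4$, since otherwise $\gexp(B_r)^{4}$ would occur in $T_k$; if all $c_r=0$, then $|k_1-k_2|<3$, since otherwise $\gexp(B_1\cdots B_t)^4$ would occur. In either case the range spans $\bo(1)$ values of $i$ and each exponent $i^{c_r}$ is at most $3^d=\bo(1)$, because $d$ is a fixed constant, so the rule unfolds into standard SLP rules of total size $\bo(t)$. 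Summing over all rules turns $G$ into an SLP of size $\bo(\mathrm{size}(G))$, whence $g(T_k)=\bo(g_{it(d)}(T_k))$; combined with the trivial $g_{it(d)}\le g$, this yields $g_{it(d)}(T_k)=\Theta(g(T_k))=\Omega(\log^2 n/\log\log n)$, while $z(T_k)=\bo(\log n)$, as required.

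The technical heart is producing a family satisfying (i)--(iii). For (ii)--(iii) I would rely on the known phenomenon that the smallest grammar can be a factor $\Theta(\log n/\log\log n)$ larger than $z$ \cite{CLLPPSS05,Navacmcs20.3}: since $z=\bo(\log n)$ here, the standard bound $g=\bo(z\log(n/z))$ caps $g$ at $\bo(\log^2 n)$, and the separating construction realizes $g=\Omega(\log^2 n/\log\log n)$, i.e.\ near the extreme of this bound. The small value of $z$ should come from a recursive doubling layout in which each level is copied, with a bounded-size perturbation, from the previous one, so the left-to-right parse spends $\bo(1)$ phrases per level and only $\bo(\log n)$ phrases overall, whereas the perturbations defeat hierarchical sharing and force $g$ to grow through a counting/incompressibility argument.

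The main obstacle is meeting condition (i) together with (ii) and (iii): a small $z$ demands heavy long-range repetition, which naturally tends to create long powers, precisely the feature that would let iteration rules collapse the ISLP and destroy the lower bound. The remedy is to arrange the repetition across distant, non-adjacent copies and to separate blocks by fresh symbols, so that every repeated factor is long-range (cheap for LZ) yet no factor occurs four times in a row (blocking $4$th powers), all without weakening the grammar lower bound. Verifying that a concrete such family simultaneously keeps $z=\bo(\log n)$, forces $g=\Omega(\log^2 n/\log\log n)$, and remains $4$th-power-free is the delicate step; once it is in place, the reduction above completes the proof.
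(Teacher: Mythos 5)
Your reduction half is exactly the paper's argument: on a $4$th-power-free (in the paper, cube-free) family, iteration rules cannot span more than $\bo(1)$ values of $i$ with $\bo(1)$ exponents, so every $d$-ISLP unfolds into an SLP of the same asymptotic size and $g_{it(d)}=\Theta(g)$, mirroring Lemma~\ref{lem:fib}. The genuine gap is in what you yourself call the technical heart: you never exhibit a family satisfying (i)--(iii), you only list desiderata and sketch a hoped-for ``recursive doubling layout with bounded-size perturbations'' backed by an unspecified counting/incompressibility argument. Since the entire lemma is an existence claim, leaving the existence of the family as ``the delicate step'' to be verified later means the proof is not complete; moreover, your worry that small $z$ inherently creates long powers is precisely the tension that must be resolved by a concrete construction, not by fiat.

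The paper closes this gap by pointing to a known construction rather than building one: take $S = T(k_1)|_1 T(k_2)|_2\cdots|_{p-1}T(k_p)$, where $T(k)$ is the length-$k$ prefix of the Thue--Morse word and the $|_i$ are unique separators. For suitable sets of integers $k_1,\dots,k_p$ with $p=\Theta(k_1)$, it is already proven in \cite{BGLP2018} that $g=\Omega(\log^2 n/\log\log n)$ while $z=\bo(\log n)$ on this family. The key observation you are missing is that this known $g$-versus-$z$ separating family is \emph{automatically} power-free: Thue--Morse prefixes are cube-free, and the unique separators prevent any power from crossing block boundaries, so condition (i) comes for free and your reduction applies verbatim. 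In short, your strategy is the right one, but the lemma is only proved once you recognize (or re-derive) that the separating family from the literature satisfies the power-freeness hypothesis your reduction needs; as written, your proposal defers exactly the step that carries the mathematical content.
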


\begin{proof}Let $T(n)$ be the length $n$ prefix of the infinite Thue-Morse word on the alphabet $\{\syma,\symb\}$. Let $k_1,...,k_p$ be a set of distinct positive integers, and consider strings of the form $S = T(k_1)|_1T(k_2)|_2\cdots T(k_{p-1})|_{p-1}T(k_p)$, where $|_i$'s are unique separators and $k_1$ is the largest of the $k_i$. Since the sequences $T(k_i)$ are cube-free \cite{AlloucheShallit_ThueMorse}, there is no asymptotic difference in the size of the smallest SLP and the smallest ISLP (similarly to Lemma \ref{lem:fib}) for the string $S$. Hence, $g_{id(d)} = \Theta(g)$ in this family. It has been proven that $g = \Omega(\log^2 k_1 /\log\log k_1)$ and $z = \bo(\log k_1)$ for some specific sets of integers where $p = \Theta(k_1)$ \cite{BGLP2018}. Thus, the result follows.
\end{proof}

One thing that makes ISLPs robust is that they are not very sensitive to reversals, morphism application, or edit operations (insertions, deletions, and substitutions of a single character). The measure $g_{it(d)}$ behaves similarly to SLPs in this matter, for which it has been proved that $g(T') \le 2g(T)$ after an edit operation that converts $T$ to $T'$ \cite{AFI2023}, and that $g(\varphi(T)) \le g(T) + c_{\varphi}$ with $c_{\varphi}$ a constant depending only on the morphism $\varphi$ \cite{FRMU2023}. This makes $g_{it(d)}$ much more robust to string operations than measures like $r$ and $r_\$$, which are highly sensitive to all these transformations \cite{GILPST2021,GILRSU2023,FRMU2023,AFI2023}.

\bigskip\begin{lemma} \label{lem:edit}
Let $G$ be a $d$-ISLP generating $T$. Then there exists a $d$-ISLP of size $|G|$ generating the reversed text $T^R$. Let $\varphi$ be a morphism. Then there exists a $d$-ISLP of size $|G| + c_\varphi$ generating the text $\varphi(T)$, where $c_\varphi$ is a constant depending only on $\varphi$. Moreover, there exists a $d$-ISLP of size at most $\bo(|G|)$ generating $T'$ where $T$ and $T'$ differ by one edit operation.
\end{lemma}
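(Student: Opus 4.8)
The plan is to prove each of the three claims of Lemma~\ref{lem:edit} separately, since they concern three different string operations (reversal, morphism application, and a single edit), and in each case the strategy is to take the given $d$-ISLP $G$ for $T$ and directly manipulate its rules to produce a $d$-ISLP for the transformed string. For the \emph{reversal} claim, I would reverse each rule locally. A binary rule $A \rightarrow BC$ becomes $A \rightarrow B'C'$ reversed, i.e., $A \rightarrow C'B'$ where primes denote the (recursively defined) reversed variables; a terminal rule is unchanged. The interesting case is an iteration rule $A \rightarrow \prod_{i=k_1}^{k_2} B_1^{i^{c_1}} \cdots B_t^{i^{c_t}}$: its reversal is obtained by reversing the outer product order and the inner factor order simultaneously, yielding a rule of the same shape $A' \rightarrow \prod_{i=k_2}^{k_1} (B_t')^{i^{c_t}} \cdots (B_1')^{i^{c_1}}$, where the convention that $k_1 > k_2$ reverses the iteration direction is exactly what makes this work. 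Since each reversed rule has the same size as the original and the degree $d$ is preserved, the resulting $d$-ISLP has size $|G|$.

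For the \emph{morphism} claim, I would mimic the known SLP construction of Fici et al.~\cite{FRMU2023}: introduce, for each terminal $a$ appearing in $T$, a set of new variables generating $\varphi(a)$ via a small fixed sub-grammar, and then reinterpret every terminal rule $A \rightarrow a$ of $G$ as pointing to the root of the sub-grammar for $\varphi(a)$. All binary and iteration rules of $G$ are left structurally untouched, because a morphism commutes with concatenation and hence with both the product and the power operations inside an iteration rule: applying $\varphi$ distributes over $\gexp(B_1)^{i^{c_1}}\cdots$. The only additional cost is the fixed sub-grammars for the images $\varphi(a)$, whose total size is a constant $c_\varphi$ depending only on $\varphi$ (the sum over the finitely many symbols in the alphabet of the grammar sizes of their images). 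This gives size $|G| + c_\varphi$ and preserves degree $d$.

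For the \emph{edit} claim, I would first convert $G$ into an equivalent $d$-ISLP whose right-hand sides reference at most a constant number of positions touched by the edit, and then patch. The cleanest route is to use accessibility: since the edit affects a single text position $p$, I would locate the $\bo(\log n)$-length root-to-leaf path to position $p$ (after balancing $G$ via Theorem~\ref{thm:balancing}, or directly in $G$) and rewrite only the variables along that path, splitting each affected iteration rule at the boundary induced by $p$ into a constant number of iteration and ordinary rules, then reassembling the prefix up to $p$, the edited symbol, and the suffix after $p$. Because only the $\bo(\log n)$ ancestors of position $p$ need new copies, and each contributes $\bo(1)$ new rules of $\bo(1)$ size (an iteration rule split at an internal point yields at most two shorter iteration rules plus $\bo(1)$ glue rules), the total blowup is $\bo(|G|)$; degree $d$ is again preserved since we only truncate existing iteration rules.

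The main obstacle I expect is the edit case, specifically splitting an iteration rule at an arbitrary internal position. Unlike a plain $A \rightarrow B^t$ rule, whose expansion is a uniform power and is trivial to split into $B^{t_1}$ and $B^{t_2}$, an iteration rule $\prod_{i=k_1}^{k_2} B_1^{i^{c_1}}\cdots B_t^{i^{c_t}}$ has a highly non-uniform block structure: the length of the $i$-th outer block is a polynomial in $i$, so the edit position may fall in the middle of some block $i_0$, requiring me to express the product $\prod_{i=k_1}^{i_0-1}$ as one iteration rule, handle the partially-consumed block $i_0$ with $\bo(t)$ ordinary and run-length rules, and express $\prod_{i=i_0+1}^{k_2}$ as another iteration rule. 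I will need to verify that the arithmetic of locating $i_0$ and the offset within its block can be done so that the number and size of the resulting rules stays $\bo(|G|)$ overall, rather than $\bo(t)$ per affected rule in a way that could accumulate; delegating the heavy lifting to the SLP edit bound of Amir et al.~\cite{AFI2023} applied to an unfolded-but-balanced version of $G$ near position $p$ is the pragmatic way to keep this bounded.
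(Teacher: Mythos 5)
Your reversal and morphism constructions are exactly the paper's: reverse every right-hand side (and the factor order inside iteration rules) while swapping $k_1$ and $k_2$, exploiting the convention that $k_1>k_2$ iterates downward, and redirect each terminal rule to a constant-size sub-grammar for $\varphi(a)$. Both parts are correct and preserve the degree $d$.

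The gap is in the edit case. Your core idea---follow the root-to-leaf path to the edited position and rewrite only the rules on that path, splitting each affected iteration rule---is precisely the paper's approach, but you leave the decisive accounting step unresolved, and your fallback is invalid. On the accounting: as you concede in your obstacle paragraph, splitting $A \rightarrow \prod_{i=k_1}^{k_2} B_1^{i^{c_1}}\cdots B_t^{i^{c_t}}$ at an internal position costs $\bo(t)$, not the ``$\bo(1)$ new rules of $\bo(1)$ size'' you claim earlier: one needs rules for the blocks $i<i_0$, the prefix $B_1\cdots B_{r-1}$ of block $i_0$, the copies of $B_r$ before and after the edited one, the suffix $B_{r+1}\cdots B_t$, and the blocks $i>i_0$ (the paper's $A_1,\ldots,A_6$ plus a glue rule, of total size $6t+21 \le 7(2t+2)$). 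Your fear of accumulation then dissolves by a single observation you never make: on one root-to-leaf path of the derivation tree, every variable occurs at most once (no variable's derivation can reach itself), so each rule of $G$ is split at most once, and the total new size is $\sum_{A \text{ on path}} \bo(|\mathrm{rhs}(A)|) = \bo(|G|)$; the paper gets the concrete constant $8|G|$ this way. In particular, balancing via Theorem~\ref{thm:balancing} is unnecessary---the path length is irrelevant to the size bound, only the distinctness of the rules on it matters. Finally, delegating to the SLP edit bound of Amir et al.~\cite{AFI2023} on an ``unfolded-but-balanced version of $G$ near position $p$'' does not work: unfolding iteration rules into SLP rules can blow the size up to $\Theta(n)$ (that is the entire point of ISLPs), and $g(T')\le 2g(T)$ is a statement about the smallest SLP, so it gives no bound relative to $|G|$. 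The direct per-rule splitting, closed with the occurs-once observation, is both necessary and sufficient.
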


\begin{proof}For the first claim,  note that reversing all the SLP rules and expressions inside the special rules, and swapping the values $k_1$ and $k_2$ in each special rule is enough to obtain a $d$-ISLP of the same size generating $T^R$. 

For the second claim, we replace rules of the form $A \rightarrow \syma$ with $A \rightarrow \varphi(\syma)$, yielding a grammar of size less than $|G| + \sum_{a \in \Sigma}|\varphi(a)|$. Then we replace these rules with binary rules, which asymptotically do not increase the size of the grammar.

For the edit operations, we proceed as follows. Consider the derivation tree of the ISLP, and the path from the root to the character we want to substitute, delete, or insert a character before or after. Then, we follow this path in a bottom up manner, constructing a new variable $A'$ for each node $A$ we visit. We start at some $A \rightarrow \syma$, so we construct $A' \rightarrow x$ where either $x = \symc$ or $x = \syma\symc$ or $x= \symc\syma$ or $x = \varepsilon$ depending on the edit operation. If we reach a node $A \rightarrow BC$ going up from $B$ (so we already constructed $B'$), we construct a node $A' \rightarrow B'C$ (analogously if we come from $C$).
If we reach a node $A \rightarrow \prod_{i=k_1}^{k_2} B_1^{i^{c_1}}\dots B_t^{i^{c_t}}$ going up from a specific $B_r$ with $r \in [1\dd t]$ (so we already constructed $B_r'$) at the $k$-th iteration of the product with $k_1 \le k \le k_2$ and being the $q$-th copy of $B_r$ inside $B_r^{k^{c_r}}$, then we construct the following new rules 
\begin{align*}
&A_1  \rightarrow \prod_{i=k_1}^{k-1}B_1^{i^{c_1}}\dots B_t^{i^{c_t}},\, A_2  \rightarrow \prod_{i=k}^{k}B_1^{i^{c_1}}\dots B_{r-1}^{i^{c_{r-1}}},\, A_3  \rightarrow \prod_{i=1}^{{q-1}}B_r^{i^0}, \\ 
&A_4  \rightarrow \prod_{i=q+1}^{k^{c_r}}B_r^{i^0},\,
A_5  \rightarrow \prod_{i=k}^{k}B_{r+1}^{i^{c_{r+1}}}\dots B_t^{i^{c_t}},\,
A_6  \rightarrow \prod_{i=k+1}^{k_2}B_1^{i^{c_1}}\dots B_t^{i^{c_t}}\\
&A' \rightarrow A_1A_2A_3B_r'A_4A_5A_6
\end{align*}
which are equivalent to $A$ (except by the modified, inserted, or deleted symbol) and sum to a total size of at most $6t + 21$. As $t \geq 1$, it holds that $(6t + 21)/(2t+2) \le 7$. After finishing the whole process, we obtain a $d$-ISLP of size at most $8|G|$. Note that this ISLP contains $\varepsilon$-rules. It also contains some non-binary SLP rules, which can be transformed into binary rules, at most doubling the size of the grammar.  
\end{proof}

\section{Accessing ISLPs}

We have shown that $g_{it(d)}$ breaks the lower bound $\delta$ already for $d \ge 1$. We now show that the measure is accessible. Concretely, we will prove the following result.

\bigskip\begin{theorem}
Let $T[1\dd n]$ be generated by a $d$-ISLP $G$ of height $h$. Then, we can build in time $\bo((|G|+d) d \lceil d\log d/\log n\rceil)$ and space $\bo(|G|+d\lceil d\log d/\log n\rceil)$ a data structure of size $\bo(|G|)$ that extracts any substring of $T$ of length $\lambda$ in time $\bo(\lambda + (h+\log n+d)d\lceil d\log d/\log n\rceil)$ on a RAM machine of $\Theta(\log n)$ bits, using $\bo(h+d\lceil d\log d/\log n\rceil)$ additional words of working space.
\end{theorem}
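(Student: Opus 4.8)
The plan is to build a data structure that, given a $d$-ISLP of height $h$, allows extracting arbitrary substrings by navigating the derivation tree efficiently, where the only nonstandard obstacle is handling the iteration rules $A \rightarrow \prod_{i=k_1}^{k_2} B_1^{i^{c_1}}\cdots B_t^{i^{c_t}}$. For ordinary SLP and run-length rules the standard approach applies: precompute the expansion length $|\gexp(A)|$ of every variable, and to extract $T[p\dd p+\lambda-1]$ descend from the root, at each node deciding which child contains the target position by comparing $p$ against cumulative child lengths, then recursing; once inside the desired region one traverses the subtree left-to-right emitting the $\lambda$ terminals. The cost of locating the starting leaf is proportional to the height $h$ of the access path plus the per-node branching cost, and thereafter each emitted symbol costs amortized $\bo(1)$, giving the additive $\lambda$ term.

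First I would precompute, for every variable $A$, the length $|\gexp(A)|$, which takes $\bo(|G|+t(G))$ time in general; for an iteration rule the expansion length is $\sum_{i=k_1}^{k_2}\sum_{j=1}^t i^{c_j}\,|\gexp(B_j)|$, a sum of power sums $\sum_{i} i^{c}$ that I can evaluate in closed form. The key subproblem is then \emph{local navigation inside a single iteration rule}: given an offset $o$ into $\gexp(A)$, I must determine which iteration index $i \in [k_1\dd k_2]$, which block $B_j$, and which of the $i^{c_j}$ copies of $\gexp(B_j)$ contains position $o$. This requires, for each fixed iteration $i$, the partial sums $\sum_{j'\le j} i^{c_{j'}}|\gexp(B_{j'})|$, and across iterations the prefix sums $F(m)=\sum_{i=k_1}^{m}\sum_{j} i^{c_j}|\gexp(B_j)|$. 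Because $F(m)$ is a fixed polynomial in $m$ of degree at most $d+1$ with coefficients built from the $|\gexp(B_j)|$, I can locate the right iteration index by binary search on $m$ (evaluating $F$ at each probe), and then locate the block and copy by a second search within that iteration. This is exactly where the curious build/query terms $\lceil d\log d/\log n\rceil$ and the $d$ factors enter: evaluating these degree-$d$ polynomials with $\Theta(\log n)$-bit words requires multiprecision arithmetic on numbers of magnitude up to $n$, whose coefficients take $\bo(d\lceil d\log d/\log n\rceil)$ words to store and comparable time to manipulate, and the polynomial coefficients are precomputed once per iteration rule in $\bo((|G|+d)d\lceil d\log d/\log n\rceil)$ total time.

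The main obstacle is this arithmetic bookkeeping inside iteration rules: the sums $\sum_i i^{c}$ can be as large as $k_2^{d+1}$, which may far exceed a single machine word, so I cannot treat length comparisons as $\bo(1)$ operations as in a plain SLP. The plan is to represent each relevant power-sum polynomial by its $\bo(d)$ coefficients, each a big integer occupying $\bo(\lceil d\log d/\log n\rceil)$ words (since $i^{c_j}|\gexp(B_j)| \le \mathrm{poly}(n)$ but intermediate coefficients carry factors like $d!$ from Faulhaber-type formulas, contributing the $\log d$), and to perform each evaluation and comparison in $\bo(d\lceil d\log d/\log n\rceil)$ time; the binary searches over $[k_1\dd k_2]\subseteq[1\dd n]$ add the $\bo(\log n)$ factor, and each cost is multiplied by the $\bo(d)$ blocks traversed, yielding the stated $(h+\log n+d)d\lceil d\log d/\log n\rceil$ navigation term per access.

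Finally I would assemble the pieces: the data structure stores $|\gexp(A)|$ for every variable together with the precomputed polynomial coefficients for each iteration rule, occupying $\bo(|G|)$ words for the lengths plus $\bo(d\lceil d\log d/\log n\rceil)$ words of scratch per active iteration rule on the access path, which accounts for the additional working space; the $\bo(|G|)$ final size follows because we need only the expansion lengths and the original rules, not the auxiliary polynomial tables, which can be recomputed on demand while descending. Extraction then consists of one root-to-leaf descent of cost $\bo((h+\log n+d)d\lceil d\log d/\log n\rceil)$ to find the first symbol, followed by an in-order emission of $\lambda$ symbols at amortized $\bo(1)$ each, giving the claimed total.
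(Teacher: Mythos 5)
Your high-level strategy coincides with the paper's: store $|\gexp(A)|$ for every variable, reduce navigation inside an iteration rule to evaluating power-sum polynomials via Faulhaber/Bernoulli identities with multiprecision arithmetic costing $\bo(d\lceil d\log d/\log n\rceil)$ per evaluation, locate first the iteration index and then the block, and emit the $\lambda$ symbols while returning from the recursion. However, your cost accounting has a genuine gap: you never explain why the searches cost $\bo(\log n)$ probes \emph{in total over the whole descent} rather than per iteration rule encountered. A root-to-leaf path can cross up to $h$ iteration rules, so with an unstructured binary search at each one the probe count is $\bo(h\log n)$ (and each probe costs $\bo(d\lceil d\log d/\log n\rceil)$), which gives $\bo(h\log n\, d\lceil d\log d/\log n\rceil)$ --- strictly worse than the claimed $\bo((h+\log n+d)\,d\lceil d\log d/\log n\rceil)$. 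The paper's proof hinges on organizing both searches at a node $A$ so that the number of probes is $\bo(1+\log(|\gexp(A)|/|\gexp(B_r)|))$, where $B_r$ is the child into which the descent continues; these quantities then telescope along the path to $\bo(h+\log n)$ probes overall. For the search over the iteration index $i\in[k_1\dd k_2]$ this can be salvaged by observing that $k_2-k_1+1 \le |\gexp(A)|/|\gexp(B_r)|$ (every block contains a copy of every $B_j$), but you do not make that observation, and for the search over the block index $r\in[1\dd t]$ a plain binary search takes $\bo(\log t)$ probes, which does \emph{not} telescope (e.g., when the child descended into dominates the block length while $t$ is large) and can accumulate to $\bo(h\log|G|)$ probes.

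The second gap is the machinery for evaluating block-prefix lengths. Your second search must evaluate $f_r(i)=\sum_{j\le r} i^{c_j}|\gexp(B_j)|$ at arbitrary probed values of $r$; computing this naively costs $\bo(r)$ operations per probe (up to $\bo(t)$ per rule, breaking the query bound), while precomputing the degree-$\le d$ coefficient vectors of all $t$ prefixes costs $\bo(td)$ words (breaking the $\bo(|G|)$ size bound). The paper resolves exactly this tension with the arrays $S_A$ and $C_A$ plus chunked predecessor values, which group terms by exponent class and support evaluation of any $f_r(i)$ in $\bo(d)$ operations within $\bo(t)$ words per rule; your proposal contains no equivalent mechanism. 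Relatedly, your statement that the auxiliary tables ``can be recomputed on demand while descending'' contradicts the query bound (rebuilding them costs $\Omega(t)$ per visited rule), and your scratch space of $\bo(d\lceil d\log d/\log n\rceil)$ words \emph{per active iteration rule} would sum to $\bo(h\,d\lceil d\log d/\log n\rceil)$ over the access path, exceeding the claimed $\bo(h+d\lceil d\log d/\log n\rceil)$; the multiprecision (Bernoulli) table must be a single shared structure.
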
\bigskip

Before proving this theorem, let us specialize its result by bounding $h$ and $d$. First, we show we can always make $h=\bo(\log n)$.

\bigskip\begin{lemma} \label{lem:balancing-islp}
Given a $d$-ISLP $G$ generating a string $T[1\dd n]$, it is possible to construct a $d'$-ISLP $G'$ of size $\bo(|G|)$ that generates $T$, for some $d' \le d$, with height $h' = \bo(\log n)$. The construction requires $\bo((|G|+d) d \lceil d\log d/\log n\rceil)$ time and $\bo(|G|+d\lceil d\log d/\log n\rceil)$ space.
\end{lemma}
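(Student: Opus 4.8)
The plan is to derive $G'$ in three stages: first rewrite $G$ locally so that, viewed as a GSLP, it becomes \emph{balanceable}; then apply Theorem~\ref{thm:balancing} verbatim; and finally bound the parameter governing its running time, which for ISLPs amounts to evaluating power sums $\sum_{i=k_1}^{k_2} i^{c}$. I would treat an ISLP as a GSLP whose special rules are exactly the iteration rules, and check the balanceability condition rule by rule.

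The key observation is that an iteration rule $A \rightarrow \prod_{i=k_1}^{k_2} B_1^{i^{c_1}} \cdots B_t^{i^{c_t}}$ puts each variable $B$ into $\mathtt{OUT}(x)$ exactly $\sum_{j:\,B_j=B}\sum_{i=k_1}^{k_2} i^{c_j}$ times. When the product ranges over at least two values of $i$ (that is, $|k_1-k_2|\ge 1$), every term $i^{c_j}\ge 1$ forces this count to be at least $2$, so the rule already satisfies the balanceability condition and needs no change. The only offending rules are the degenerate ones with $k_1=k_2=k$, whose expansion is simply $B_1^{k^{c_1}}\cdots B_t^{k^{c_t}}$; here a variable with $k^{c_j}=1$ occurs once. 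I would replace each such rule by a plain (non-special) rule $A\to C_1\cdots C_t$ together with, for each $j$, a run-length rule $C_j\to B_j^{k^{c_j}}$ whenever $k^{c_j}\ge 2$ and an inlining $C_j:=B_j$ whenever $k^{c_j}=1$. Each run-length rule is a degree-$0$ iteration rule in which $B_j$ occurs $k^{c_j}\ge 2$ times, hence balanceable, while the plain rule is processed by the balancing algorithm itself. Since $k^{c_j}\le |\gexp(A)|\le n$, the exponents are computable in $\bo(1)$-word arithmetic; the rewriting preserves the generated string, increases the size by $\bo(t)$ per rule, and introduces no degree exceeding $d$. The result is a balanceable GSLP $G_1$ of size $\bo(|G|)$ and degree at most $d$.

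Applying Theorem~\ref{thm:balancing} to $G_1$ produces an equivalent GSLP $G'$ of size $\bo(|G_1|)=\bo(|G|)$ and height $\bo(\log n)$. Because the balancing procedure leaves special rules untouched and converts every non-special rule into binary SLP rules, $G'$ consists of binary and terminal rules together with the iteration rules inherited from $G_1$; that is, $G'$ is a $d'$-ISLP with $d'\le d$. The height bound transfers directly, since the height notion of Theorem~\ref{thm:balancing}, in which a special rule $A\to x$ spans a single level from $A$ to the variables of $\mathtt{OUT}(x)$, coincides with the ISLP height obtained by unfolding iteration rules. It remains only to bound the quantities $t(G_1)$ and $s(G_1)$ of Theorem~\ref{thm:balancing}: the time and space to compute, for every special rule, the expansion length $|\gexp(A)|$ and the multiset of pairs $(B,|\mathtt{OUT}(x)|_B)$.

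This last step is the crux. Both quantities are weighted power sums, $|\gexp(A)|=\sum_{j}|\gexp(B_j)|\,S_{c_j}(k_1,k_2)$ and $|\mathtt{OUT}(x)|_B=\sum_{j:\,B_j=B} S_{c_j}(k_1,k_2)$, where $S_c(k_1,k_2)=\sum_{i=k_1}^{k_2} i^{c}$. As the range can be $\Theta(n)$, I would not sum term by term but evaluate $S_c(1,m)$ through Faulhaber's formula as a degree-$(c+1)$ polynomial in $m$ whose coefficients are built from Bernoulli numbers and binomials, taking $S_c(k_1,k_2)=S_c(1,k_2)-S_c(1,k_1-1)$; the Bernoulli numbers $B_0,\dots,B_d$ are precomputed once via the standard recurrence. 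Each iteration rule then requires $\bo(t)$ polynomial evaluations of degree at most $d+1$, so $\bo(|G|)$ evaluations in total. The main obstacle, and the source of the stated bounds, is the exact big-integer arithmetic: although the final value $S_{c_j}(k_1,k_2)\le|\gexp(A)|\le n$ fits in $\bo(1)$ words, the intermediate Faulhaber terms carry the Bernoulli numerators, whose bit-length is $\Theta(d\log d)$, i.e.\ $\lceil d\log d/\log n\rceil$ machine words. Hence each arithmetic operation costs $\bo(\lceil d\log d/\log n\rceil)$, a single polynomial evaluation costs $\bo(d\lceil d\log d/\log n\rceil)$, and the Bernoulli recurrence uses $\bo(d^2)$ such operations. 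Summing the per-rule evaluations over the grammar and adding the one-time precomputation gives $t(G_1)=\bo((|G|+d)d\lceil d\log d/\log n\rceil)$, while storing the grammar and the coefficient table gives $s(G_1)=\bo(|G|+d\lceil d\log d/\log n\rceil)$. Plugging these into the $\bo(|G_1|+t(G_1))$ time and $\bo(|G_1|+s(G_1))$ space of Theorem~\ref{thm:balancing} yields exactly the claimed bounds; the care needed is precisely in verifying the word-counts of the intermediate integers and in handling the downward case $k_1>k_2$ symmetrically.
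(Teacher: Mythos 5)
Your proposal is correct and follows essentially the same route as the paper: view the ISLP as a GSLP, observe that rules with $k_1 \neq k_2$ are automatically balanceable, fix the degenerate $k_1=k_2$ rules, apply Theorem~\ref{thm:balancing}, and bound the occurrence-counting cost via Faulhaber's formula with Bernoulli numbers precomputed in $\bo(d^2)$ operations on $\bo(\lceil d\log d/\log n\rceil)$-word rationals. The only (immaterial) difference is the local fix for $k_1=k_2$: you replace each power $B_j^{k^{c_j}}$ with $k^{c_j}\ge 2$ by a fresh degree-$0$ run-length nonterminal and concatenate everything with a plain rule, whereas the paper splits the rule at the symbols occurring once and keeps the remaining segments as special rules---both yield a balanceable GSLP of size $\bo(|G|)$ and degree at most $d$.
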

\begin{proof}
ISLPs are GSLPs: they allow rules of the form $A \rightarrow \prod_{i=k_1}^{k_2} B_1^{i^{c_1}}\cdots B_t^{i^{c_t}}$ of size $2+2t$, and a simple program of size $\bo(t)$ writes the corresponding right-hand symbols (a sequence over $\{ B_1,\ldots,B_t\}$) explicitly. Note that, if $k_1 \neq k_2$ for every special rule, then the corresponding GSLP is balanceable for sure, as no symbol in any output sequence can appear exactly once. If $k_1=k_2$ for some special rule, instead, the output may have unique symbols $B_j^{i^0}$ or $B_j^{1^{c_j}}$. In this case we can split the rule at those symbols, in order to ensure that they do not appear in special rules, without altering the asymptotic size of the grammar. For example, $A\rightarrow B_1^{i^2} B_2^{i^0} B_3^{i^3} B_4^{i} B_5^{i^0}$ (i.e., $k_1=k_2=i$ for some $i>1$) can be converted into $A \rightarrow A_1 A_2$, $A_1 \rightarrow A_3 B_2$, $A_2 \rightarrow A_4 B_5$, $A_3 \rightarrow B_1^{i^2}$, $A_4 \rightarrow B_3^{i^3} B_4^i$. The case $k_1=k_2=1$ corresponds to $A \rightarrow B_1 \cdots B_t$ and can be decomposed into normal binary rules within the same asymptotic size.

We can then apply Theorem~\ref{thm:balancing}. Note the exponents of the special rules $A \rightarrow x$ are retained in general, though some can disappear in the case $k_1=k_2=1$. Thus, the parameter $d'$ of the balanced ISLP satisfies $d' \le d$.

The time to run the balancing algorithm is linear in $|G|$, except that we need to count, in the rules $A \rightarrow \prod_{i=k_1}^{k_2} B_1^{i^{c_1}}\cdots B_t^{i^{c_t}}$, how many occurrences of each nonterminal are produced. If we define 
\begin{equation} \label{eq:pc}
p_c(k)~=~\sum_{i=1}^k i^c,
\end{equation}
then $B_j$ is produced $p_{c_j}(k_2)-p_{c_j}(k_1-1)$ times on the right-hand side of $A$.

Computing $p_c(k)$ in the obvious way takes time $\Omega(k)$, which may lead to a balancing time proportional to the length $n$ of $T$. In order to obtain time proportional to the grammar size $|G|$, we need to process the rule for $A$ in time proportional to its size, $\bo(t)$. An alternative formula\footnote{See Wolfram Mathworld's {\tt https://mathworld.wolfram.com/BernoulliNumber.html}, Eqs.~(34) and (47).} computes $p_c(k)$ using rational arithmetic (note $c \le d$):
$$ p_c(k) ~~=~~ k^c + \frac{1}{c+1} \cdot \sum_{j=0}^c {c+1 \choose j}\, b_j \cdot k^{c+1-j}.$$
The formula requires $\bo(c) \subseteq \bo(d)$ arithmetic operations once the numbers $b_j$ are computed. Those $b_j$ are the Bernoulli (rational) numbers.  All the Bernoulli numbers from $b_0$ to $b_d$ can be computed in $\bo(d^2)$ arithmetic operations using the recurrence
$$ \sum_{j=0}^{d} {d+1 \choose j}\, b_j ~=~ 0,$$
from $b_0=1$. The numerators and denominators of the rationals $b_j$ fit in $\bo(j \log j) = \bo(d \log d)$ bits,\footnote{See {\tt https://www.bernoulli.org}, sections ``Structure of the denominator'', ``Structure of the nominator'', and ``Asymptotic formulas''.}
so they can be operated in time $\bo(\lceil d\log d/\log n\rceil)$ in a RAM machine with word size $\Theta(\log n)$. 

Therefore, once we build the Bernoulli rationals $b_j$ in advance, in time $\bo(d^2\lceil d\log d/\log n\rceil)$, the processing time for a rule of size $\bo(t)$ is $\bo(t\,d \lceil d\log d/\log n\rceil)$, which adds up to $\bo(|G|\,d \lceil d\log d/\log n\rceil)$ for all the grammar rules. Storing the precomputed values $b_j$ during construction requires $d \lceil d\log d/\log n\rceil$ extra space.
\end{proof}

The case $d=\bo(1)$ deserves to be stated explicitly because it yields near-optimal substring extraction time, and because it already breaks the space lower bound $\Omega(\delta)$. 

\bigskip\begin{corollary} \label{cor:extract1}
Let $T[1\dd n]$ be generated by a $d$-ISLP $G$, with $d=\bo(1)$. Then, we can build in $\bo(|G|)$ time and space a data structure of size $\bo(|G|)$ that extracts any substring of $T$ of length $\lambda$ in time $\bo(\lambda + \log n)$ on a RAM machine of $\Theta(\log n)$ bits. 
\end{corollary}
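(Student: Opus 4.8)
The plan is to obtain Corollary~\ref{cor:extract1} as an immediate specialization of the main extraction theorem, using the balancing result of Lemma~\ref{lem:balancing-islp} to eliminate the dependence on the height $h$. First I would invoke Lemma~\ref{lem:balancing-islp} to replace the given $d$-ISLP $G$ by an equivalent $d'$-ISLP $G'$ with $d' \le d = \bo(1)$, size $\bo(|G|)$, and height $h' = \bo(\log n)$. The key point is that when $d = \bo(1)$, every factor of the form $\lceil d \log d / \log n \rceil$ collapses to $\bo(1)$ for all sufficiently large $n$ (indeed $d \log d = \bo(1)$, so the ceiling is $1$ once $\log n \ge d \log d$). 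Consequently the construction time and space stated in Lemma~\ref{lem:balancing-islp}, namely $\bo((|G|+d)\,d\lceil d\log d/\log n\rceil)$ time and $\bo(|G|+d\lceil d\log d/\log n\rceil)$ space, both simplify to $\bo(|G|)$.

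Next I would apply the main extraction theorem to the balanced ISLP $G'$. Its build time $\bo((|G'|+d')\,d'\lceil d'\log d'/\log n\rceil)$ and space $\bo(|G'|+d'\lceil d'\log d'/\log n\rceil)$ likewise reduce to $\bo(|G'|) = \bo(|G|)$ under $d' = \bo(1)$, and the resulting data structure has size $\bo(|G'|) = \bo(|G|)$. For the query time, I substitute $h = h' = \bo(\log n)$ and $d' = \bo(1)$ into the theorem's bound $\bo(\lambda + (h+\log n+d)\,d\lceil d\log d/\log n\rceil)$: the parenthesized factor becomes $\bo(\log n)$ and the ceiling factor becomes $\bo(1)$, giving extraction time $\bo(\lambda + \log n)$, exactly as claimed.

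The only point requiring a word of care is the interaction between the two asymptotic regimes. The balancing step changes $d$ to some $d' \le d$, so one must confirm that the hypothesis $d = \bo(1)$ is inherited by $d'$; this is immediate since $d' \le d$. I would also note explicitly that all the ceilings $\lceil d\log d/\log n\rceil$ are $\bo(1)$ only for large enough $n$, and that the finitely many small-$n$ cases are absorbed into the $\bo$-notation (a constant-size text admits a constant-size structure and constant-time extraction), so no genuine obstacle arises here. The main substance of the corollary lives entirely in the two results being combined; the proof itself is a routine plugging-in of $d = \bo(1)$ and $h = \bo(\log n)$, and I would keep it to a few sentences rather than re-deriving any bounds.
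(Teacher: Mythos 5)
Your proposal is correct and follows exactly the route the paper intends: the corollary is stated as an immediate specialization obtained by first balancing via Lemma~\ref{lem:balancing-islp} (so $h=\bo(\log n)$ and $d'\le d$) and then plugging $d=\bo(1)$ into the main extraction theorem, whereupon every $d\lceil d\log d/\log n\rceil$ factor collapses to $\bo(1)$. The paper gives no separate proof beyond this substitution, so your argument matches it in substance and level of detail.
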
\bigskip

Note that the corollary achieves $\bo(\log n)$ access time for a single symbol. Verbin and Yu \cite{VY13} showed that any data structure using space $s$ to represent $T[1\dd n]$ requires time $\Omega(\log^{1-\epsilon} n/\log s)$ time, for {\em any} $\epsilon>0$. Since even SLPs can use space $s=\bo(\log n)$ on some texts, they cannot always offer access time $\bo(\log^{1-\epsilon} n)$ for any constant $\epsilon$. This restriction applies to even smaller grammars like RLSLPs and $d$-ISLPs for any $d$.

For $d$ larger than $\bo(\log n)$, the next lemma  shows that we can always force $d$ to be $\bo(\log n)$ without asymptotically increasing the size of the grammar. From now on in the paper, we will disregard for simplicity the case $k_1 > k_2$ in the rules $A \rightarrow \Pi_{i=k_1}^{k_2} B_1^{i^{c_1}}\cdots B_t^{i^{c_t}}$, as their treatment is analogous to that of the case $k_1 \le k_2$.

\bigskip\begin{lemma} \label{lem:dlog}
If a $d$-ISLP $G$ generates $T[1\dd n]$, then there is also a $d'$-ISLP $G'$ of the same size that generates $T$, for some $d' \le \log_2 n$.
\end{lemma}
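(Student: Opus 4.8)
The plan is to show that any iteration rule with an exponent $c_r > \log_2 n$ for some $r$ must be ``degenerate'' in a way that lets us remove the offending high exponent without changing the expanded string or blowing up the size. Recall that a rule $A \rightarrow \prod_{i=k_1}^{k_2} B_1^{i^{c_1}}\cdots B_t^{i^{c_t}}$ expands $B_r$ a total of $p_{c_r}(k_2)-p_{c_r}(k_1-1)$ times, where $p_c(k)=\sum_{i=1}^k i^c$ as in Equation~\eqref{eq:pc}. The key observation is that if $c_r \ge 1$ and $k_2 \ge 2$, then the factor $\gexp(B_r)^{k_2^{c_r}}$ appears as a contiguous block in $\gexp(A)$, so $\gexp(B_r)^{k_2^{c_r}}$ is a substring of $T$; since its length is at least $k_2^{c_r} \ge 2^{c_r}$, having $c_r > \log_2 n$ would force a substring strictly longer than $n$, a contradiction. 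Hence whenever an exponent exceeds $\log_2 n$, the index range must collapse.

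First I would make this precise by case analysis on the index range of each special rule. If $k_1=k_2=k$, then every $B_j$ contributes $B_j^{k^{c_j}}$, a single power, and the whole iteration rule degenerates into an ordinary product of powers $\prod_j B_j^{k^{c_j}}$; each such power is just a run-length rule $A_j \rightarrow B_j^{k^{c_j}}$, which is a $0$-ISLP rule and carries no exponent $c_j$ at all. Thus these rules can be rewritten using degree $d'=0$ within the same asymptotic size (exactly the splitting already used in the proof of Lemma~\ref{lem:balancing-islp}). If instead $k_2 \ge 2$ (so the range is non-trivial and $\max(k_1,k_2)\ge 2$), then for each $r$ with $c_r \ge 1$ the block $\gexp(B_r)^{k_2^{c_r}}$ is a substring of $T$, forcing $k_2^{c_r} \le n$ and hence $c_r \le \log_2 k_2 \le \log_2 n$. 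So in this case every exponent is already at most $\log_2 n$ and nothing needs to be done.

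The remaining subtlety is the boundary case $k_2 = 1$, i.e.\ $k_1=k_2=1$: then each factor is $B_j^{1^{c_j}} = B_j$ regardless of $c_j$, so the rule is really $A \rightarrow B_1 \cdots B_t$, an ordinary SLP rule, and the exponents are irrelevant and may be set to $0$. Collecting the three cases, every special rule either already has all exponents bounded by $\log_2 n$, or can be rewritten (into run-length rules, a plain concatenation, or left unchanged) so that the maximum exponent actually appearing is at most $\log_2 n$; the rewriting is local to each rule and at most increases its size by a constant factor, so the total grammar size stays $\bo(|G|)$ and the new degree $d'$ satisfies $d' \le \log_2 n$.

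I expect the main obstacle to be handling the mixed case cleanly: within a single rule one exponent $c_r$ might exceed $\log_2 n$ while $k_2 \ge 2$, which the argument above shows is \emph{impossible} rather than something to be repaired. The care needed is to verify that $\gexp(B_r)^{k_2^{c_r}}$ genuinely occurs as a \emph{contiguous} factor of $\gexp(A)$ (it does, since at iteration $i=k_2$ the factor $B_r^{k_2^{c_r}}$ is written out consecutively) and that $\gexp(B_r)$ is non-empty (true because variables expand to non-empty strings). Once contiguity and non-emptiness are established, the length bound $k_2^{c_r} \le |\gexp(A)| \le n$ is immediate, and the whole statement reduces to the clean dichotomy between degenerate ranges ($k_2=1$ or $k_1=k_2$) and genuine ranges ($k_2 \ge 2$), both of which respect $d' \le \log_2 n$.
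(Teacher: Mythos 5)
Your core argument is the same as the paper's: for any iteration rule in which the index $i=k_2\ge 2$ occurs, the power $\gexp(B_r)^{k_2^{c_r}}$ appears contiguously inside $\gexp(A)$, so $2^{c_r}\le k_2^{c_r}\le n$ and hence $c_r\le\log_2 n$; the degenerate range $k_1=k_2=1$ is repaired by setting all exponents to $0$, since the rule is then just $A\rightarrow B_1\cdots B_t$. Two blemishes, however. First, your inequality chain ``$k_2^{c_r}\le n$ and hence $c_r\le\log_2 k_2\le\log_2 n$'' is garbled: what follows is $c_r\le\log_{k_2} n\le\log_2 n$ (the conclusion is right, the middle term is not). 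Second, and more substantively, the lemma claims a $d'$-ISLP of \emph{the same} size, but your treatment of the case $k_1=k_2=k$ rewrites each such rule of size $2+2t$ into $\Theta(t)$ run-length and concatenation rules, which only preserves the size up to a constant factor---so as written you prove a weaker, asymptotic version of the statement. That rewriting is also unnecessary: when $k\ge 2$ your own second case applies verbatim (the index $i=k\ge 2$ is in the range, so every $c_j\le\log_k n\le\log_2 n$) and the rule can be left untouched; only $k_1=k_2=1$ needs repair, and there zeroing the exponents keeps the rule in iteration form with exactly the same size $2+2t$, which is precisely what the paper does. Merging your first case into the other two in this way recovers the exact-size claim.
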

\begin{proof}
For any rule $A = \prod_{i=k_1}^{k_2} B_1^{i^{c_1}}\cdots B_t^{i^{c_t}}$, any $i \in [k_1\dd k_2]$, and any $c_j$, it holds that $n \ge |\gexp(A)| \ge i^{c_j}$, and therefore $c_j \le \log_i n$, which is bounded by $\log_2 n$ for $i \ge 2$. Therefore, if $k_2 \ge 2$, all the values $c_j$ can be bounded by some $d' \le \log_2 n$. A rule with $k_1=k_2=1$ is the same as $A \rightarrow B_1\cdots B_t$, so all values $c_j$ can be set to $0$ without changing the size of the rule.
\end{proof}

This yields our general result for arbitrary $d$.

\bigskip\begin{theorem}
Let $T[1\dd n]$ be generated by an ISLP $G$. Then, we can build in time $\bo((|G|+\log n)\log n\log\log n)$ and space $\bo(|G|+\log n)$ a data structure of size $\bo(|G|)$ that extracts any substring of $T$ of length $\lambda$ in time $\bo(\lambda + \log^2 n\log\log n)$ on a RAM machine of $\Theta(\log n)$ bits, using $\bo(\log n \log \log n)$ additional words of working space.
\end{theorem}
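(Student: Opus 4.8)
The strategy is to reduce the general case to the bounded-degree machinery already developed, by first normalizing the degree and then invoking the results for $d$-ISLPs. First I would apply Lemma~\ref{lem:dlog} to replace $G$ with an equivalent $d'$-ISLP of the \emph{same} size, where $d' \le \log_2 n$; this costs nothing asymptotically and bounds the degree by a logarithm. Then I would apply Lemma~\ref{lem:balancing-islp} to this $d'$-ISLP to obtain a balanced ISLP $G''$ of size $\bo(|G|)$ and height $h''=\bo(\log n)$, still of some degree at most $d'\le\log_2 n$. The point of balancing is that the extraction theorem's running time depends on the height $h$, and only after balancing can we claim $h=\bo(\log n)$.

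The next step is to substitute the bounds $h=\bo(\log n)$ and $d=\bo(\log n)$ into the parameters of the first (general) theorem of this section, the one giving extraction time $\bo(\lambda + (h+\log n+d)d\lceil d\log d/\log n\rceil)$ and build time $\bo((|G|+d)d\lceil d\log d/\log n\rceil)$. With $d\le\log_2 n$ we have $d\log d = \bo(\log n\log\log n)$, so $\lceil d\log d/\log n\rceil = \bo(\log\log n)$; this is the key simplification that produces the $\log\log n$ factors. Plugging in, the dominant term $(h+\log n+d)\,d\,\lceil d\log d/\log n\rceil$ becomes $\bo(\log n \cdot \log n \cdot \log\log n) = \bo(\log^2 n\log\log n)$, giving extraction time $\bo(\lambda+\log^2 n\log\log n)$. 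For the build time, $(|G|+d)\,d\,\lceil d\log d/\log n\rceil = \bo((|G|+\log n)\log n\log\log n)$; the build space $\bo(|G|+d\lceil d\log d/\log n\rceil)$ collapses to $\bo(|G|+\log n)$, and the working space $\bo(h+d\lceil d\log d/\log n\rceil)$ becomes $\bo(\log n\log\log n)$, matching all the claimed bounds.

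One subtlety to handle carefully is the order of operations and which size/degree bounds survive each transformation. The degree reduction of Lemma~\ref{lem:dlog} preserves size exactly, and the balancing of Lemma~\ref{lem:balancing-islp} preserves size up to a constant while only decreasing the degree; so after both steps the grammar fed to the extraction theorem has size $\bo(|G|)$ and both $h,d=\bo(\log n)$, as required. The construction cost of Lemma~\ref{lem:balancing-islp} is itself $\bo((|G|+d)d\lceil d\log d/\log n\rceil)=\bo((|G|+\log n)\log n\log\log n)$ under $d\le\log_2 n$, so it is subsumed by the build time we are claiming and does not worsen the final bound. The only genuinely non-routine point is verifying that these preprocessing costs match the target rather than dominating it; everything else is direct substitution of $d=\bo(\log n)$ into the general theorem. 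I expect the main obstacle to be purely bookkeeping: ensuring that the additive $\log n$ terms appearing in build and working space (which come from storing the precomputed Bernoulli numbers $b_0,\dots,b_{d'}$) are correctly tracked through the composition and do not silently inflate the stated bounds.
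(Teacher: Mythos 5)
Your proposal is correct and follows essentially the same route as the paper: apply Lemma~\ref{lem:dlog} to force $d' \le \log_2 n$ at no size cost, then Lemma~\ref{lem:balancing-islp} to get height $\bo(\log n)$, and finally substitute $h,d = \bo(\log n)$ (so that $\lceil d\log d/\log n\rceil = \bo(\log\log n)$) into the general $d$-ISLP extraction theorem, checking that the balancing cost is subsumed by the claimed build time. This is precisely the derivation the paper intends when it writes ``This yields our general result for arbitrary $d$,'' including your bookkeeping of the Bernoulli-number storage in the space bounds.
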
\bigskip

\subsection{Data Structures}

We define some data structures that extend ISLPs $G$ allowing us to efficiently navigate them within $\bo(|G|)$ space. 

Consider a rule $A \rightarrow \prod_{i=k_1}^{k_2} B_1^{i^{c_1}}\dots B_t^{i^{c_t}}$. The challenge is how to efficiently find the proper nonterminal $B_j^{i^{c_j}}$ on which we must recurse, while only using $\bo(t)$ space, which is proportional to the size of the rule. For example, we cannot afford a predecessor data structure on the $(k_2-k_1+1)t$ cumulative lengths of the runs the rule represents.

Though $t$ can be large, there are only $d+1$ distinct values $c_j$. We will exploit this because, per Lemma~\ref{lem:dlog}, $d$ can be made $\bo(\log n)$. 

\subsubsection{Navigating within a block}

To navigate within a ``block'' $B_1^{i^{c_1}} \cdots B_t^{i^{c_t}}$, for fixed $i$, we make use of auxiliary functions 
$$f_r(i) ~=~ \sum_{j=1}^{r}|\gexp(B_j)|\cdot i^{c_j},$$ 
for $r \in [1\dd t]$. In simple words, $f_r(i)$ computes the length of the expansion of $B_1^{i^{c_1}}\dots B_r^{i^{c_r}}$  (a prefix of the block), for a given $i$. 

We now show how to compute any $f_r(i)$ in time $\bo(d)$ using $\bo(t)$ space for each $A$.
We precompute an array $S_A[1\dd t]$ storing cumulative length information, as follows
$$ S_A[r] = \sum_{1 \le j \le r, c_j = c_r}  |\gexp(B_j)|.$$
That is, $S_A[r]$ adds up the lengths of the symbol expansions up to $B_r$ that must be multiplied by $i^{c_r}$. This array is easily built in time $\bo(t)$ once all the lengths $|\exp{\cdot}|$ have been computed, by traversing the rule left to right and maintaining in an array $L[B_j]$ the last position of each distinct nonterminal $B_j$ seen so far in the rule. Storing $L$ requires $\bo(t)$ space at construction time.

A second array, $C_A[1\dd t]$, stores the values $c_1,\ldots,c_t$. We preprocess $C_A$ to solve predecessor queries of the form
$$ \pred(A,r,c) = \max \{ j \le r,~C_A[j] = c\},$$
that is, the latest occurrence of $c$ in $C_A$ to the left of position $r$, for every $c=0,\ldots,d$. This query can be answered in $\bo(d)$ time because the elements in $C_A$ are also in $\{ 0,\ldots,d \}$: cut $C_A$ into chunks of length $d+1$, and for each chunk $C_A[(d+1)\cdot j +1 \dd (d+1)\cdot(j+1)]$ store precomputed values $\pred(A,(d+1)\cdot j,c)$ for all $c \in \{0,\ldots, d\}$. Those precomputed values are stored in $\bo(t)$ space and can be computed in $\bo(t)$ time, on a left-to-right traversal of the rule, using an array analogous to $L$, now indexed by the values $[0\dd d]$ (this requires $\bo(d)$ construction space).

To compute the values $r_c = \pred(A,r,c)$ for all $c$, find the chunk $j = \lceil r/(d+1) \rceil-1$ where $r$ belongs, initialize every $r_c = \pred(A,(d+1)\cdot j,c)$ for every $c$ (which is stored with the chunk $j$), and then scan the chunk prefix $C_A[(d+1)\cdot j+1 \dd r]$ left to right, correcting every $r_c \gets k$ if $c=C_A[k]$, for $k=(d+1)\cdot j+1 \dd r$.
Once the values $r_c$ are computed, we can  evaluate $f_r(i)$ in $\bo(d)$ time by adding up $S_A[r_c]\cdot i^c$ (because $S_A[r_c]$ adds up all $|\gexp(B_j)|$ that must be multiplied by $i^c$ in $f_r(i)$). Algorithm~\ref{alg:f_r} summarizes.

\begin{algorithm}[t]\caption{Computing $f_r(i)$ for nonterminal $A$, in time $\bo(d)$}\label{alg:f_r}
\begin{algorithmic}[1]
\Require Values $i$ and $r$, arrays $S_A$ and $C_A$, and precomputed values $\pred(A,(d+1) j,c)$ for every $j$ and $c$.
\Ensure The value $f_r(i)$.
\State $j \gets \lceil r/(d+1) \rceil-1$
\For{$c \gets 0,\ldots,d$}
    \State $r_c \gets \pred(A,(d+1)j,c)$ (this is precomputed)
\EndFor
\For{$k \gets (d+1)j+1,\ldots,r$}
    \State $c \gets C_A[k]$ 
    \State $r_c \gets k$
\EndFor
\State $s \gets 0$
\State $p \gets 1$
\For{$c \gets 0,\ldots,d$}
    \State $s \gets s+ S_A[r_c]\cdot p$ 
    \State $p \gets p \cdot i$
\EndFor
\State \Return $s$
\end{algorithmic}
\end{algorithm}

\subsubsection{Navigating between blocks}

To select the proper ``block'', we define a second function: 
$$f^+(k) ~=~ \sum_{i = k_1}^{k}f_t(i),$$ 
so that $f^+(k)$ computes the cumulative sum of the length of the whole expressions $B_1^{i^{c_1}}\cdots B_t^{i^{c_t}}$ until $i=k$. The problem is that we cannot afford storing the function $f^+(k)$ in space proportional to the rule size.  
We will instead compute $f^+(k)$ by reusing the same data structures we store for $f_r(i)$: just as in Algorithm~\ref{alg:f_r}, for each $c=0,\ldots,d$, we compute $t_c = \pred(A,t,c)$ and $s_c = S_A[t_c]$. Instead of accumulating $s_c \cdot i^c$, however, we accumulate $s_c \cdot \sum_{i=k_1}^{k} i^c = s_c \cdot (p_c(k)-p_c(k_1-1))$, where $p_c(k)$ is defined in Eq.~(\ref{eq:pc}). 

\bigskip\begin{example}
Consider the ISLP of Proposition 2, defined by the rules
$S \rightarrow \prod_{i=1}^{k_2} A^iB$, $A \rightarrow \syma$, and $B \rightarrow \symb$. The polynomials associated with the representation of the rule $S$ are $i^{c_1} = i$ and $i^{c_2} = 1$. Then, we construct the auxiliary polynomials $f_1(i) = |\gexp(A)|i^{c_1} = i$ and $f_2(i) = |\gexp(A)| i^{c_1}+ |\gexp(B)| i^{c_2} = i + 1$. Finally, we construct the auxiliary polynomial $f^+(k) = \sum_{i=1}^k f_2(i) = \sum_{i=1}^k (i+1) = \frac{1}{2} k^2 + \frac{3}{2}k$. Figure~\ref{fig:polys} shows a more complex example to illustrate $C_A$ and $S_A$.
\end{example}
\bigskip

\begin{figure}[t]
\centering
\begin{tikzpicture}[array/.style={matrix of nodes,nodes={draw, minimum size=7mm, fill=gray!00},column sep=-\pgflinewidth, row sep=1mm, nodes in empty cells, row 1/.style={nodes={draw=none, fill=none, minimum size=7mm}}}]
\matrix[array] (array) {
1 & 2 & 3 & 4 & 5 & 6 & 7 & 8 & 9\\
2  & 3  & 6  & 7 & 14 & 13 & 5 & 3 & 18\\
1  & 2  & 1  &  0 & 0 & 1 & 2 & 3 & 0\\};
\draw (array-2-1.west)--++(0:0mm) node [left] (sa) {$S_A$};
\draw (array-3-1.west)--++(0:0mm) node [left] (ca) {$C_A$};
\draw (array-1-8.east)--++(0:0mm) node [right] (f8) {$\hspace{1cm}f_{r=8}(i) = 3i^3 + 5i^2 + 13i + 14$};
\draw (f8)--++(0:0mm) node  [below=0.5cm] (ft) {$\hspace{1cm}f_{t=9}(i) = 3i^3 + 5i^2 + 13i + 18$};
\draw (ft)--++(0:0mm) node [below=0.5cm] (fp) {$\hspace{2.1cm}f^+(k) = \frac{9}{12}k^4 + \frac{38}{12}k^3 + \frac{117}{12}k^2 + \frac{304}{12}k$};
\end{tikzpicture}
\caption{Data structures built for the ISLP rule $A \rightarrow \prod_{i=1}^5B^{i}C^{i^2}D^{i}EEE^iB^{i^2}C^{i^3}D$, with $|\gexp(B)| = 2$,  $|\gexp(C)| = 3$,  $|\gexp(D)| = 4$, and $|\gexp(E)| = 7$. We show some of the polynomials to be simulated with these data structures. }
\label{fig:polys}
\end{figure}

As shown right after Eq.~(\ref{eq:pc}), we can compute $p_c(k)$ (and hence $f^+(k)$) in time $\bo(d \lceil d\log d/\log n\rceil)$ after a preprocessing of time $\bo(d^2 \lceil d\log d/\log n\rceil)$.

\subsection{Direct Access}

We start with the simplest query: given the data structures of size $\bo(|G|)$ defined in the previous sections, return the symbol $T[l]$ given an index $l$.

For SLPs with derivation tree of height $h$, the problem is easily solved in $\bo(h)$ time by storing the expansion size of every nonterminal, and descending from the root to the corresponding leaf using $|\gexp(B)|$ to determine whether to descend to the left or to the right of every rule $A \rightarrow BC$. The general idea for $d$-ISLPs is similar, but now determining which child to follow in repetition rules is more complex. 

To access the $l$-th character of the expansion of $A \rightarrow \prod_{i=k_1}^{k_2} B_1^{i^{c_1}}\cdots B_t^{i^{c_t}}$ we first find the value $i$ such that $f^+(i-1) < l \le f^+(i)$ by using binary search (we let $f^+(i-1) = 0$ when $i = k_1$). Then, we find the value $r$ such that $f_{r-1}(i) < l -  f^+(i-1) \le f_r(i)$ by using binary search in the subindex of the functions (we let $f_{r-1}(i) = 0$ for any $i$ when $r = 1$). We then know that the search follows by $B_r$, with offset $l -  f^+(i-1) - f_{r-1}(i)$ inside $|\gexp(B_r)|^{i^{c_r}}$. The offset within $B_r$ is then easily computed with a modulus. 
Algorithm~\ref{alg:direct-access} gives the details, using $\suc$ to denote the binary search in an ordered set (i.e., $\suc([x_1\dd x_m],l)=j$ iff $x_{j-1} < l \le x_j$).

We carry out the first binary search so that, for every $i$ we try, if $f^+(i) < l$ we immediately answer $i+1$ if $l \le f^+(i+1)$; instead, if $l \le f^+(i)$, we immediately answer $i$ if $f^+(i-1) < l$. As a result, the search area is initially of length $|\gexp(A)|$ and, if the answer is $i$, the search has finished by the time the search area is of length $\le f^+(i)-f^+(i-1) = f_t(i)$. Thus, there are $\bo(1+\log(|\gexp(A)|/f_t(i)))$ binary search steps. The second binary search is modified analogously so that it carries out $\bo(1+\log(f_t(i)/(i^{c_r} |\gexp(B_r)|)))$ steps, for a total of at most $\bo(1+\log(|\gexp(A)|/|\gexp(B_r)|))$ steps. As the search continues by $B_r$, the sum of binary search steps telescopes to $\bo(h+\log n)$ on an ISLP of height $h$, and the total time is $\bo((h+\log n)\,d\lceil d\log d/\log n\rceil)$, plus the preprocessing time of $\bo(d^2\lceil d\log d/\log n\rceil)$. 

\begin{algorithm}[t]\caption{Direct access on $d$-ISLPs of height $h$ in $\bo((h+\log n+d)d)$ operations}\label{alg:direct-access}
\begin{algorithmic}[1]
\Require A variable $A$ of an ISLP, and a position $l \in [1, |\gexp(A)|]$.
\Ensure The character $\gexp(A)[l]$.
\Function{access}{$A, l$}
\If{$A \rightarrow a$}
    \State \Return $a$
\ElsIf{$A \rightarrow BC$}
    \If{$l \le |\gexp(B)|$}
        \State \Return \Call{access}{$B, l$}
    \Else { ($l > |\gexp(B)|$)}
        \State \Return \Call{access}{$C, l-|\gexp(B)|$}
    \EndIf
\Else{ ($A \rightarrow \prod_{i=k_1}^{k_2} B_1^{i^{c_1}} \dots B_t^{i^{c_t}}$)}
\State $i \leftarrow \suc([f^+(k_1)\dd f^+(k_2)], l)$
\State $l \leftarrow l - f^+(i-1)$
\State $r \leftarrow \suc([f_1(i)\dd f_t(i)], l)$
\State $l \leftarrow l - f_{r-1}(i)$
\State \Return \Call{access}{$B_r, (l-1 \bmod |\gexp(B_r)|)+1$}
\EndIf
\EndFunction
\end{algorithmic}
\end{algorithm}

\bigskip
\begin{example}We show how to access the $\symb$ at position 14 of the string $T = \prod_{i=1}^5\syma^i\symb$. Consider the ISLP $G$ and its auxiliary polynomials computed in Example 1. We start by computing $f^+(2) = 5$. As $l > 5$, we go right in the binary search and compute $f^+(4) = 14$. As $l \leq 14$ we go left, compute $f^+(3) = 9$ and find that $i = 4$. Hence, $T[l]$ lies in the expansion of $A^{i}B = A^4B$ at position $l_1 = l - f^+(i-1) = 5$. Then, we compute $f_1(4) = 4$. As $l_1 > 4$, we turn right and compute $f_2(4) = 5$, finding that $r = 2$. Hence, $T[l]$ lies in the expansion of $B^{i^0} = B^1$ at position $l_2 = l_1 - f_{r-1}(i) = 1$.
\end{example}

\subsection{Extracting Substrings}

Once we have accessed $T[l]$, it is possible to output the substring $T[l\dd l+\lambda-1]$ in $\bo(\lambda+h)$ additional time, as we return from the recursion in Algorithm~\ref{alg:direct-access}. We carry the parameter $\lambda$ of the number of symbols (yet) to output, which is first decremented when we finally arrive at line 3 and find the first symbol, $T[l]$, which we now output immediately. From that point, as we return from the recursion, instead of returning the symbol $T[l]$, we output up to $\lambda$ following symbols and return the number of remaining symbols yet to output, until $\lambda=0$. See Algorithm~\ref{alg:extract}.

\begin{algorithm}[t]\caption{Length-$\lambda$ substring access on ISLPs of height $h$ in $\bo(h+\lambda)$ extra time}\label{alg:extract}
\begin{algorithmic}[1]
\Require A variable $A$ of an ISLP, a position $l \in [1, |\gexp(A)|]$ and a length $\lambda>0$.
\Ensure Outputs $\exp{A}[l\dd l+\lambda-1]$ and returns the number of symbols it could not extract (if $l+\lambda-1 > |\exp{A}|$).
\Function{extract}{$A, l, \lambda$}
\If{$A \rightarrow a$}
    \State {\bf output} $a$
    \State $\lambda \gets \lambda-1$
\ElsIf{$A \rightarrow BC$}
    \If{$l \le |\gexp(B)|$}
        \State $\lambda \gets$ \Call{extract}{$(B,l,\lambda)$}
        \If{$\lambda>0$}
           \State $\lambda \gets$ \Call{extract}{$C, 1, \lambda)$}
        \EndIf
    \Else { ($l > |\gexp(B)|)$}
        \State $\lambda \gets$ \Call{extract}{$C, l-|\gexp(B)|,\lambda$}
    \EndIf
\Else{ ($A \rightarrow \prod_{i=k_1}^{k_2} B_1^{i^{c_1}} \dots B_t^{i^{c_t}}$)}
\State $i \leftarrow \suc([f^+(k_1)\dd f^+(k_2)], l)$
\State $l \leftarrow l - f^+(i-1)$
\State $r \leftarrow \suc([f_1(i)\dd f_t(i)], l)$
\State $l \leftarrow l - f_{r-1}(i)$
\State $\lambda \gets$ \Call{extract}{$B_r, (l-1 \bmod |\gexp(B_r)|)+1,\lambda$}
\State $k \gets \lceil l/|\exp{B_r}|\rceil+1$
\While {$i \le k_2 \land \lambda>0$}
   \While{$r \le t \land \lambda >0$}
      \While{$k \le i^{c_r} \land \lambda>0$}
         \State $\lambda \gets$ \Call{extract}{$B_r,1,\lambda$}
         \State $k \gets k+1$
       \EndWhile
       \State $k \gets 1$
       \State $r \gets r+1$
   \EndWhile
   \State $r \gets 1$
   \State $i \gets i+1$
\EndWhile
\EndIf
\State \Return $\lambda$
\EndFunction
\end{algorithmic}
\end{algorithm}

To analyze Algorithm~\ref{alg:extract}, we distinguish two types of recursive calls. Initially the substring to extract is within one of the children of the grammar tree node, and thus only one recursive call is made. Those are the cases of lines 7, 11, and 17. The number of those calls is limited by the the height $h$ of the grammar. Once we reach a node where the substring to extract spreads across more than one child, the $\lambda$ symbols to output are distributed across more than one recursive call, ending in line 3 when outputting individual symbols. Those recursive calls form a tree with no unary paths and $\lambda$ leaves, thus they add up to $\bo(\lambda)$. 

The total space for the procedure is $\bo(h)$ for the recursion stack (which is unnecessary when returning a single symbol, since recursion can be eliminated in that case), plus $\bo(d\lceil d\log d/\log n\rceil)$ for the precomputed Bernoulli rationals.\footnote{As these do not depend on the query, they could be precomputed at indexing time and be made part of the index, at a very modest increase in space.}

\subsection{Composable Functions on Substrings}

Other than extracting a text substring, we aim at computing more general functions on arbitrary ranges $T[p\dd q]$, in time that is independent of the length $q-p+1$ of the range. We show how to compute some functions that have been studied in the literature, focusing on {\em composable} ones.

\bigskip\begin{definition}
A function $f$ from strings is {\em composable} if there exists a function $h$ such that, for every pair of strings $X$ and $Y$, it holds $f(X\cdot Y)=h(f(X),f(Y))$.
\end{definition}\bigskip

We focus for now on two popular composable functions, which find applications for example on grammar-compressed suffix trees \cite{FMNtcs09,GNP18}.

\bigskip\begin{definition}
A {\em range minimum query (RMQ)} on $T[p\dd q]$ returns the leftmost position where the minimum value occurs in $T[p\dd q]$. Formally,
$$ \RMQ(T,p,q) ~=~ \min \{ k \in [p\dd q] \,|\, \forall k' \in [p\dd q], T[k] \le T[k']\}$$
\end{definition}

\begin{definition}
A {\em next/previous smaller value query (NSV/PSV)} on $T[p\dd n]$/$T[1\dd p]$ and with value $v$ find the smallest/largest position following/preceding $p$ with value at most $v$. Formally,
\begin{eqnarray*}
\NSV(T,p, v) &=& \min(\{q \,|\, q \ge p, T[q] < v\}\cup \{n+1\}) \\
\PSV(T,p, v) &=& \max(\{q \,|\, q \le p, T[q] < v\}\cup \{0\})
\end{eqnarray*}
\end{definition}

\begin{algorithm}[t!]\caption{Range minimum queries on SLPs of height $h$ in $\bo(h)$ time}\label{alg:rmq}
\begin{algorithmic}[1]
\Require A variable $A$ of an SLP and positions $1 \le p \le q \le |\gexp(A)|$.
\Ensure Returns $\rmq(\exp{A}[p\dd q]$ and the corresponding minimum value.
\Function{rmq}{$A, p, q$}
\If{$(p,q) = (1,|\gexp(A)|)$}
    \Return $\rmq(A)$ (which is precomputed)
\ElsIf{$A \rightarrow BC$}
    \If{$q \le |\gexp(B)|$}
        \Return \Call{rmq}{$B,p,q$}
    \ElsIf{$p > |\gexp(B)|$}
        { \Return \Call{rmq}{$C,p-|\gexp(B)|,q-|\gexp(B)|$}}
    \Else { ($p \le |\gexp(B)| < q$)}
        \State $\langle m_l,v_l \rangle \gets$ \Call{rmq}{$B, p, |\gexp(B)|$}
        \State $\langle m_r,v_r \rangle \gets$ \Call{rmq}{$C,1,q-|\gexp(B)|$}
        \If{$v_l \le v_r$}
            \Return $\langle m_l,v_l\rangle$
        \Else
            { \Return $\langle |\gexp(B)|+m_r,v_r\rangle$}
        \EndIf
    \EndIf
\EndIf
\EndFunction
\end{algorithmic}
\end{algorithm}

We show next how to efficiently solve those queries on ISLPs. We do not know how to compute other more complex functions, like Karp-Rabin fingerprints \cite{kr:ibm87}, on ISLPs. This will be addressed in Section~\ref{sec:rlslps}, on the simpler RLSLPs.

\subsubsection{Range Minimum Queries}

Solving RMQs on an SLP $G$ is simple thanks to composability. More precisely, what is composable is an extended function $f(X) = \langle m,v,\ell\rangle$ where $m = \RMQ(X,1,|X|)$, $v=X[m]$, and $\ell=|X|$. Then, given $f(X)=\langle m_x,v_x,\ell_x\rangle$ and $f(Y)=\langle m_y,v_y,\ell_y\rangle$, it holds $f(X\cdot Y)=\langle m_x,v_x,\ell_x+\ell_y\rangle$ if $v_x \le v_y$, and $\langle \ell_x+m_y,v_y,\ell_x+\ell_y\rangle$ otherwise, which is computable in time $\bo(1)$.
Since $f(a)=(1,a,1)$ we also have $\bo(1)$ time.

To compute RMQs on an SLP $G$, we first preprocess the grammar to store $f(\exp{A})=\langle m,v,\ell\rangle$ for each nonterminal $A$, in the form of the pair $\rmq(A)=\langle m,v\rangle$ and the length $\ell=|\exp{A}|$. Thanks to the composability of $f$, this is is easily built in $\bo(|G|)$ time in a bottom-up traversal of the grammar. 

To solve $\RMQ(T[p\dd q])$ on the SLP, we descend from the root towards $T[p\dd q]$ (guided by the stored expansion lengths $|\exp{A}|$) until finding a leaf (if $p=q$), or more typically a rule $A \rightarrow BC$ such that $T[p\dd q] = \exp{B}[p'\dd |\exp{B}|] \cdot \exp{C}[1\dd q']$. At this point we split into two recursive calls, one computing $\RMQ$ on a suffix of $\exp{B}$ and another on a prefix of $\exp{C}$. By making the recursive calls return $\rmq(B)$ in $\bo(1)$ time when the range spans the whole string $\exp{B}$, we ensure that those prefix/suffix calls perform only one further (nontrivial) recursive call, and thus the query is solved in $\bo(h)$ time, traversing at most two root-to-leaf paths in the parse tree. Algorithm~\ref{alg:rmq} shows the details.

To solve RMQs on ISLPs, we observe that the expansion of $A \rightarrow \prod_{i=k_1}^{k_2} B_1^{i^{c_1}}\dots B_t^{i^{c_t}}$ always contains the same symbols. Further, the RMQ of $\exp{A}$ occurs always in the first block, $i=k_1$, and it depends essentially on the sequence $B_1 \cdots B_t$. To handle these rules, we preprocess them as follows. Let $\rmq(B_j)=\langle m_j,v_j\rangle$. Then, we build the string $v_1\cdots v_t$ and precompute an RMQ data structure on it that answers queries $\rmq_A(p,q) = \RMQ(v_1\cdots v_t,p,q)$. It is possible to build such a data structure in $\bo(t)$ time and bits of space, such that it answers queries in $\bo(1)$ time \cite{FH11} (so this adds just $\bo(|G|)$ time and bits to the grammar preprocessing cost). With this structure, we can simlulate the extension of our $\rmq(A)$ precomputed pairs to any subsequence of any $B_1^{i^{c_1}}\cdots B_t^{i^{c_t}}$: $\rmq(B_1^{i^{c_1}}\cdots B_t^{i^{c_t}},a,b) = \langle m,v\rangle$, where $\RMQ(v_1\cdots v_t,a,b)=m'$, $\rmq(B_{m'})=\langle m'',v\rangle$, and $m = f_{m'-1}(i)+m''$. The time this takes is dominated by the $\bo(d)$ cost to compute $f_{m'-1}(i)$.

At query time, when we arrive at such a node $A$ with limits $p$ and $q$, we proceed as in lines 10--13 of Algorithm~\ref{alg:direct-access} to find the values $i_p$ and $r_p$, and $i_q$ and $r_q$, corresponding to $p$ and $q$, respectively (just as we find $i$ and $r$ for $l$ as in Algorithm~\ref{alg:direct-access}). There are several possibilities:
\begin{enumerate}
\item If $i_p=i_q$ and $r_p=r_q$, then $p$ and $q$ fall inside $\exp{B_{r_p}^{i^{c_{r_p}}}}$. They may be both inside a single copy of $\exp{B_{r_p}}$, in which case we continue with a single recursive call. Or they may span a (possibly empty) suffix of $\exp{B_{r_p}}$, zero or more copies of $\exp{B_{r_p}}$, and a (possibly empty) prefix of $\exp{B_{r_p}}$. The query is then solved with at most two recursive calls on $B_{r_p}$ (which are prefix/suffix calls, and the information on $\rmq(B_{r_p})$. We compose as explained those (up to) three results, and add $f^+(i_p-1)+f_{r_p-1}(i_p)$ to the resulting position so as to place it within $\exp{A}$.
\item If $i_p=i_q$ and $r_p < r_q$, then we must also consider the subsequence $B_{r_p+1}^{i_p^{c_{r_p+1}}}\cdots B_{r_q-1}^{i_p^{c_{r_q-1}}}$, in case $r_q-r_p>1$. This additional candidate to the RMQ is found with $\rmq(B_1^{i_p^{c_1}}\cdots B_t^{i_p^{c_t}},r_p+1,r_q-1)$, in time $\bo(d)$ as explained.
\item If $i_p < i_q$, we must also add a suffix of  of $B_1^{i_p^{c_1}}\cdots B_t^{i_p^{c_t}}$, the whole $B_1^{(i_p+1)^{c_1}}\cdots B_t^{(i_p+1)^{c_t}}$ (if $i_q-i_p>1$), and a prefix of $B_1^{i_q^{c_1}}\cdots B_t^{i_q^{c_t}}$. All those are included with our simulation of queries $\rmq(B_1^{i^{c_1}}\cdots B_t^{i^{c_t}},a,b)$.
\end{enumerate}

Overall, we perform either one recursive call (when $p$ and $q$ are inside the same $B_{r_p}$, or two prefix/suffix recursive calls (for a suffix of $B_{r_p}$ and a prefix of $B_{r_q}$). The analysis is then the same as for the SLPs, except that we spend $\bo((h+\log n)d\lceil d\log d / \log n\rceil)$ time, plus the preprocessing time of $\bo(d^2\lceil d\log d/\log n\rceil)$, due to the binary searches needed to find $i_p$, $i_q$, $r_p$, and $r_q$, as for direct access.

\bigskip\begin{theorem}\label{thm:RMQ}
Let $T[1\dd n]$ be generated by a $d$-ISLP $G$ of height $h$. Then, we can build in time $\bo((|G|+d)d\lceil d\log d/\log n\rceil)$ and space $\bo(|G|+d\lceil d\log d/\log n\rceil)$ a data structure of size $\bo(|G|)$ that computes any query $\RMQ(T,p,q)$ in time $\bo((h+\log n+d)d \lceil d\log d/\log n\rceil)$ on a RAM machine of $\Theta(\log n)$ bits, using $\bo(h+d\lceil d\log d/\log n\rceil)$ additional words of working space.
\end{theorem}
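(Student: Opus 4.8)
The plan is to mirror, for iteration rules, the composability argument that already solves RMQ on plain SLPs, paying only $\bo(d)$ extra time and $\bo(t)$ extra space at each such rule. First I would record, for every variable $A$, the triple $f(\gexp(A))=\langle m,v,\ell\rangle$ in the compact form $\rmq(A)=\langle m,v\rangle$ together with $\ell=|\gexp(A)|$; since $f$ is composable, all these are computed bottom-up in $\bo(|G|)$ time. On the binary and terminal rules this is exactly Algorithm~\ref{alg:rmq}: descending toward $T[p\dd q]$ and returning the precomputed $\rmq(\cdot)$ whenever a call spans a whole expansion, so each of the (at most two) boundary paths triggers only one genuine recursion and the query costs $\bo(h)$. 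The whole task is thus to extend this behaviour across iteration rules without recursing into their astronomically long expansions.

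The heart of the argument is the rule $A\rightarrow\prod_{i=k_1}^{k_2}B_1^{i^{c_1}}\cdots B_t^{i^{c_t}}$. Its expansion is enormous but involves only the $t$ distinct block minima $v_1,\ldots,v_t$ (where $\rmq(B_j)=\langle m_j,v_j\rangle$), and every block $i$ lists $B_1\cdots B_t$ in the same order. I would therefore precompute, per rule, the Fischer--Heun structure of \cite{FH11} over the string $v_1\cdots v_t$, in $\bo(t)$ time and bits and with $\bo(1)$ query time, so that the minimum of any aligned subsequence $B_a^{i^{c_a}}\cdots B_b^{i^{c_b}}$ of a fixed block is obtained from one call $\rmq_A(a,b)=m'$, one lookup $\rmq(B_{m'})=\langle m'',v\rangle$, and the offset correction $m=f_{m'-1}(i)+m''$, whose only nonconstant cost is the $\bo(d)$ evaluation of $f_{m'-1}(i)$ described after Eq.~(\ref{eq:pc}). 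Together with $f_r(i)$ and $f^+(k)$ (the latter through the Bernoulli-number formula for $p_c(k)$), this yields an $\bo(d)$-time black box for the minimum of any aligned sub-block, and hence for any run of whole blocks, since all blocks share the same symbol sequence.

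At query time, upon reaching such an $A$ with range $[p\dd q]$, I would locate $(i_p,r_p)$ and $(i_q,r_q)$ by the same two telescoping binary searches used in Algorithm~\ref{alg:direct-access}, and then decompose $[p\dd q]$ into a constant number of pieces whose minima I compose with $f$: a partial suffix and prefix inside the two boundary copies of $B_{r_p}$ and $B_{r_q}$ (at most two genuine recursive calls into children), the remaining whole copies of those two runs (covered by $\rmq(B_{r_p})$ and $\rmq(B_{r_q})$), the aligned interior sub-blocks (the tail $B_{r_p+1}\cdots B_t$ of block $i_p$ and the head $B_1\cdots B_{r_q-1}$ of block $i_q$, each resolved by a single $\rmq_A$ query), and, when $i_p<i_q$, one whole interior block via $\rmq_A(1,t)$, valid because every block lists the same symbols. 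Translating each candidate position back into $\gexp(A)$ through $f^+(i-1)+f_{r-1}(i)$ and taking the leftmost minimum gives the answer, and the recursion pattern is identical to the binary case: one recursion for a range internal to a single child, or two prefix/suffix recursions at the splitting node.

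Because an iteration rule thus behaves, in terms of recursion, exactly like a binary rule, the recursion visits at most two root-to-leaf paths and the analysis of Algorithm~\ref{alg:rmq} carries over: the telescoping of the binary-search steps for $i_p,i_q,r_p,r_q$ sums to $\bo(h+\log n)$ calls, each costing $\bo(d\lceil d\log d/\log n\rceil)$ for the $f^+$ and $f_r$ evaluations, for a query time of $\bo((h+\log n+d)d\lceil d\log d/\log n\rceil)$ after the $\bo(d^2\lceil d\log d/\log n\rceil)$ Bernoulli precomputation; the build time matches that of the access data structures plus the $\bo(|G|)$-time bottom-up $\rmq(A)$ pass and the $\bo(t)$-per-rule Fischer--Heun structures, and the working space is $\bo(h)$ for the stack plus $\bo(d\lceil d\log d/\log n\rceil)$ for the rationals. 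I expect the main obstacle to be the bookkeeping of the case analysis at the iteration rule: ensuring that every portion of $[p\dd q]$ is counted exactly once, that at most two honest recursive calls survive so the telescoping still holds, and that each winning position is mapped back to an absolute index in $\gexp(A)$; once that is pinned down, the efficiency of the interior minima comes for free from the identical structure of the blocks and the $\bo(1)$-time $\rmq_A$ queries.
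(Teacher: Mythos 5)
Your proposal is correct and follows essentially the same route as the paper's proof: precomputed $\rmq(A)$ pairs composed bottom-up, a per-rule $\bo(1)$-time RMQ structure over the block minima $v_1\cdots v_t$ used to simulate $\rmq$ on aligned sub-blocks via $m=f_{m'-1}(i)+m''$, the same binary searches for $(i_p,r_p)$ and $(i_q,r_q)$, and the same decomposition into at most two genuine prefix/suffix recursions plus $\bo(d)$-time interior queries, yielding the identical telescoping analysis. The case bookkeeping you flag as the remaining obstacle is exactly what the paper spells out as its three cases ($i_p=i_q \wedge r_p=r_q$, $i_p=i_q \wedge r_p<r_q$, and $i_p<i_q$, where only one representative interior block is needed since all blocks contain the same symbols), so no new idea is missing.
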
\bigskip

Since we can make both $h$ and $d$ be $\bo(\log n)$, we have the following corollary.

\bigskip\begin{corollary}\label{cor:RMQ}
Let $T[1\dd n]$ be generated by an ISLP $G$. Then, we can build in time $\bo((|G|+\log n)\log n\log\log n)$ and space $\bo(|G|+\log n\log\log n)$ a data structure of size $\bo(|G|)$ that computes any query $\RMQ(T,p,q)$ in time $\bo(\log^2 n\log\log n)$ on a RAM machine of $\Theta(\log n)$ bits, using $\bo(\log n\log\log n)$ additional words of working space.
\end{corollary}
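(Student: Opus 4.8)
The plan is to reduce the general case to Theorem~\ref{thm:RMQ} by first preprocessing the input ISLP so that both its height $h$ and its degree $d$ become $\bo(\log n)$, and then to verify that, under these two bounds, every resource term appearing in Theorem~\ref{thm:RMQ} collapses to the figures claimed in Corollary~\ref{cor:RMQ}.

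First I would apply Lemma~\ref{lem:dlog} to the input ISLP $G$, obtaining an equivalent $d_1$-ISLP of the same size with $d_1 \le \log_2 n$. This step is essentially free: it only recognizes that the effective exponents of any rule with $k_2 \ge 2$ are already bounded by $\log_2 n$ (since $n \ge i^{c_j}$), and zeroes out the meaningless exponents in rules with $k_1=k_2=1$, without changing any rule size, so it runs in $\bo(|G|)$ time. Next I would apply Lemma~\ref{lem:balancing-islp} to this $d_1$-ISLP, producing an equivalent balanced $d_2$-ISLP $G'$ of size $\bo(|G|)$ with height $h' = \bo(\log n)$ and degree $d_2 \le d_1 \le \log_2 n$; by that lemma this construction takes $\bo((|G|+d_1)d_1\lceil d_1\log d_1/\log n\rceil)$ time and $\bo(|G|+d_1\lceil d_1\log d_1/\log n\rceil)$ space.

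The key simplification is that, once $d \le \log_2 n$, we have $d\log d = \bo(\log n\log\log n)$ and hence $\lceil d\log d/\log n\rceil = \bo(\log\log n)$. I would then invoke Theorem~\ref{thm:RMQ} on $G'$ with $h = \bo(\log n)$ and $d = \bo(\log n)$ and substitute. The query time $\bo((h+\log n+d)\,d\lceil d\log d/\log n\rceil)$ becomes $\bo(\log n\cdot\log n\cdot\log\log n) = \bo(\log^2 n\log\log n)$; the working space $\bo(h+d\lceil d\log d/\log n\rceil)$ becomes $\bo(\log n + \log n\log\log n) = \bo(\log n\log\log n)$; the data structure size remains $\bo(|G'|) = \bo(|G|)$; and the construction time $\bo((|G'|+d)\,d\lceil d\log d/\log n\rceil)$ and space $\bo(|G'|+d\lceil d\log d/\log n\rceil)$ become $\bo((|G|+\log n)\log n\log\log n)$ and $\bo(|G|+\log n\log\log n)$, respectively, matching the statement.

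The only real work — and the step I would double-check — is the bookkeeping that the combined construction cost of the two reduction steps (Lemma~\ref{lem:dlog} and Lemma~\ref{lem:balancing-islp}) plus the RMQ preprocessing of Theorem~\ref{thm:RMQ} stays within $\bo((|G|+\log n)\log n\log\log n)$ time and $\bo(|G|+\log n\log\log n)$ space; this holds because all three contributions share the same asymptotic form once $d \le \log_2 n$ is enforced, so the total is dominated by a single such term. I would also remark that the same balanced, degree-reduced grammar $G'$ simultaneously serves the direct-access and extraction results, so no query-type-specific re-preprocessing is required.
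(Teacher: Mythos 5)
Your proposal is correct and matches the paper's own (largely implicit) argument: the paper justifies Corollary~\ref{cor:RMQ} with the single remark that ``we can make both $h$ and $d$ be $\bo(\log n)$,'' which is precisely your reduction via Lemma~\ref{lem:dlog} followed by Lemma~\ref{lem:balancing-islp}, and then substitution of $h,d = \bo(\log n)$ and $\lceil d\log d/\log n\rceil = \bo(\log\log n)$ into Theorem~\ref{thm:RMQ}. Your bookkeeping of the construction time and space, including the observation that the degree reduction must precede balancing so all cost terms share the same asymptotic form, is exactly the verification the paper leaves to the reader.
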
\bigskip

Finally, the following specialization is relevant, as for example it encompasses 1-ISLPs (which may break $\delta$) and RLSLPs.

\bigskip\begin{corollary}\label{cor:RMQ1}
Let $T[1\dd n]$ be generated by a $d$-ISLP $G$ with $d=\bo(1)$. Then, we can build in time and space $\bo(|G|)$ a data structure of size $\bo(|G|)$ that computes any query $\RMQ(T,p,q)$ in time $\bo(\log n)$ on a RAM machine of $\Theta(\log n)$ bits. 
\end{corollary}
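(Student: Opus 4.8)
The plan is to obtain the corollary purely by specializing Theorem~\ref{thm:RMQ}, after first forcing the grammar to be balanced. First I would apply Lemma~\ref{lem:balancing-islp} to $G$, producing an equivalent $d'$-ISLP $G'$ with $|G'|=\bo(|G|)$, height $h'=\bo(\log n)$, and degree $d' \le d = \bo(1)$. Since $d=\bo(1)$, the arithmetic penalty $\lceil d\log d/\log n\rceil$ equals $1$ for all sufficiently large $n$ (as $d\log d=\bo(1)$ while $\log n \to \infty$), so the construction cost of Lemma~\ref{lem:balancing-islp}, namely $\bo((|G|+d)d\lceil d\log d/\log n\rceil)$ time and $\bo(|G|+d\lceil d\log d/\log n\rceil)$ space, collapses to $\bo(|G|)$ time and space.

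Next I would invoke Theorem~\ref{thm:RMQ} on the balanced grammar $G'$, now with parameters $h=h'=\bo(\log n)$ and $d=d'=\bo(1)$. Substituting these values, the build time $\bo((|G'|+d)d\lceil d\log d/\log n\rceil)$ and build space $\bo(|G'|+d\lceil d\log d/\log n\rceil)$ both become $\bo(|G'|)=\bo(|G|)$, and the data structure size $\bo(|G'|)=\bo(|G|)$ is already as claimed. For the query, substituting $h=\bo(\log n)$ and $d=\bo(1)$ into the bound $\bo((h+\log n+d)d\lceil d\log d/\log n\rceil)$ yields $\bo((\log n+\log n+1)\cdot 1\cdot 1)=\bo(\log n)$, matching the corollary, while the working space $\bo(h+d\lceil d\log d/\log n\rceil)=\bo(\log n)$ is subsumed by the query-time bound and need not be stated separately.

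There is no substantive obstacle here; the only points requiring care are (i) confirming that balancing does not increase the degree---Lemma~\ref{lem:balancing-islp} explicitly guarantees $d'\le d$, so a constant degree stays constant---and (ii) checking that every occurrence of the factor $\lceil d\log d/\log n\rceil$ indeed evaluates to $\bo(1)$ once $d=\bo(1)$, which in particular removes the Bernoulli-number preprocessing overhead entirely. With both confirmed, all the $\bo$-expressions appearing in Theorem~\ref{thm:RMQ} and Lemma~\ref{lem:balancing-islp} reduce to the stated bounds, and composing the balancing step with the structure construction (both $\bo(|G|)$) gives the corollary.
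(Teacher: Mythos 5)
Your proposal is correct and matches the paper's own (implicit) derivation: Corollary~\ref{cor:RMQ1} is exactly the specialization of Theorem~\ref{thm:RMQ} obtained by first balancing via Lemma~\ref{lem:balancing-islp} (so $h=\bo(\log n)$, $d'\le d$) and then noting that every factor $\lceil d\log d/\log n\rceil$ is $\bo(1)$ when $d=\bo(1)$. Your two points of care---degree preservation under balancing and the collapse of the Bernoulli-arithmetic overhead---are precisely the observations the paper relies on.
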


\subsubsection{Next/Previous Smaller Value}

Let us consider query NSV; query PSV is analogous.
NSV is composable if we extend it to function $f(X,v)=\langle p,\ell\rangle$, where $p=\NSV(X,1,v)$ and $\ell=|X|$. If $f(X,v)=\langle p_x,\ell_x\rangle$ and $f(Y,v)=\langle p_y,\ell_y\rangle$, then $f(X\cdot Y)=\langle p,\ell_x+\ell_y\rangle$, where $p=p_x$ if $p_x \le \ell_x$, else $p=\ell_x+p_y$ if $p_y \le \ell_y$, and $p=\ell_x+\ell_y+1$ otherwise. The composition takes $\bo(1)$ time. 

The procedure to compute $\NSV(T,p,v)$ on an SLP is depicted in Algorithm~\ref{alg:nsv}. We reuse the precomputed pairs $\rmq(A)=\langle m,v\rangle$ of RMQs, using $\rmq(A).v$ to refer to $v$. Importantly, the algorithm uses that field to notice in constant time that the answer is not within $\exp{A}$ (lines 2--3). As for RMQs, the algorithm may perform two calls on $A \rightarrow BC$ when the call on $B$ fails, but then the call on $C$ is for $p=1$. A call with $p=1$ cannot perform two nontrivial recursive calls, because if the left symbol fails, this is noticed in constant time. So the recursion follows two paths at most: one of failing ``left'' symbols ($B$ in $A\rightarrow BC$) and one of ``right'' symbols that do not fail ($C$ in $A \rightarrow BC$). The total time is then $\bo(h)$. 

\begin{algorithm}[t!]\caption{Next smaller values on SLPs of height $h$ in $\bo(h)$ time}\label{alg:nsv}
\begin{algorithmic}[1]
\Require A variable $A$ of an SLP, position $1 \le p \le |\exp{A}|$, and threshold $v$.
\Ensure The position $\PSV(\exp{A},p,v)$.
\Function{nsv}{$A, p, v$}
\If{$\rmq(A).v \geq v$}
\State \Return $|\exp{A}|+1$
\ElsIf{$A \rightarrow a$}
\State{\Return $1$}
\ElsIf{$A \rightarrow BC$}
\If{$p \leq |\exp{B}|$}
\State{$p \gets$ \Call{nsv}{$B, p, v$}}
\If{$p \le |\exp{B}|$}
    \State \Return $p$
\EndIf
\EndIf
\State \Return $|\exp{B}|~ +$ \Call{nsv}{$C, p - |\exp{B}|, v$}
\EndIf
\EndFunction
\end{algorithmic}
\end{algorithm}

To extend the algorithm to ISLPs we must consider, as for the case of RMQs, the special rules. Just as in that case, the answer to a query $\NSV(\gexp(A),p,v)$ with $A \rightarrow \Pi_{i=k_1}^{k_2} B_1^{i^{c_1}}\cdots B_t^{i^{c_t}}$ depends essentially on the smallest values of the nonterminal expansions, $\gexp(B_j)$. Let again $\rmq(B_j)=\langle m_j,v_j\rangle$. We preprocess the string $v_1\cdots v_t$ to solve queries $\NSV(v_1\cdots v_j,p)$. This preprocessing takes $\bo(t\log t)$ time and $\bo(t)$ space, and answers NSV queries in time $\bo(\log^\epsilon t)$ for any constant $\epsilon>0$ \cite{NN12} (those are modeled as orthogonal range successor queries on a grid). 

We can then simulate precomputed values $\nsv(B_1^{i^{c_1}}\cdots B_t^{i^{c_t}},p,v)=q$, where $\NSV(v_1\cdots v_t,p,v)=q'$ is precomputed as explained, $\NSV(B_{q'},1,v)=q''$ is obtained with a recursive call, and $q=f_{q'-1}(i)+q''$. Note that the recursive call is for a whole symbol, and we are sure to find the answer inside it: if $\NSV(v_1\cdots v_t,p,v)=t+1$, we return $f_t(i)+1$ without making any recursive call.
At query time, after finding $i_p$ and $r_p$, we have the following cases:
\begin{enumerate}
\item We may have to recurse on a nonempty suffix of $B_{r_p}$, finishing if we find the answer inside it. If not, there may be more copies of $B_{r_p}$ ahead of position $p$, in which case we either determine in constant time that there is no answer inside $B_{r_p}$, or we recurse on the whole $B_{r_p}$ and find the answer inside it, thereby finishing.
\item If not finished, we may have to consider a block suffix $B_{r_p+1}^{i_p^{c_{r_p+1}}}\cdots B_t^{i_p^{c_t}}$. This is handled by computing $\nsv(B_1^{i_p^{c_1}}\cdots B_t^{i_p^{c_t}},r_p+1,v)$ as explained, possibly making a recursive call on a whole symbol, in which case we find the answer.
\item If not, we may find the answer in the next block, $B_1^{(i_p+1)^{c_1}}\cdots B_t^{(i_p+1)^{c_t}}$, in the same way as in point 2. If we find no answer here, then there is no answer to NSV.
\end{enumerate}

Just as for the case of SLPs, we traverse only two paths along the process. The main difference with the cost of RMQs is the $\bo(\log^\epsilon t) \subseteq \bo(\log^\epsilon |G|)$ cost incurred to compute orthogonal range successors.

\bigskip\begin{theorem}\label{thm:NSV}
Let $T[1\dd n]$ be generated by a $d$-ISLP $G$ of height $h$. Then, for any constant $\epsilon>0$, we can build in time $\bo(|G|(\log |G|+d\lceil d\log d/\log n\rceil) + d^2\lceil d\log d/\log n\rceil)$ and space $\bo(|G|+d\lceil d\log d/\log n\rceil)$ a data structure of size $\bo(|G|)$ that computes any query $\PSV/\NSV(T,p,v)$ in time $\bo(h\log^\epsilon|G|+(h+\log n+d)d \lceil d\log d/\log n\rceil)$ on a RAM machine of $\Theta(\log n)$ bits, using $\bo(h+d\lceil d\log d/\log n\rceil)$ additional words of working space.
\end{theorem}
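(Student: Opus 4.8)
The plan is to follow the blueprint already used for RMQs (Theorem~\ref{thm:RMQ}), replacing the RMQ function by the composable extension of NSV, $f(X,v)=\langle \NSV(X,1,v),|X|\rangle$, whose composition rule takes $\bo(1)$ time as stated before the theorem (PSV is symmetric). The skeleton is Algorithm~\ref{alg:nsv}, and the first thing I would establish is its $\bo(h)$ behaviour on plain SLPs: using the precomputed pairs $\rmq(A)=\langle m,v\rangle$, the test $\rmq(A).v\ge v$ detects in $\bo(1)$ that no value below the threshold lives in $\gexp(A)$, so the recursion splits into at most two root-to-leaf paths --- a path of ``left'' children whose expansion fails the threshold, and a path of ``right'' children entered with $p=1$, which by the same constant-time test can never open a second nontrivial branch.

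Next I would handle a special rule $A\rightarrow\prod_{i=k_1}^{k_2}B_1^{i^{c_1}}\cdots B_t^{i^{c_t}}$. The preprocessing reuses $\rmq(B_j)=\langle m_j,v_j\rangle$ and builds, per rule, an NSV structure over $v_1\cdots v_t$ realized as orthogonal range successor queries, at cost $\bo(t\log t)$ time and $\bo(t)$ space with $\bo(\log^\epsilon t)$ query time \cite{NN12}. Combined with the $S_A$, $C_A$ and predecessor structures already built for $f_r(i)$ and $f^+(k)$, this lets me simulate a block query $\nsv(B_1^{i^{c_1}}\cdots B_t^{i^{c_t}},p,v)$: compute $q'=\NSV(v_1\cdots v_t,p,v)$ (return $f_t(i)+1$ with no recursion if $q'=t+1$), recurse once on the whole symbol $B_{q'}$ where the answer must lie, and offset by $f_{q'-1}(i)$ in $\bo(d)$ time.

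With these primitives the query at a special node runs through the three cases stated above --- a nonempty suffix of $B_{r_p}$ and possibly its remaining copies, then the block suffix $B_{r_p+1}^{i_p^{c_{r_p+1}}}\cdots B_t^{i_p^{c_t}}$, then the next block --- each fragment resolved by one $\nsv(\cdot)$ simulation. I would then argue, exactly as for SLPs, that the traversal still follows only two paths: the $\bo(h)$ depth contributes $\bo(h)$ nodes, each special one paying the $\bo(\log^\epsilon t)\subseteq\bo(\log^\epsilon|G|)$ range-successor cost (the $h\log^\epsilon|G|$ term), while the binary searches for $i_p,r_p$ telescope to $\bo(h+\log n)$ steps, each evaluating $f^+$ in $\bo(d\lceil d\log d/\log n\rceil)$ (via the Bernoulli formula for $p_c$) or $f_r$ in $\bo(d)$. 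Folding in the one-time $\bo(d^2\lceil d\log d/\log n\rceil)$ Bernoulli precomputation as the ``$+d$'' inside the parenthesis yields the query bound; summing $\bo(t\log t + t\,d\lceil d\log d/\log n\rceil)$ over all rules plus Bernoulli precomputation gives the build time, and the structures occupy $\bo(|G|+d\lceil d\log d/\log n\rceil)$ space.

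The hard part will not be the arithmetic but the invariant that the recursion never opens two unbounded branches once whole-symbol recursions and block-fragment simulations interleave. I must check that every simulated block query recurses on a single whole symbol in which the answer is guaranteed (so that branch terminates immediately), and that the only place a genuine second path appears is the inherited ``left fails, continue right with $p=1$'' transition. Confirming that the suffix-of-$B_{r_p}$, block-suffix, and next-block stages chain together without ever spawning two independent deep recursions is the delicate correctness-and-complexity step.
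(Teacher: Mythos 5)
Your proposal takes essentially the same route as the paper's proof: the same composable extension of NSV, the same two-path analysis of Algorithm~\ref{alg:nsv} on plain SLPs via the precomputed $\rmq(A)$ values, the same per-rule orthogonal range successor structure over $v_1\cdots v_t$ with $\bo(t\log t)$ preprocessing and $\bo(\log^\epsilon t)$ query time, and the same three-case handling and cost accounting at iteration rules. The invariant you flag at the end is resolved exactly as you anticipate: each simulated block query either answers in constant time or recurses on a single whole symbol guaranteed to contain the answer, so only the inherited ``left fails, continue right with $p=1$'' transition creates a second path, matching the paper's argument.
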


\bigskip\begin{corollary}\label{cor:NSV}
Let $T[1\dd n]$ be generated by an ISLP $G$. Then, we can build in time $\bo((|G|+\log n)\log n\log\log n)$ and space $\bo(|G|+\log n\log\log n)$ a data structure of size $\bo(|G|)$ that computes any query $\PSV/\NSV(T,p,v)$ in time $\bo(\log^2 n\log\log n)$ on a RAM machine of $\Theta(\log n)$ bits, using $\bo(\log n\log\log n)$ additional words of working space.
\end{corollary}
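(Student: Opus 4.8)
The plan is to derive this corollary directly from Theorem~\ref{thm:NSV} by forcing both the height $h$ and the degree $d$ of the grammar to be $\bo(\log n)$, and then substituting these bounds into the generic estimates and simplifying the resulting polylogarithmic factors. No new algorithmic idea is required; the content is in composing the two normalization lemmas in the correct order and verifying that the $\lceil d\log d/\log n\rceil$ factors collapse.

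First I would apply Lemma~\ref{lem:dlog} to replace the given ISLP $G$ by an equivalent grammar of the same size whose degree satisfies $d \le \log_2 n$. This costs nothing asymptotically and essentially does not touch the rule structure (only degenerate rules with $k_1=k_2=1$ are reinterpreted), so it is safe to perform before balancing. Next I would apply Lemma~\ref{lem:balancing-islp} to obtain an equivalent ISLP $G'$ of size $\bo(|G|)$ with height $h'=\bo(\log n)$ and degree still bounded by $d'\le d \le \log_2 n$. Because $d'=\bo(\log n)$, the factor $\lceil d'\log d'/\log n\rceil$ reduces to $\lceil \log n\cdot\log\log n/\log n\rceil=\bo(\log\log n)$; hence the balancing step alone runs in $\bo((|G|+\log n)\log n\log\log n)$ time and $\bo(|G|+\log n\log\log n)$ space, already within the claimed construction bounds.

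Then I would feed $G'$ to Theorem~\ref{thm:NSV} with $h=\bo(\log n)$ and $d=\bo(\log n)$ and substitute. The construction time $\bo(|G|(\log|G|+d\lceil d\log d/\log n\rceil)+d^2\lceil d\log d/\log n\rceil)$ becomes $\bo(|G|(\log|G|+\log n\log\log n)+\log^2 n\log\log n)$, and since $\log|G|=\bo(\log n)$ this is $\bo((|G|+\log n)\log n\log\log n)$. The query time $\bo(h\log^\epsilon|G|+(h+\log n+d)d\lceil d\log d/\log n\rceil)$ becomes $\bo(\log^{1+\epsilon}n+\log^2 n\log\log n)$; fixing any constant $\epsilon<1$ gives $\log^{1+\epsilon}n=o(\log^2 n)$, so the first term is absorbed and we are left with $\bo(\log^2 n\log\log n)$. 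The same substitution yields the stated space $\bo(|G|+\log n\log\log n)$ and working space $\bo(\log n\log\log n)$.

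The only point I would check explicitly—and the closest thing to an obstacle, though it is mild—is that applying Lemma~\ref{lem:dlog} \emph{before} Lemma~\ref{lem:balancing-islp} is legitimate: balancing preserves the degree up to $d'\le d$, so capping the degree first guarantees the balanced grammar still has $d=\bo(\log n)$, which is precisely what makes every $\lceil d\log d/\log n\rceil$ factor reduce to $\bo(\log\log n)$. Everything else is routine arithmetic on the $\bo(\cdot)$ expressions inherited from Theorem~\ref{thm:NSV}.
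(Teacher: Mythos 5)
Your proposal is correct and follows essentially the same route as the paper: the corollary is obtained by capping the degree via Lemma~\ref{lem:dlog}, balancing via Lemma~\ref{lem:balancing-islp} (which preserves $d' \le d$), and then substituting $h, d = \bo(\log n)$ into Theorem~\ref{thm:NSV}, where each $\lceil d\log d/\log n\rceil$ factor collapses to $\bo(\log\log n)$ and the $h\log^\epsilon|G| = \bo(\log^{1+\epsilon} n)$ term is absorbed by $\bo(\log^2 n\log\log n)$. Your explicit check that the normalization steps must be composed in this order (degree cap before balancing) is exactly the point the paper leaves implicit.
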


\bigskip\begin{corollary}\label{cor:NSV1}
Let $T[1\dd n]$ be generated by a $d$-ISLP $G$ with $d=\bo(1)$. Then, for any constant $\epsilon>0$, we can build in time $\bo(|G|\log |G|)$ and space $\bo(|G|)$ a data structure of size $\bo(|G|)$ that computes any query $\PSV/NSV(T,p,v)$ in time $\bo(\log n \log^\epsilon |G|)$ on a RAM machine of $\Theta(\log n)$ bits. 
\end{corollary}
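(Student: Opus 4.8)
The plan is to obtain Corollary~\ref{cor:NSV1} as a direct specialization of Theorem~\ref{thm:NSV}, after first reducing the height of the grammar. Since the corollary fixes $d = \bo(1)$, the only parameter left to control is the height $h$, which Theorem~\ref{thm:NSV} threads through all of its bounds.

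First I would apply Lemma~\ref{lem:balancing-islp} to the input $d$-ISLP $G$, producing an equivalent $d'$-ISLP $G'$ of size $\bo(|G|)$ whose height is $h' = \bo(\log n)$ and whose degree satisfies $d' \le d$. Because $d = \bo(1)$, also $d' = \bo(1)$, and the balancing step itself runs in $\bo((|G|+d)\,d\lceil d\log d/\log n\rceil) = \bo(|G|)$ time and $\bo(|G|)$ space. Note that, unlike the general corollaries, there is no need to invoke Lemma~\ref{lem:dlog} here, since the degree is already constant by hypothesis.

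Then I would apply Theorem~\ref{thm:NSV} to $G'$ and simplify each bound using $h' = \bo(\log n)$ and $d' = \bo(1)$. The key observation is that for constant degree the factor $\lceil d'\log d'/\log n\rceil$ equals $1$ for all sufficiently large $n$ (as $d'\log d' = \bo(1) < \log n$), so every term multiplied by $d'\lceil d'\log d'/\log n\rceil$ collapses to a constant. Substituting, the construction time $\bo(|G|(\log|G| + d'\lceil d'\log d'/\log n\rceil) + (d')^2\lceil d'\log d'/\log n\rceil)$ becomes $\bo(|G|\log|G|)$, the construction and data-structure space become $\bo(|G|)$, and the query time $\bo(h'\log^\epsilon|G| + (h'+\log n+d')\,d'\lceil d'\log d'/\log n\rceil)$ becomes $\bo(\log n\log^\epsilon|G| + \log n) = \bo(\log n\log^\epsilon|G|)$, since $\log^\epsilon|G| \ge 1$. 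Adding the $\bo(|G|)$ cost of balancing does not change the asymptotics.

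This argument is essentially bookkeeping, so I expect no genuine obstacle; the only points worth checking are that $|G'| = \bo(|G|)$ keeps $\log|G'| = \bo(\log|G|)$ (hence $\log^\epsilon|G'| = \bo(\log^\epsilon|G|)$), and that balancing preserves the degree bound $d' \le d$, so that constant degree is maintained through the reduction. The PSV case is symmetric and handled identically.
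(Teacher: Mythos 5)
Your proposal is correct and follows exactly the route the paper intends for this corollary: balance the ISLP via Lemma~\ref{lem:balancing-islp} (noting $d'\le d$, so Lemma~\ref{lem:dlog} is unnecessary), then specialize Theorem~\ref{thm:NSV} with $h=\bo(\log n)$ and $d=\bo(1)$, under which the $d\lceil d\log d/\log n\rceil$ factors collapse to constants. The bookkeeping, including absorbing the $\bo(|G|)$ balancing cost and the $\bo(\log n)$ additive query term into the stated bounds, is accurate.
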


\section{Revisiting RLSLPs} \label{sec:rlslps}

As pointed out in Proposition \ref{prop:git<grl}, RSLPs are equivalent to $0$-ISLPs, because an ISLP rule $A \rightarrow \prod_{i=k_1}^{k_2} B^{i^0}$ corresponds exactly to the RLSLP rule $A \rightarrow B^{|k_2-k_1|+1}$. We can then apply Lemma~\ref{lem:balancing-islp} over any RLSLP to obtain an equivalent RLSLP of the same asymptotic size and height $\bo(\log n)$. Once we count with a balanced version of any RLSLP, we can reuse Corollaries~\ref{cor:extract1}, \ref{cor:RMQ1}, and \ref{cor:NSV1}, to obtain a similar result for RLSLPs. Note that we can improve those results because we do not need to preprocess the grammar to simulate the $\rmq$ and $\nsv$ queries on blocks, because in an RLSLP all the cases of run-length rules $A \rightarrow B^t$ fall inside the subcase 1 of RMQs and NSVs.

\bigskip\begin{corollary}\label{cor:RLSLP}
Let $T[1\dd n]$ be generated by a RLSLP $G$. Then, we can build in time and space $\bo(|G|)$ data structures of size $\bo(|G|)$ that (i) extract any substring $T[l\dd l+\lambda-1]$ in time $\bo(\lambda+\log n)$, (ii) compute any query $\RMQ(T,p,q)$ in time $\bo(\log n)$, and (iii) compute any query $\PSV/\NSV(T,p,v)$ in time $\bo(\log n)$, on a RAM machine of $\Theta(\log n)$ bits. 
\end{corollary}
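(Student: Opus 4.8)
The plan is to obtain all three results by specializing the $d$-ISLP machinery to $d=0$. By Proposition~\ref{prop:git<grl}, a RLSLP is precisely a $0$-ISLP, since each rule $A \rightarrow B^t$ equals the iteration rule $A \rightarrow \prod_{i=1}^{t} B^{i^0}$. Viewing $G$ as a $0$-ISLP and applying Lemma~\ref{lem:balancing-islp} produces an equivalent $d'$-ISLP $G'$ of size $\bo(|G|)$ and height $\bo(\log n)$ with $d'\le 0$, hence $d'=0$. Because the balancing of Theorem~\ref{thm:balancing} leaves the iteration rules untouched and adds only ordinary binary rules, $G'$ is again a RLSLP; and since $d=0$ makes every factor $d\lceil d\log d/\log n\rceil$ equal to $1$, the grammar $G'$ is built in $\bo(|G|)$ time and space.

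For claims (i) and (ii) I would simply invoke Corollary~\ref{cor:extract1} and Corollary~\ref{cor:RMQ1}, both applicable because $d=0=\bo(1)$; they yield substring extraction in $\bo(\lambda+\log n)$ and $\RMQ$ in $\bo(\log n)$, each after an $\bo(|G|)$-time and space construction. On a run-length node $A \rightarrow B^t$ the binary searches of Algorithm~\ref{alg:direct-access} and the block $\rmq$ queries degenerate to plain arithmetic, since the run has a single base nonterminal and locating the copy that contains a given position is one division; this already meets the stated bounds, so no extra work is needed for (i) and (ii).

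The real content is claim (iii), where the generic $\NSV$ bound of Corollary~\ref{cor:NSV1}---query time $\bo(\log n\,\log^\epsilon|G|)$ after $\bo(|G|\log|G|)$ preprocessing---must be sharpened to query time $\bo(\log n)$ after $\bo(|G|)$ preprocessing. The crucial observation is that a run-length rule $A \rightarrow B^t$ has a single base nonterminal, so in the ISLP case analysis for $\NSV$ only subcase~1 ever arises: the auxiliary string $v_1\cdots v_t$ collapses to one symbol, and neither a block suffix nor a next-block subsequence is ever queried. Hence the orthogonal range successor structure---the sole source of both the $\log^\epsilon$ factor and the $\bo(|G|\log|G|)$ preprocessing---can be dropped entirely, and only the bottom-up pairs $\rmq(A)=\langle m,v\rangle$, computable in $\bo(|G|)$ time, are retained. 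At query time I would run Algorithm~\ref{alg:nsv} with one added case for run-length nodes: having located in $\bo(1)$ the copy of $B$ holding position $p$, recurse on the suffix of that copy; if this fails, a single comparison of $\rmq(B).v$ with $v$ decides in $\bo(1)$ whether a later (identical) copy can contain the answer, and if so one further recursion on a whole copy of $B$ finds it. Exactly as in the SLP analysis, the descent follows at most two root-to-leaf paths, giving $\NSV$ (and symmetrically $\PSV$) time $\bo(h)=\bo(\log n)$.

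The one delicate step I expect is verifying that the RLSLP case truly collapses to subcase~1---that is, that a single base nonterminal makes locating a position inside a run a matter of arithmetic rather than a search over distinct symbols, so that the range-successor structure and its superlinear preprocessing become unnecessary. Once this collapse is justified, the improved $\bo(\log n)$ query time and $\bo(|G|)$ construction for $\NSV/\PSV$ follow immediately, completing the corollary.
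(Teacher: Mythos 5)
Your proposal is correct and follows essentially the same route as the paper: identify RLSLPs with $0$-ISLPs, balance via Lemma~\ref{lem:balancing-islp}, inherit (i) and (ii) from Corollaries~\ref{cor:extract1} and~\ref{cor:RMQ1}, and sharpen (iii) by observing that run-length rules $A \rightarrow B^t$ only ever trigger subcase~1 of the NSV (and RMQ) block handling, so the orthogonal range successor structure and its superlinear preprocessing can be dropped. The paper states this collapse in one sentence; your elaboration of why it holds and how Algorithm~\ref{alg:nsv} extends to run-length nodes is precisely the intended argument.
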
\bigskip

Those results on RLSLPs have already been obtained before \cite{GNP18,CEKNP20}, but our solutions exploiting balancedness are much simpler once projected into the run-length rules. We now exploit the simplicity of RLSLPs to answer a wider range of queries on substrings, and show as a particular case how to compute Karp-Rabin fingerprints in logarithmic time; we do not know how to do that on general ISLPs.

\subsection{More General Functions}\label{sec:general-functions}

We now expand our results to a wide family of composable functions that can be computed in $\bo(\log n)$ time on top of balanced RLSLPs. We prove the following result.

\begin{algorithm}[tp!]\caption{Computation of general string functions in RLSLPs in $\bo(\log n)$ steps}\label{alg:general}
\begin{algorithmic}[1]
\Require A variable $A$ of an RLSLP (with its arrays $L$ and $F$ as global variables), and two positions $1 \leq i \leq j \leq |\exp{A}|$.
\Ensure $f(\exp{A}[i \dd j])$.
\Function{f}{$A, i, j$}
\If{$(i,j) = (1,|\exp{A}|)$}
\State \Return $F[A]$
\ElsIf{$A \rightarrow BC$}
\If{$j \leq |\exp{B}|$}
\State{\Return \Call{f}{$B, i, j$}}
\ElsIf{$|\exp{B}| < i$}
\State {\Return \Call{f}{$C, i - |\exp{B}|, j - |\exp{B}|$}}
\Else{ ($i \leq |\exp{B}| < j$)}
\State{$f_l \leftarrow $ \Call{f}{$B, i, |\exp{B}|$}}
\State{$f_r \leftarrow $ \Call{f}{$C, 1, j - |\exp{B}|$}}
\State {\Return $h(f_l,f_r)$}
\EndIf
\Else{ ($A \rightarrow B^t$)}
\State $t' \gets \lceil i/|\exp{B}|\rceil$
\State $t'' \gets \lceil j/|\exp{B}|\rceil$
\If{$t' = t''$}
\State { \Return \Call{f}{$B, i - (t'-1)\cdot |\exp{B}|, j - (t'-1)\cdot |\exp{B}|$}}
\EndIf
\State {$f_l \leftarrow$ \Call{f}{$B, i - (t'-1)\cdot |\exp{B}|, |\exp{B}|$}}
\State {$f_r \leftarrow$ \Call{f}{$B, 1, j - (t''-1)\cdot |\exp{B}|$}}
\State{Compute $f_c(t''-t'-1)$ using the recurrence
$$
f_c(k) \leftarrow
\begin{cases}
f(\varepsilon) & \text{if }k = 0; \\
F[B] & \text{if }k = 1; \\
h(f_c(k/2),f_c(k/2)) &\text{if }k \text{ is even}; \hspace{3.5cm} \\
h(F[B], f_c(k-1)) &\text{if }k \text{ is odd}.
\end{cases}
$$
}
\State{\Return $h(h(f_l,f_c(t''-t-1),f_r)$}
\EndIf
\EndFunction
\end{algorithmic}
\end{algorithm}

\bigskip\begin{theorem} \label{thm:general}
Let $f$ be a composable function from strings to a set of size $n^{\bo(1)}$, computable in time $\time{f}$ for strings of length $1$, with its composing function $h$ being computable in time $\time{h}$. Then, given an RLSLP $G$ representing $T[1 \dd n]$, there is a data structure of size $\bo(|G|)$ that can be built in time $\bo(|G|(\time{f}+\time{h}\log n))$ and that computes any $f(T[i \dd j])$ in time $\bo(\time{h} \log n)$.
\end{theorem}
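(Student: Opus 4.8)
The plan is to prove the theorem constructively around Algorithm~\ref{alg:general}, which already encodes the query procedure; the real work lies in setting up the preprocessing and in analyzing the cost tightly. The first step I would take is to make the grammar balanced. Since an RLSLP is precisely a $0$-ISLP (a rule $A\to B^t$ is the rule $A\to\prod_{i=1}^{t}B^{i^0}$), Lemma~\ref{lem:balancing-islp} applies, and because $d=0$ the occurrence counts are trivial ($B$ occurs $t$ times in $A\to B^t$), so no Bernoulli machinery is invoked and balancing runs in $\bo(|G|)$ time. This yields an equivalent RLSLP of size $\bo(|G|)$ and height $h=\bo(\log n)$, which I will use from that point on.

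Next I would describe the preprocessing as a single bottom-up pass that stores, for every nonterminal $A$, its expansion length $|\exp{A}|$ and the value $F[A]=f(\exp{A})$. Terminal rules cost $\time{f}$; a binary rule $A\to BC$ sets $F[A]=h(F[B],F[C])$ in time $\time{h}$ by composability; a run-length rule $A\to B^t$ sets $F[A]=f(\exp{B}^{t})$ through the exponentiation-by-squaring recurrence for $f_c$ already shown in Algorithm~\ref{alg:general}, which uses $\bo(\log t)\subseteq\bo(\log n)$ applications of $h$. Summing over all rules gives build time $\bo(|G|(\time{f}+\time{h}\log n))$. Because the range of $f$ has size $n^{\bo(1)}$, each $F[A]$ fits in $\bo(1)$ machine words, so the whole structure occupies $\bo(|G|)$ space.

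Correctness of the query algorithm I would establish by induction on the grammar tree: when the range is the whole expansion it returns the precomputed $F[A]$; at a binary node it either recurses into the single child containing $[i\dd j]$ or, when the range straddles the cut, composes with $h$ the value of a suffix of $\exp{B}$ and a prefix of $\exp{C}$ (justified by composability); at a run-length node $A\to B^t$ it locates the copies $t'$ and $t''$ of $B$ containing $i$ and $j$, recurses once if $t'=t''$, and otherwise splits into a suffix of copy $t'$, the $t''-t'-1$ full intermediate copies (evaluated as $f(\exp{B}^{\,t''-t'-1})$ by the $f_c$ recurrence), and a prefix of copy $t''$, combined by $h$.

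The main obstacle will be the query-time analysis: proving the bound is $\bo(\time{h}\log n)$ and not $\bo(\time{h}\log^{2} n)$. As in the analysis of Algorithm~\ref{alg:rmq}, I would argue that the recursion descends on a single path (spending $\bo(1)$ per node, since no $f_c$ is needed while $t'=t''$) until the range first splits across several children, and thereafter follows only two boundary paths, a ``suffix'' path and a ``prefix'' path, on each of which every node makes at most one nontrivial recursive call plus $\bo(1)$ compositions with precomputed full values. The danger is that each run-length node on a boundary path triggers an $f_c$ computation of cost $\bo(\time{h}\log t)$, with up to $\Theta(\log n)$ such nodes. The decisive step I would carry out is a telescoping argument: on a boundary path with recursed nodes of strictly decreasing expansion lengths $n_0>n_1>\cdots$, a run-length node $A_s\to B^{t_s}$ recurses into a single copy of $B$, so $n_{s+1}\le n_s/t_s$ and hence $\log t_s\le\log n_s-\log n_{s+1}$; summing gives $\sum_s\log t_s\le\log n$, so all $f_c$ costs on the path together are $\bo(\time{h}\log n)$. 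Adding the $\bo(\time{h})$ per-node composition cost over the $\bo(\log n)$ nodes of each path yields the claimed $\bo(\time{h}\log n)$ total query time.
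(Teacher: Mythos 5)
Your proposal is correct and follows essentially the same approach as the paper: same balancing step, same precomputation of $L[A]$ and $F[A]$ with doubling for run-length rules, and the same key insight that the $\bo(\time{h}\log t)$ cost of $f_c$ at a run-length node telescopes because $\log t = \log|\exp{A}|-\log|\exp{B}|$. The only difference is presentational—the paper packages your telescoping sum as an inductive invariant $c(A)\le(1+2\time{h})d(A)+2\time{h}\log|\exp{A}|$ over prefix/suffix calls, whereas you sum the bound $\log t_s \le \log n_s - \log n_{s+1}$ directly along the two boundary paths.
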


\begin{proof}
By Theorem~\ref{thm:balancing}, we can assume $G$ is balanced. We store the values $L[A] = |\exp{A}|$ and $F[A] = f(\exp{A})$ for every variable $A$, as arrays. These arrays add only $\bo(|G|)$ extra space because the values in $F$ fit in $\bo(\log n)$-bit words. Let us overload the notation and use $f(A,i,j)=f(\exp{A}[i\dd j])$. Algorithm~\ref{alg:general} shows how to compute any $f(A,i,j)$; by calling it on  the start symbol $S$ of $G$ we can compute $f(T[i\dd j]) = f(S,i,j)$.

Just as for ISLPs, in the beginning we follow a single path along the derivation tree, with only one recursive call per argument $A$ (lines 6, 8, and 17). The cost of those calls adds up to the height of the grammar, $\bo(\log n)$. This path finishes at a leaf or at an internal node $A$ where $\exp{A}[i\dd j]$ spans more than one child of $A$ in the derivation tree, in which case we may perform two recursive calls. 

Note that in the only places where this may occur (lines 10--11 and 18--19) those recursive calls will be prefix/suffix calls (i.e., either $i=1$ or $j=L[A]$ when we call \Call{f}{$A,i,j$}).

We define $c(A)$ as the highest cost to compute $f(A,i,L[A])$ or $f(A,1,j)$ over any $i$ and $j$ (i.e., the cost of prefix/suffix calls), charging $1$ to the number of calls to function \textsc{f} and $\time{h}$ to each invocation to function $h$. We assume for simplicity that $\time{h} \ge 1$ and prove by induction that $c(A) \le (1+2\time{h})d(A)+2\time{h}\log |\exp{A}|$, where $d(A)$ is the distance from $A$ to its deepest descendant leaf in the derivation tree. This certainly holds in the base case of leaves, where $d(A)=1$; it is included in lines 2--3. 

In the inductive case of rules $A \rightarrow BC$ (lines 4--12), we note that there can be two calls to \textsc{f}, but in prefix/suffix calls one of those calls spans the whole symbol---line 10 in a prefix call or line 11 in a suffix call. Calls that span the whole symbol finish in line 3 and therefore cost just $1$. Therefore, we have $c(A) \le 1+\max(c(B),c(C))+\time{h}$,
which by induction is
\begin{eqnarray*}
    c(A) & \le & 1+\max(c(B),c(C))+\time{h} \\
         & \le & 1+\time{h}+\max((1+2\time{h})d(B)+2\time{h}\log|\exp{B}|,(1+2\time{h})d(C)+2\time{h}\log|\exp{C}|) \\
         & \le & (1+2\time{h}) (1+ \max(d(B),d(C))) + 2\time{h} \log\max(|\exp{B}|,|\exp{C}|)) \\
         & \le & (1+2\time{h})d(A)+2\time{h}\log|\exp{A}|). 
\end{eqnarray*}

In the inductive case of rules $A \rightarrow B^t$ (lines 13--21), a similar situation occurs in lines 18--19: only one of the two recursive calls is nontrivial. Therefore, it holds $c(A) \le 1+c(B)+2\time{h}\log t+2\time{h}$, where the term $2\time{h}\log t$ comes from the recursive procedure to compute $f_c(t''-t'-1)$ in line 20; the logarithm is in base 2. Because $t=|\exp{A}|/|\exp{B}|$, by induction we have 
\begin{eqnarray*}
    c(A) & \le & 1+c(B)+\time{h} (2+2\log (|\exp{A}|/|\exp{B}|)) \\
         & \le & 1+(1+2\time{h})d(B)+2\time{h}\log|\exp{B}|+2\time{h}(1+\log (|\exp{A}|/|\exp{B}|))\\ 
         & = & (1+2\time{h}) (1+d(B)) + 2\time{h}\log|\exp{A}| ~=~ (1+2\time{h}) d(A) + 2\time{h}\log|\exp{A}|. 
\end{eqnarray*}

Therefore, the procedure costs $c(A) = (1+2\time{h})d(A)+2\time{h}\log|\exp{A}| = \bo(\time{h} \cdot \log n)$ from the nonterminal $A$ where the single path splits into two.

Arrays $L$ and $F$ can be precomputed in time $\bo(|G|(\time{f}+\time{h}\log n))$ via a postorder traversal of the grammar tree. We compute $f$ for every distinct individual symbol and $h$ for each distinct nonterminal $A$, whose children have by then their $L$ and $F$ entries already computed. In the case of rules $A \rightarrow B^t$, the entry $F[A]$ can be computed in time $\bo(t_h\log t)$ with the same mechanism used in line 20 of Algorithm~\ref{alg:general}.
\end{proof}
 
We show in the next section how to use this result to compute a more complicated function, which in particular we do not know how to compute efficiently on ISLPs.

\subsection{Application: Karp-Rabin Fingerprints}\label{sec:fingerprints}

Given a string $T[1 \dd n]$, a suitable integer $c$, and a prime number $\mu \in \bo(n)$, the Karp-Rabin fingerprint \cite{kr:ibm87} of $T[i \dd j]$, for $1 \le i \le j \le n$, is defined as 

$$\kappa(T[i \dd j]) ~=~ \Bigg(\sum_{k=i}^{j} T[k] \cdot c^{k - i}\Bigg)\bmod \mu.$$ 

Computation of fingerprints of text substrings from their grammar representation is a key component of various compressed text indexing schemes \cite{CEKNP20}. While it is known how to compute it in $\bo(\log n)$ time using $\bo(|G|)$ space on an RLSLP of size $G$ \cite[App.~A]{CEKNP20}, we show now a much simpler procedure that is an application of Theorem~\ref{thm:general}.

Note that, for any split position $p \in [i \dd j-1]$, it holds
\begin{equation}
\kappa(T[i \dd j]) ~=~ \bigg(\kappa(T[i \dd p]) + \kappa(T[p + 1 \dd j])\cdot c^{p - i + 1}\bigg)\bmod \mu.  \label{eq:partition}
\end{equation}
We use this property as a basis for the efficient computation of fingerprints on RLSLPs. 

\bigskip\begin{theorem}[cf.~\cite{fingerprints,CEKNP20}]
\label{thm:KR}
Given an RLSLP $G$ representing $T[1\dd n]$ and a Karp-Rabin fingerprint function $\kappa$, there is a data structure of size $\bo(|G|)$ that can be built in time $\bo(|G|\log n)$ and computes fingerprints of arbitrary substrings of $T$ in $\bo(\log n)$ time.
\end{theorem}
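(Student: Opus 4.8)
The plan is to obtain the result as a direct instance of Theorem~\ref{thm:general}, once we package the Karp-Rabin fingerprint into a \emph{composable} function whose composing operator runs in constant time. The naive attempt, storing only $\kappa(X)$, fails: Eq.~(\ref{eq:partition}) shows that combining the fingerprints of $X$ and $Y$ requires both the length $|X|$ (to know the exponent) and the power $c^{|X|}\bmod\mu$. Recomputing that power by fast exponentiation would cost $\bo(\log n)$ per composition, i.e.\ $\time{h}=\bo(\log n)$, which Theorem~\ref{thm:general} would turn into an $\bo(\log^2 n)$ query and an $\bo(|G|\log^2 n)$ construction---too slow.

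First I would define the extended function $f(X)=\langle \kappa(X),\,|X|,\,c^{|X|}\bmod\mu\rangle$, carrying the fingerprint, the length, and the length-th power of $c$ together. Writing $f(X)=\langle \kappa_X,\ell_X,P_X\rangle$, a short calculation from the definition of $\kappa$ gives $\kappa(X\cdot Y)=(\kappa_X+P_X\cdot\kappa_Y)\bmod\mu$, $|X\cdot Y|=\ell_X+\ell_Y$, and $c^{|X\cdot Y|}\bmod\mu=(P_X\cdot P_Y)\bmod\mu$. Hence $f$ is composable with composing function
$$h(\langle \kappa_X,\ell_X,P_X\rangle,\langle \kappa_Y,\ell_Y,P_Y\rangle)=\langle (\kappa_X+P_X\cdot\kappa_Y)\bmod\mu,\ \ell_X+\ell_Y,\ (P_X\cdot P_Y)\bmod\mu\rangle,$$
which performs a constant number of modular multiplications and additions, so $\time{h}=\bo(1)$. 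The crucial point is precisely that storing $P_X$ alongside $\kappa_X$ removes the exponentiation from each composition step.

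Next I would verify the remaining hypotheses of Theorem~\ref{thm:general}. Since $\mu=\bo(n)$ and lengths are at most $n$, every component of $f(X)$ fits in $\bo(\log n)$ bits and $f$ takes values in a set of size $\bo(n^3)=n^{\bo(1)}$. On a single symbol, $f(a)=\langle a\bmod\mu,\,1,\,c\bmod\mu\rangle$ is computed in $\bo(1)$ time, so $\time{f}=\bo(1)$. Plugging $\time{f}=\time{h}=\bo(1)$ into Theorem~\ref{thm:general} yields a structure of size $\bo(|G|)$, built in time $\bo(|G|(\time{f}+\time{h}\log n))=\bo(|G|\log n)$, that computes $f(T[i\dd j])$ in time $\bo(\time{h}\log n)=\bo(\log n)$; reading off the first component of $f(S,i,j)$ gives $\kappa(T[i\dd j])$.

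The only real obstacle is the constant-time composition, which I expect to handle exactly as above by bundling the power $c^{|X|}$ into the stored value; everything else is a routine check that the bundled triple is a genuine homomorphism under concatenation and that its magnitudes stay within $\bo(\log n)$-bit words. A minor point to state carefully is that the fingerprint of a free-standing string $X$ uses exponents $c^0,\dots,c^{|X|-1}$ (the case $i=1$ of the paper's definition), so that the identity $\kappa(X\cdot Y)=\kappa_X+c^{|X|}\kappa_Y \pmod\mu$ holds with no index shift.
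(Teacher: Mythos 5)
Your proposal is correct and follows essentially the same route as the paper: the paper also applies Theorem~\ref{thm:general} to the bundled function $f(X)=\langle \kappa(X),c^{|X|}\bmod\mu\rangle$ with the constant-time composing function $h(\langle \kappa_x,c_x\rangle,\langle \kappa_y,c_y\rangle)=\langle(\kappa_x+\kappa_y\cdot c_x)\bmod\mu,(c_x\cdot c_y)\bmod\mu\rangle$, which is exactly your key idea of carrying the power of $c$ to avoid per-composition exponentiation. Your extra third component $|X|$ is redundant (Theorem~\ref{thm:general} already stores the lengths $L[A]$), but harmless.
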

\begin{proof}
Let $f(X) = \langle \kappa(X),c^{|X|}\rangle$ be the function $f$ to apply Theorem~\ref{thm:general}. We then define
$$h(\langle \kappa_x,c_x\rangle,\langle \kappa_y,c_y\rangle) = 
     \langle (\kappa_x+\kappa_y \cdot c_x)\bmod \mu, (c_x \cdot c_y) \bmod \mu\rangle,$$
which can be computed in time $\time{h}=\bo(1)$.

It is easy to see that, by Eq.~(\ref{eq:partition}), $f(XY) = 
\langle \kappa(XY),c^{|XY|}\rangle = \langle (\kappa(X)+\kappa(Y)\cdot c^{|X|})\bmod \mu, (c^{|X|}\cdot c^{|Y|})\bmod\mu \rangle = h(\langle \kappa(X),c^{|X|}\rangle,
\langle \kappa(Y),c^{|Y|}\rangle) = h(f(X),f(Y))$. Therefore, application of Theorem~\ref{thm:general} leads to a procedure that computes $f(T[i\dd j]) = \langle \kappa(T[i\dd j]),c^{j-i+1} \bmod \mu\rangle$ in time $\bo(\log n)$ and using $\bo(|G|)$ extra space.
\end{proof}

\section{Conclusions}

We have generalized a recent result by Ganardi et al.~\cite{GJL2021}, which shows how to balance any SLP while maintaining its asymptotic size. Our generalization allows any rule of the form $A \rightarrow x$ where $x$ is a program, of size $|x|$, that generates the actual (possibly much longer) right-hand side. While we believe that this general result can be of wide interest to balance many kinds of generalizations of SLPs, we demonstrate its usefulness on a particular generalization of SLPs we call Iterated SLPs (ISLPs), which enable right-hand sides of the form $A \rightarrow \Pi_{i=k_1}^{k_2} B_1^{i^{c_1}}\cdots B_t^{i^{c_t}}$, of size $\bo(t)$. We say a grammar is a $d$-ISLP when $d$ is the maximum value of $c_j$ along all those rules; we also call $g_{it}$ the size of the smallest ISLP that generates a given text. 

ISLPs are interesting in the context of compressibility measures for repetitive texts. The well-known measure $\delta$ \cite{RRRS13,CEKNP20}, based on substring complexity and unreachable on some text families---i.e., their members cannot be encoded in $\bo(\delta)$ space---, had already been shown to be outperformed in some cases by a reachable measure, the size of an L-system generating the string \cite{NU21,NU23}. ISLPs are the first mechanism obtaining the same result---i.e., achieving size $\bo(\delta/\sqrt{n})$ on some texts of length $n$---while supporting polylogarithmic-time access---$\bo(\log^2 n \log\log n)$---to arbitrary text symbols. This result is obtained thanks to the possibility of balancing the ISLP, and is extended to computing other substring queries like range minima and next/previous smaller value, which are useful for implementing suffix trees on repetitive text collections \cite{GNP18}.

A further restriction, $d=\bo(1)$, yields $\bo(\log n)$ time for all the above queries, which is nearly optimal \cite{VY13} for accessing the text in any grammar-compressed form. This class includes Run-Length SLPs (RLSLPs), which extend SLPs with the rule $A \rightarrow B^t$ and are equivalent to $0$-ISLPs. We exploit again the possibility of balancing RLSLPs our result enables, and show a technique to efficiently compute a wide class of substring queries we call ``composable'', that is, where $f(X\cdot Y)$ can be computed from $f(X)$ and $f(Y)$. In particular, this enables us to compute Karp-Rabin signatures \cite{kr:ibm87} on the RLSLP-compressed string in time $\bo(\log n)$. This had been obtained before \cite[App.~A]{CEKNP20}, but with a much heavier machinery and requiring many more words of space per grammar symbol.

Several questions remain open on ISLPs. One is about the cost to find the smallest ISLPs that generates a given text $T$; we conjecture the problem is NP-hard as it is for plain SLPs \cite{CLLPPSS05} and RLSLPs \cite{KIKB24}. Indeed, since RLSLPs correspond exactly to $0$-ISLPs, finding the smallest $0$-ISLP is NP-hard. It is open to extend this result to $d$-ISLPs for other values of $d$.

A second question is whether we can build an index within $\bo(g_{it})$ space that offers efficient pattern matching. While ISLPs support random access to the text, the typical path followed for SLPs \cite{CNP21} and for RLSLPs \cite[App.~A]{CEKNP20} cannot be directly applied for ISLPs, because iteration rules, which are of size $\bo(t)$, would require indexing $\Theta(kt)$ positions. Computing Karp-Rabin fingerprints \cite{kr:ibm87} on text substrings, which can be done in logarithmic time on SLPs and RLSLPs and enable substring equality and longest common prefix computations on $T$, is also challenging on general ISLPs.

\backmatter

\section*{Declarations}

\subsection*{Funding}

All authors were funded with Basal Funds FB0001, ANID, Chile. F.O. was funded by scholarship ANID-Subdirección de Capital Humano/Doctorado Nacional/2021-21210579, ANID, Chile. C.U. was funded by scholarship ANID-Subdirección de Capital Humano/Doctorado Nacional/2021-21210580, ANID, Chile.

\subsection*{Competing interests}

All authors certify that they have no affiliations with or involvement in any organization or entity with any financial interest or non-financial interest in the subject matter or materials discussed in this manuscript.

\subsection*{Ethics approval}

Not applicable.

\subsection*{Consent to participate}

All authors agreed to participate in this work.

\subsection*{Consent for publication}

All authors agree for the publication of this work.

\subsection*{Availability of data and materials}

Not applicable.

\subsection*{Code availability}

Not applicable.

\subsection*{Authors contributions}

The three authors, G.N., F.O., and C.U. participated equally in the conception, proofs, and writing of the paper.

\bibliography{bibliography}

\end{document}